\documentclass[11pt]{amsart}
\usepackage[english]{babel}
\usepackage[utf8]{inputenc}

\usepackage[inner=1in, outer=1in, bottom =2.9cm, top =2.9cm]{geometry}

\pretolerance=10000
\tolerance=2000 
\emergencystretch=10pt



\makeatletter
\newcommand*\Cdot{\mathpalette\Cdot@{.5}}
\newcommand*\Cdot@[2]{\mathbin{\vcenter{\hbox{\scalebox{#2}{$\m@th#1\circled{$1$}$}}}}}
\makeatother

\usepackage{amsmath, amssymb, amsfonts, amsthm}
\usepackage{graphicx, overpic}
\usepackage{mathrsfs}
\usepackage{mathtools}
\usepackage[usenames,dvipsnames,svgnames,table]{xcolor}
\usepackage{enumerate}
\usepackage{enumitem}
\usepackage{shuffle} 
\usepackage{lipsum}
\usepackage{color}
\usepackage{phoenician}
\usepackage{xkeyval}
\usepackage{tikz}
\usepackage{pst-node} 
\usepackage{tikz-cd} 
\usepackage[OT2,T1]{fontenc} 
\usepackage{thm-restate} 
\usepackage{float}
\usepackage{bold-extra}
\usepackage{microtype}
\usepackage[all,cmtip]{xy}
\usepackage{etoolbox}
\usepackage[strict]{changepage} 
\usepackage{esvect}
\usepackage{wrapfig}
\usepackage{caption}
\usepackage{cite}
\usepackage{eucal}
\usepackage{mathrsfs}
\DeclareMathAlphabet{\mathpzc}{OT1}{pzc}{m}{it}
\usepackage{bbding}
\usepackage{array}
\usepackage{pict2e}
\usepackage{makecell}
\usepackage{stmaryrd}
\usepackage{lmodern}
\usepackage{amsbsy}
\usepackage{esvect}
\usepackage{bbding}
\usepackage{thm-restate}
\usepackage[toc,page, titletoc,title]{appendix}
\graphicspath{ {figures/} }
\usepackage{url}
\usepackage[pagebackref]{hyperref} 

\newtagform{red}{\color{blue}(}{)}

\colorlet{linkequation}{blue}
\newcommand*{\refeqq}[1]{%
  \begingroup
    \hypersetup{
      linkcolor=linkequation,
      linkbordercolor=linkequation,
    }%
    \ref{#1}%
  \endgroup
}

\makeatletter
\newcommand{\colim@}[2]{%
  \vtop{\m@th\ialign{##\cr
    \hfil$#1\operator@font colim$\hfil\cr
    \noalign{\nointerlineskip\kern1.5\ex@}#2\cr
    \noalign{\nointerlineskip\kern-\ex@}\cr}}%
}
\newcommand{\colim}{%
  \mathop{\mathpalette\colim@{\rightarrowfill@\scriptscriptstyle}}\nmlimits@
}
\renewcommand{\varprojlim}{%
  \mathop{\mathpalette\varlim@{\leftarrowfill@\scriptscriptstyle}}\nmlimits@
}
\renewcommand{\varinjlim}{%
  \mathop{\mathpalette\varlim@{\rightarrowfill@\scriptscriptstyle}}\nmlimits@
}
\makeatother

\makeatletter
\providecommand*{\twoheadrightarrowfill@}{%
  \arrowfill@\relbar\relbar\twoheadrightarrow
}
\providecommand*{\twoheadleftarrowfill@}{%
  \arrowfill@\twoheadleftarrow\relbar\relbar
}
\providecommand*{\xtwoheadrightarrow}[2][]{%
  \ext@arrow 0579\twoheadrightarrowfill@{#1}{#2}%
}
\providecommand*{\xtwoheadleftarrow}[2][]{%
  \ext@arrow 5097\twoheadleftarrowfill@{#1}{#2}%
}
\makeatother

\makeatletter
\newcommand*{\relrelbarsep}{.386ex}
\newcommand*{\relrelbar}{%
  \mathrel{%
    \mathpalette\@relrelbar\relrelbarsep
  }%
}
\newcommand*{\@relrelbar}[2]{%
  \raise#2\hbox to 0pt{$\m@th#1\relbar$\hss}%
  \lower#2\hbox{$\m@th#1\relbar$}%
}
\providecommand*{\rightrightarrowsfill@}{%
  \arrowfill@\relrelbar\relrelbar\rightrightarrows
}
\providecommand*{\leftleftarrowsfill@}{%
  \arrowfill@\leftleftarrows\relrelbar\relrelbar
}
\providecommand*{\xrightrightarrows}[2][]{%
  \ext@arrow 0359\rightrightarrowsfill@{#1}{#2}%
}
\providecommand*{\xleftleftarrows}[2][]{%
  \ext@arrow 3095\leftleftarrowsfill@{#1}{#2}%
}
\makeatother

\DeclareSymbolFont{cyrletters}{OT2}{wncyr}{m}{n}
\DeclareMathSymbol{\Sh}{\mathalpha}{cyrletters}{"58}

\usetikzlibrary{tikzmark,decorations.pathreplacing}
\usetikzlibrary{shapes,shadows,arrows}
\usetikzlibrary{decorations.markings}
\usetikzlibrary{positioning,calc}
\tikzset{near start abs/.style={xshift=1cm}}


\DeclareSymbolFont{symbolsC}{U}{txsyc}{m}{n}
\DeclareMathSymbol{\Searrow}{\mathrel}{symbolsC}{117}

\hypersetup{
	colorlinks=true,
	linkcolor=black,
	filecolor=black,      
	urlcolor=blue,
	citecolor= 	red,
}
\urlstyle{same}
\DeclareSymbolFont{extraup}{U}{zavm}{m}{n}
\DeclareMathSymbol{\varheart}{\mathalpha}{extraup}{86}
\DeclareMathSymbol{\vardiamond}{\mathalpha}{extraup}{87}
 \DeclareMathSymbol{\varclub}{\mathalpha}{extraup}{84} 
\DeclareMathSymbol{\varspade}{\mathalpha}{extraup}{85}


\newcommand{\bigslant}[2]{{\raisebox{.2em}{$#1$}\left/\raisebox{-.2em}{$#2$}\right.}}

\theoremstyle{definition}
\newtheorem{thm}{Theorem}[section]
\newtheorem{cor}{Corollary}[thm]

\newtheorem{prop}[thm]{Proposition}

\theoremstyle{definition}

\newtheorem{ex}{Example}[section]

\newtheorem{remark}{Remark}[section]




\newcommand{\ga}{\alpha}


\newcommand{\sfVec}{\textsf{Vec}}
\newcommand{\sfS}{\textsf{S}}


\newcommand{\tH}{\mathtt{H}}
\newcommand{\tQ}{\mathtt{Q}}

\newcommand{\ta}{\mathtt{a}}



\newcommand{\bR}{\mathbb{R}}
\newcommand{\bN}{\mathbb{N}}
\newcommand{\bC}{\mathbb{C}}

\newcommand{\bZ}{\mathbb{Z}}



\newcommand{\cF}{\CMcal{F}}

\newcommand{\cH}{\CMcal{H}}

\newcommand{\cS}{\CMcal{S}}
\newcommand{\cA}{\CMcal{A}}

\newcommand{\cX}{\CMcal{X}}



\newcommand{\op}{\operatorname{op}}

\newcommand{\Hom}{\operatorname{Hom}}


\newcommand{\bE}{\textbf{E}}

\newcommand{\Lie}{ \textbf{Lie} }
\newcommand{\Zie}{\textbf{Zie}}
\newcommand{\Com}{\textbf{Com}}
\newcommand{\Ass}{\textbf{Ass}}

\newcommand{\Gam}{\boldsymbol{\Gamma}}
\newcommand{\Sig}{\boldsymbol{\Sigma}}  

\DeclareFontFamily{U}{mathx}{\hyphenchar\font45}
\DeclareFontShape{U}{mathx}{m}{n}{
      <5> <6> <7> <8> <9> <10>
      <10.95> <12> <14.4> <17.28> <20.74> <24.88>
      mathx10
      }{}
\DeclareSymbolFont{mathx}{U}{mathx}{m}{n}
\DeclareFontSubstitution{U}{mathx}{m}{n}
\DeclareMathAccent{\widecheck}{0}{mathx}{"71}
\DeclareMathAccent{\wideparen}{0}{mathx}{"75}




\newcommand{\res}{\parallel}


\newcommand{\formj}{\emph{\texttt{j}}\, }
\newcommand{\formg}{\emph{\texttt{g}}}



\newcommand{\ssS}{\emph{\textsf{S}}}
\newcommand{\ssA}{\emph{\textsf{A}}}

\makeatletter
\newcommand*\bigcdot{\mathpalette\bigcdot@{.5}}
\newcommand*\bigcdot@[2]{\mathbin{\vcenter{\hbox{\scalebox{#2}{$\m@th#1\bullet$}}}}}
\makeatother


\makeatletter
\newcommand{\adjunction}{\@ifstar\named@adjunction\normal@adjunction}
\newcommand{\normal@adjunction}[4]{%
  #1\colon #2%
  \mathrel{\vcenter{%
    \offinterlineskip\m@th
    \ialign{%
      \hfil$##$\hfil\cr
      \longrightharpoonup\cr
      \noalign{\kern-.3ex}
      \smallbot\cr
      \longleftharpoondown\cr
    }%
  }}%
  #3 \noloc #4%
}
\newcommand{\named@adjunction}[4]{%
  #2%
  \mathrel{\vcenter{%
    \offinterlineskip\m@th
    \ialign{%
      \hfil$##$\hfil\cr
      \scriptstyle#1\cr
      \noalign{\kern.1ex}
      \longrightharpoonup\cr
      \noalign{\kern-.3ex}
      \smallbot\cr
      \longleftharpoondown\cr
      \scriptstyle#4\cr
    }%
  }}%
  #3%
}
\newcommand{\longrightharpoonup}{\relbar\joinrel\rightharpoonup}
\newcommand{\longleftharpoondown}{\leftharpoondown\joinrel\relbar}
\newcommand\noloc{%
  \nobreak
  \mspace{6mu plus 1mu}
  {:}
  \nonscript\mkern-\thinmuskip
  \mathpunct{}
  \mspace{2mu}
}
\newcommand{\smallbot}{%
  \begingroup\setlength\unitlength{.15em}%
  \begin{picture}(1,1)
  \roundcap
  \polyline(0,0)(1,0)
  \polyline(0.5,0)(0.5,1)
  \end{picture}%
  \endgroup
}
\makeatother


\newcommand{\leftrarrows}{\mathrel{\raise.75ex\hbox{\oalign{%
  $\scriptstyle\leftarrow$\cr
  \vrule width0pt height.5ex$\hfil\scriptstyle\relbar$\cr}}}}
\newcommand{\lrightarrows}{\mathrel{\raise.75ex\hbox{\oalign{%
  $\scriptstyle\relbar$\hfil\cr
  $\scriptstyle\vrule width0pt height.5ex\smash\rightarrow$\cr}}}}
\newcommand{\Rrelbar}{\mathrel{\raise.75ex\hbox{\oalign{%
  $\scriptstyle\relbar$\cr
  \vrule width0pt height.5ex$\scriptstyle\relbar$}}}}

\makeatletter
\def\leftrightarrowsfill@{\arrowfill@\leftrarrows\Rrelbar\lrightarrows}
\newcommand{\xleftrightarrows}[2][]{\ext@arrow 3399\leftrightarrowsfill@{#1}{#2}}
\makeatother


\newcommand{\la}{\langle}
\newcommand{\ra}{\rangle}

\newcommand{\wt}{\widetilde}


\definecolor{Red}{rgb}{0.8,0,0.2}

\newcommand{\GG}[1]{}


\makeatletter
\def\@footnotecolor{red}
\define@key{Hyp}{footnotecolor}{%
 \HyColor@HyperrefColor{#1}\@footnotecolor%
}
\def\@footnotemark{%
    \leavevmode
    \ifhmode\edef\@x@sf{\the\spacefactor}\nobreak\fi
    \stepcounter{Hfootnote}%
    \global\let\Hy@saved@currentHref\@currentHref
    \hyper@makecurrent{Hfootnote}%
    \global\let\Hy@footnote@currentHref\@currentHref
    \global\let\@currentHref\Hy@saved@currentHref
    \hyper@linkstart{footnote}{\Hy@footnote@currentHref}%
    \@makefnmark
    \hyper@linkend
    \ifhmode\spacefactor\@x@sf\fi
    \relax
  }%
\makeatother

\hypersetup{footnotecolor=blue}

\makeatletter
\patchcmd{\@startsection}
  {\@afterindenttrue}
  {\@afterindentfalse}
  {}{}
\makeatother


\title[Hopf Monoids in Perturbative Algebraic Quantum Field Theory]{Hopf Monoids in Perturbative Algebraic\\ Quantum Field Theory}
\author{William Norledge}
\address{Pennsylvania State University}
\email{wxn39@psu.edu}
\thanks{This paper is an abridged version of `Species-theoretic foundations of perturbative quantum field theory', arXiv:2009.09969}

\begin{document}

\usetagform{red}

\renewcommand{\chapterautorefname}{Chapter}
\renewcommand{\sectionautorefname}{Section}
\renewcommand{\subsectionautorefname}{Section}

\renewcommand{\chapterautorefname}{Chapter}
\renewcommand{\sectionautorefname}{Section}
\renewcommand{\subsectionautorefname}{Section}

\begin{abstract}
We develop an algebraic formalism for perturbative quantum field theory (pQFT) which is based on Joyal's combinatorial species. We show that certain basic structures of pQFT are correctly viewed as algebraic structures internal to species, constructed with respect to the Cauchy monoidal product. Aspects of this formalism have appeared in the physics literature, particularly in the work of Bogoliubov-Shirkov, Steinmann, Ruelle, and \hbox{Epstein-Glaser-Stora}. In this paper, we give a fully explicit account in terms of modern theory developed by \hbox{Aguiar-Mahajan}. We describe the central construction of causal perturbation theory as a homomorphism from the Hopf monoid of set compositions, decorated with local observables, into the Wick algebra of microcausal polynomial observables. The operator-valued distributions called (generalized) time-ordered products and (generalized) retarded products are obtained as images of fundamental elements of this Hopf monoid under the curried homomorphism. The perturbative \hbox{S-matrix} scheme corresponds to the so-called universal series, and the property of causal factorization is naturally expressed in terms of the action of the Hopf monoid on itself by Hopf powers, called the Tits product. Given a system of fully renormalized time-ordered products, the perturbative construction of the corresponding interacting products is via an up biderivation of the Hopf monoid, which recovers Bogoliubov's formula. 
\end{abstract}

\maketitle

\vspace{-6.5ex}

\setcounter{tocdepth}{1} 
\hypertarget{foo}{ }
\tableofcontents

\section*{Introduction}\label{intro}

The theory of species is a richer, categorified version of analyzing combinatorial structures in terms of generating functions, going back to André Joyal \cite{joyal1981theorie}, \cite{joyal1986foncteurs}, \cite{bergeron1998combinatorial}. In this approach, one sees additional structure by encoding processes of \emph{relabeling} combinatorial objects, that is by modeling combinatorial objects as presheaves on the category $\sfS$ of finite sets $I$ (the labels) and bijections $\sigma$ (relabelings). In this paper, we are concerned with species $\textbf{p}$ valued in complex vector spaces, i.e. functors of the form
\[
\textbf{p}:\sfS^{\op}\to \sfVec, \qquad I\mapsto \textbf{p}[I] 
,\quad 
\sigma \mapsto \textbf{p}[\sigma]
\]
where $\sfVec$ is the category of complex vector spaces. Explicitly, $\textbf{p}$ consists of a complex vector space $\textbf{p}[I]$ for each finite set $I$, and a bijective linear map $\textbf{p}[\sigma]:\textbf{p}[I]\to \textbf{p}[J]$ for each bijection $\sigma:J\to I$ such that composition of bijections is preserved.

A highly structured theory of gebras\footnote{\ meaning (co/bi/Hopf)algebras and Lie (co)algebras} internal to vector species has been developed by \hbox{Aguiar-Mahajan} \cite{aguiar2010monoidal}, \cite{aguiar2013hopf}, building on the work of Barratt \cite{barratt1978twisted}, Joyal \cite{joyal1986foncteurs}, Schmitt \cite{Bill93}, Stover \cite{stover1993equivalence}, and others. For the internalization, one uses the Day convolution monoidal product $\textbf{p}\bigcdot\textbf{q}$ with respect to disjoint union and tensor product, given by
\[
\textbf{p}\bigcdot\textbf{q}[I] =  \textbf{p} \otimes_{\text{Day}} \textbf{q}[I]= \bigoplus_{S\sqcup T=I} \textbf{p}[S]\otimes \textbf{q}[T]
.\]
This may be viewed as a categorification of the Cauchy product of formal power series.\footnote{\ from the perspective of $\textsf{S}$-colored (co)operads, as defined in e.g. \cite[Section 3]{MR3134040}, there is an equivalent description of these gebras as (co)algebras over the left (co)action (co)monads of the (co)operads $\Com^{ (\ast) }$, $\Ass^{ (\ast) }$, $\Lie^{ (\ast) }$ \cite[Appendix B.5]{aguiar2010monoidal}, which relates the gebras of this paper to structures such as cyclic operads, which already appear in mathematical physics} Various decategorifications of \hbox{Aguiar-Mahajan's} theory recovers the plethora of graded combinatorial Hopf algebras which have been studied \cite[Chapter 15]{aguiar2010monoidal}. 

On the other hand, quantum field theory (QFT) may be viewed as a kind of modern infinite dimensional calculus. Perturbative quantum field theory (pQFT) is the part of QFT which considers Taylor series approximations of smooth functions. By an argument of Dyson \cite{Dyson52}, Taylor series of realistic pQFTs are expected to have vanishing radius of convergence.
Nevertheless, if an actual smooth function of a non-perturbative quantum field theory is being approximated, then they are asymptotic series, and so one might expect their truncations to agree to reasonable precision with experiment. This is indeed the case. 


There are two main synthetic approaches to (non-perturbative) QFT, which grew out of the failure to make sense of the path integral analytically. There is functorial quantum field theory (FQFT), which formalizes the Schr\"odinger picture by assigning time evolution operators to cobordisms between spacetimes. There is also algebraic quantum field theory (AQFT), going back to \cite{haagkas64}, which formalizes the Heisenberg picture by assigning $\text{C}^\ast$-algebras of observables to regions of spacetime. Low dimension examples of AQFTs/Wightman field theories were rigorously constructed in seminal work of \hbox{Glimm-Jaffe} and others \cite{MR247845}, \cite{MR272301}, \cite{MR363256}. 

Perturbative algebraic quantum field theory (pAQFT) \cite{rejzner2016pQFT}, \cite{dutsch2019perturbative}, \cite[\href{https://ncatlab.org/nlab/show/geometry+of+physics+--+perturbative+quantum+field+theory}{nLab}]{perturbative_quantum_field_theory}, due to Brunetti, D\"utsch, Fredenhagen, Hollands, Rejzner, Wald, and others, is (mathematically precise, realistic) pQFT based on causal perturbation theory \cite{steinbook71}, \cite{ep73roleofloc}, \cite{MR1359058}, due to St\"uckelberg, Bogoliubov, Steinmann, Epstein, Glaser, Stora, and others. See \cite[Foreword]{dutsch2019perturbative} for an account of the history. Following \cite{Slavnov78}, \cite{klaus2000micro}, \cite{dutfred00}, in which one takes the algebraic adiabatic limit to handle \hbox{IR-divergences}, pAQFT satisfies the Haag-Kastler axioms of AQFT, but with \hbox{$\text{C}^\ast$-algebras} replaced by formal power series $\ast$-algebras, reflecting the fact that pQFT deals with Taylor series approximations. In this paper, we show that the construction and structure of these formal power series algebras is naturally described in terms of gebra theory internal to species. 

For simplicity, we restrict ourselves to the Klein-Gordan real scalar field on Minkowski spacetime $\cX\cong \bR^{p,1}$, $p\in \bN$ (pAQFT may be applied in more general settings, see e.g. \cite{MR2455327}). Therefore for us, an off-shell field configuration $\Phi$ is a smooth function
\[
\Phi:\cX\to \bR
,\qquad
x\mapsto \Phi(x)
.\] 
In particular, we do not impose conditions on the asymptotic behaviour of $\Phi$ at infinite times. Let $\mathcal{F}_{\text{loc}}$ denote the space of local observables $\ssA\in \mathcal{F}_{\text{loc}}$; these are functionals of field configurations which are obtained by integrating polynomials in $\Phi$ and its derivatives against bump functions on $\cX$. Let $\mathcal{F}$ denote the commutative $\ast$-algebra of microcausal polynomial observables $\emph{\textsf{O}}\in \mathcal{F}$; these are polynomial functionals of field configurations satisfying a \hbox{microlocal-theoretic} condition known as microcausality, with multiplication the pointwise multiplication of functionals, sometimes called the normal-ordered product. Then $\mathcal{F}[[\hbar]]$ is a formal power series $\ast$-algebra in formal Planck's constant $\hbar$, called the (abstract, off-shell) Wick algebra, with multiplication the Moyal star product for the Wightman propagator $\Delta_{\text{H}}$ of the Klein-Gordan field
\[
\mathcal{F}[[\hbar]] \otimes \mathcal{F}[[\hbar]] \to \mathcal{F}[[\hbar]]
,\qquad
\emph{\textsf{O}}_1\otimes \emph{\textsf{O}}_2 \mapsto \emph{\textsf{O}}_1 \star_{\text{H}}\! \emph{\textsf{O}}_2
,\] 
sometimes called the operator product. 


Perhaps the most fundamental Hopf monoid of Aguiar-Mahajan's theory is the cocommutative Hopf algebra\footnote{\ we say `algebra' and not `monoid' since vector species form a linear category} of compositions $\Sig$, see \autoref{hopfofsetcomp}, which is a Hopf monoid internal to vector species defined with respect to the Day convolution. (More familiar is perhaps a certain decategorification of $\Sig$, which is the graded Hopf algebra of noncommutative symmetric functions $\textbf{NSym}$, see \cite[Section 17.3]{aguiar2010monoidal}.) A composition $F$ of $I$ is a surjective function of the form
\[
F:I\to   \{1,\dots,k\}   
,\qquad 
\text{for some} \quad  k\in \bN
.\] 
The ordering $1>\dots>k$ is understood, so that $F$ models the $k^{\text{th}}$ ordinal with $I$-marked points. We let $S_j=F^{-1}(j)$, called the lumps of $F$, and write $F=(S_1,\dots, S_k)$. Each component $\Sig[I]$ is the space of formal linear combinations of compositions $F$ of $I$,
\[
\Sig[I]  =  \Big \{  \mathtt{a}=\sum_{F} c_F \tH_F \ \big | \  c_F\in \bC  \Big \}
.\] 
The multiplication 
\[
\mu_{S,T}: \Sig[S]\otimes \Sig[T]\to \Sig[I]
,\qquad
\tH_F\otimes \tH_G\mapsto \tH_{FG}
\] 
is the linearization of concatenating compositions (`gluing' via ordinal sum), and the comultiplication   
\[\Delta_{S,T}:\Sig[I] \to\Sig[S]\otimes \Sig[T]
,\qquad
\tH_F \mapsto \tH_{F|_S} \otimes \tH_{F|_T}
\] 
is the linearization of restricting compositions to subsets (`forgetting marked points'), where $S\sqcup T=I$. 

Aspects of $\Sig$ have appeared in the physics literature as follows. Firstly, \hbox{Epstein-Glaser-Stora's} algebra of proper sequences \cite[Section 4.1]{epstein1976general} is the action of $\Sig$ on itself by Hopf powers, called the Tits product \cite[Section 13]{aguiar2013hopf}, going back to Tits \cite{Tits74}. Secondly, the primitive part $\Zie=\mathcal{P}(\Sig)$\footnote{\ the name `Zie' comes from \cite{aguiar2017topics}}, which is a Lie algebra internal to species, is essentially the Steinmann algebra from e.g. \cite[Section 6]{Ruelle}, \cite[Section III.1]{bros}. More precisely, the Steinmann algebra is a graded Lie algebra based on the structure map of the adjoint realization of $\Zie$, see \autoref{sec:Ruelle's Identity and the GLZ Relation}. Thirdly and fourthly, and outside the scope of this paper, see below regarding work of Losev-Manin and Feynman integrals.

The central idea of this paper is to formalize the construction of a system of interacting \hbox{time-ordered} products in causal perturbation theory as the construction of a homomorphism $\widetilde{\text{T}}$ of algebras internal to species of the form
\[       
\widetilde{\text{T}}:\Sig  \otimes \textbf{E}_{\mathcal{F}_{\text{loc}}[[\hbar]]}     
\to 
\textbf{U}_{\mathcal{F}[[\hbar, \formg]]} 
.\]
We describe this construction in a clean abstract setting in \autoref{sec:T-Products, Generalized T-Products, and Generalized R-Products}, and then specialize to QFT in \autoref{sec:Time-Ordered Products}. Here, $\otimes$ is the Hadamard monoidal product (=componentwise tensoring), $\textbf{E}_{\mathcal{F}_{\text{loc}}[[\hbar]]}$ is the species given by $I\mapsto (\mathcal{F}_{\text{loc}}[[\hbar]])^{\otimes I}$, and $\textbf{U}_{\mathcal{F}[[\hbar, \formg]]}$ is the algebra in species which has the Wick algebra, with formal coupling constant $\formg$ adjoined, in each $I$-component, 
\[
\textbf{E}_{\mathcal{F}_{\text{loc}}[[\hbar]]}[I] = (\mathcal{F}_{\text{loc}}[[\hbar]])^{\otimes I} ,\qquad  \textbf{U}_{\mathcal{F}[[\hbar, \formg]]}[I] = \mathcal{F}[[\hbar, \formg]]
.\]
It follows that the data of a system of products $\widetilde{\text{T}}$ is equivalently a homomorphism of \hbox{$\bC$-algebras}
\[
\hat{\Sig}(\mathcal{F}_{\text{loc}}[[\hbar]])\to \mathcal{F}[[\hbar, \formg]]
\]
where $\hat{\Sig}(-): \textsf{Vec}\to \textsf{Vec}$ is the analytic endofunctor, or Schur functor, on vector spaces associated to $\Sig$ \cite[Section 19.1.2]{aguiar2010monoidal}.\footnote{\ the hat $\hat{\Sig}$ is meant to suggest a kind of categorified Fourier transform} Decategorified versions of this formalization appear in graded Hopf algebra approaches to pQFT \cite{Brouder10}, \cite[p. 635]{Borcherds10}. In particular, there is an interpretation of the Moyal deformation quantization in terms of Laplace pairings (=coquasitriangular structures) \cite{Fauser01}, \cite[Section 2.4]{Brouder10}. 

Also related is the notion of a Losev-Manin cohomological field theory \cite[Theorem 3.3.1]{losevmanin}, \cite[Definition 1.3]{shadrin2011group}, where finite ordinals are replaced by strings of Riemann spheres glued at the poles, giving a Hopf monoid structure on the toric variety of the permutohedron, and $\Sig$ is replaced by the ordinary homology of this toric variety. The Hopf monoid structure of this toric variety is also central to modern approaches to Feynman integrals \cite[p.6]{MR3713351}, \cite{schultka2018toric}. We shall study this Hopf monoid in future work.

Explicitly, the homomorphism $\widetilde{\text{T}}$ consists of component linear maps
\[       
\widetilde{\text{T}}_I:\Sig[I]  \otimes (\mathcal{F}_{\text{loc}}[[\hbar]])^{\otimes I}
\to 
\mathcal{F}[[\hbar, \formg]]
,\qquad
\tH_F\otimes \ssA_{i_1}\otimes \dots \otimes  \ssA_{i_n} \mapsto  \widetilde{\text{T}}_I(\tH_F\otimes \ssA_{i_1}\otimes \dots \otimes  \ssA_{i_n})
\]
for each finite set $I=\{i_1,\dots, i_n\}$. This homomorphism should also satisfy causal factorization, which says
\[ 
\widetilde{\text{T}}_I( \mathtt{a} \otimes  \ssA_{i_1}\otimes \dots \ssA_{i_n} )
= 
\widetilde{\text{T}}_I(  \! \! \! \!   \underbrace{\mathtt{a} \triangleright \tH_{G}}_{\text{Tits product}}    \! \! \! \! \otimes  \ssA_{i_1}\otimes \dots \otimes  \ssA_{i_n}  )
\qquad  \text{for all} \quad 
\mathtt{a}\in \Sig[I]     
\]
whenever the local observables $\ssA_{i_1}, \dots ,  \ssA_{i_n}$ respect the ordering of $I$ induced by the composition $G$, see \autoref{prob:causalfac}. Additional properties are often included, such as translation equivariance. 


We can curry $\widetilde{\text{T}}$ with respect to the internal hom $\cH(-,-)$ for the Hadamard product, giving a homomorphism of algebras
\[
\Sig  \to \cH(  \textbf{E}_{\mathcal{F}_{\text{loc}}[[\hbar]]}  ,\textbf{U}_{ \mathcal{F}[[\hbar, \formg]]} )
,\qquad 
\tH_{F}=\tH_{(S_1,\dots, S_k)} \mapsto \widetilde{\text{T}}(S_1)\dots  \widetilde{\text{T}}(S_k)
.\]
The resulting linear maps
\[
\widetilde{\text{T}}(S_1)\dots  \widetilde{\text{T}}(S_k): (\mathcal{F}_{\text{loc}}[[\hbar]])^{\otimes I} \to \mathcal{F}[[\hbar, \formg]]
\]
are called interacting generalized time-ordered products. For each choice of a field polynomial, the curried homomorphism is a `representation' of $\Sig$ as $\mathcal{F}[[\hbar, \formg]]$-valued generalized functions on $\cX^I$, called operator-valued distributions since the Wick algebra is often represented on a Hilbert space. The composition of the time-ordered products $\widetilde{\text{T}}(I)$ with the Hadamard vacuum state
\[
\la - \ra_{0} :\mathcal{F}[[\hbar, \formg]] \to \bC[[\hbar, \formg]]
,\qquad
\emph{\textsf{O}} \mapsto  \emph{\textsf{O}}(\Phi=0)
\]
are then translation invariant $\bC[[\hbar, \formg]]$-valued generalized functions 
\[
\text{G}_I: \cX^I \to     \bC[[\hbar, \formg]]
,\qquad    
(x_{i_1}, \dots, x_{i_n}) \mapsto \text{G}_I(x_{i_1}, \dots, x_{i_n})   \footnote{\ we have used generalized function notation; $\text{G}_I$ is not a single function, but can be represented by a sequence of functions}
\] 
called time-ordered $n$-point correlation functions. After taking the adiabatic limit, and in the presence of vacuum stability, these functions may be interpreted as the probabilistic predictions made by the pQFT of the outcomes of scattering experiments, called scattering amplitudes, see \autoref{sec:scatterung}. However, their values are formal power series in $\hbar$ and $\formg$, and so have to be truncated. 

Central to Aguiar-Mahajan's work is the interpretation of $\Sig$ (and other Hopf monoids) in terms of the geometry of the type $A$ reflection hyperplane arrangement, called the (essentialized) braid arrangement
\[
\text{Br}[I]=\big\{  \{  x_{i_1}-x_{i_2}=0  \}  \subseteq 
 \! \! \!   \! \! \!  \! 
 \underbrace{\bR^I/\bR \twoheadleftarrow \bR^I}_{\text{quotient by translations}}
\! \! \!  \! \! \!  \! 
 :  (i_1,i_2)\in I^2, \ i_1 \neq i_2  \big   \}
.\]
In causal perturbation theory, the braid arrangement appears as the space of time components of configurations \hbox{$\cX^I$} modulo translational symmetry \cite[Section 2]{Ruelle}, and the reflection hyperplanes are the coinciding interaction points. Every real hyperplane arrangement $\text{A}$ has a corresponding adjoint hyperplane arrangement $\text{A}^\vee$ \cite[Section 1.9.2]{aguiar2017topics}. The free vector space $\bR I$ on $I$ is naturally $\Hom(\bR^I,\bR)$, and so the adjoint of the braid arrangement is given by
\[
\text{Br}^\vee[I]=\bigg\{  \Big\{ \sum_{i\in S} x_i=\sum_{i\in T} x_i=0\Big  \}  \subseteq \underbrace{\Hom(\bR^I/\bR,\bR)\hookrightarrow \bR I}_{\text{sum-zero subspace}}  :  (S,T)\in 2^I,\   S,T\neq \emptyset \bigg  \}
.\]
In causal perturbation theory, the adjoint braid arrangement appears as the space of energy components \cite[Section 2]{Ruelle}, and the hyperplanes correspond to subsets going `on-shell'. The spherical representation of the adjoint braid arrangement is called the Steinmann sphere, or Steinmann planet, e.g. \cite[Figure A.4]{epstein2016}. The chambers of the adjoint braid arrangement are indexed by combinatorial gadgets called cells $\cS$ \cite[Definition 6]{epstein1976general}, also known as maximal unbalanced families \cite{billera2012maximal} and positive sum systems \cite{MR3467341}.

The primitive part Lie algebra $\Zie=\mathcal{P}(\Sig)$ (together with its dual Lie coalgebra $\Zie^\ast$) has a natural geometric realization over the adjoint braid arrangement \cite[Section 6]{Ruelle}, \cite[\href{https://www.youtube.com/watch?v=fUnr0f6mV4c}{Lecture 33}]{oc17}, \cite{lno2019}, \cite{norledge2019hopf}, which results in cells $\cS$ corresponding to certain special primitive elements $\mathtt{D}_\cS\in \Zie[I]$, see \autoref{adjoint}. The special elements were named Dynkin elements by Aguiar-Mahajan \cite[Section 14.1 and 14.9.8]{aguiar2017topics}. It is shown in \cite{norledge2019hopf} that the Dynkin elements span $\Zie$, but they are not linearly independent. The relations which are satisfied by the Dynkin elements are known as the Steinmann relations \cite[Equation 44]{steinmann1960}, see \autoref{stein}, first studied by Steinmann in settings where $\Sig$ is represented as operator-valued distributions. More recently, they have been studied in the context scattering amplitudes, where they appear to be related to cluster algebras \cite{drummond2018cluster}, \cite{caron2019cosmic}, \cite{Caron-Huot:2020bkp}. 

If we restrict a curried system of interacting generalized time-ordered products to the primitive part $\Zie$, then we obtain a Lie algebra homomorphism
\[       
\Zie\to\cH(\textbf{E}_{\mathcal{F}_{\text{loc}}[[\hbar]]},\textbf{U}_{\mathcal{F}[[\hbar, \formg]]}), \qquad \mathtt{D}_\cS \mapsto \widetilde{\text{R}}_\cS
.\]
The operator-valued distributions $\widetilde{\text{R}}_\cS$ which are the images of the Dynkin elements $\mathtt{D}_\cS$ are the interacting generalized retarded products of the system, see e.g. \cite{steinmann1960}, \cite{Huz1}, \cite[Equation 79]{ep73roleofloc}. In this paper, we give an exposition of the Steinman algebra and Steinmann relations in \autoref{sec:Steain}, \autoref{adjoint} and \autoref{stein}.

Let $\textbf{L}\hookrightarrow \Sig$ be the Hopf subalgebra of linear orders (=compositions with singleton lumps), and let $\textbf{E}^\ast\hookrightarrow \Sig$ be the subcoalgebra of compositions with one lump. Then we have the dictionary in \autoref{dic} between products/vacuum expectation values and elements of $\Sig$. In the commutative setting before Moyal deformation quantization, the species $\textbf{X}$ and $\textbf{E}$ are similarly related to the smeared field and polynomial observables, see \autoref{sec:obs}.

\begin{figure}[t] 
		\begin{tabular}{|c|c|c|c|}
		\hline
		&
		\begin{tabular}{@{}c@{}}spanning set\end{tabular}&
		\begin{tabular}{@{}c@{}}operator-valued distributions\end{tabular}&
		\begin{tabular}{@{}c@{}}vacuum expectation values\end{tabular}\\ \hline
		
		\begin{tabular}{@{}c@{}}$\textbf{E}^\ast$\end{tabular}&   
		\begin{tabular}{@{}c@{}}universal series\\$\mathtt{G}_I$\end{tabular}&
		\begin{tabular}{@{}c@{}}time-ordered product\\ $\text{T}(I)$\end{tabular}&
		\begin{tabular}{@{}c@{}}time-ordered $n$-point\\ function\end{tabular}\\ \hline
		
		\begin{tabular}{@{}c@{}}$\textbf{L}$\end{tabular}&
		\begin{tabular}{@{}c@{}}$\tH$-basis linear orders\\$\tH_\ell$\end{tabular}&
		\begin{tabular}{@{}c@{}}$\text{T}(i_1)\dots \text{T}(i_n)$\end{tabular}& 
		\begin{tabular}{@{}c@{}}Wightman $n$-point\\ functions \end{tabular}\\ \hline
		
		\begin{tabular}{@{}c@{}}$\Sig$\end{tabular}&
		\begin{tabular}{@{}c@{}}$\tH$-basis set compositions\\$\tH_F$\end{tabular}&
		\begin{tabular}{@{}c@{}}generalized time-ordered products\\$\text{T}(S_1)\dots \text{T}(S_k)$\end{tabular}&
		\begin{tabular}{@{}c@{}}generalized time-ordered\\ functions\end{tabular}\\ \hline
		
		\begin{tabular}{@{}c@{}}$\Zie$\end{tabular}&
		\begin{tabular}{@{}c@{}}Dynkin elements\\$\mathtt{D}_\cS$\end{tabular}&
		\begin{tabular}{@{}c@{}}generalized retarded products\\$\text{R}_\cS$\end{tabular}&
		\begin{tabular}{@{}c@{}}generalized retarded\\ functions\end{tabular}\\ \hline
	\end{tabular}
	\caption{Dictionary between products/vacuum expectation values and elements of the Hopf algebra $\Sig$.}
	\label{dic} 
\end{figure}

In \autoref{sec:Perturbation of T-Products by Steinmann Arrows} and \autoref{sec:Interactions}, we formalize the \emph{perturbation} of time-ordered products in casual perturbation theory as follows. Our starting point is a fully normalized system of generalized \hbox{time-ordered} products, that is a homomorphism of algebras
\[
\text{T}:\Sig\otimes \textbf{E}_{\mathcal{F}_\text{loc} [[\hbar]]}\to\textbf{U}_{\mathcal{F}((\hbar))}  
\]
satisfying causal factorization, and such that the singleton components $\text{T}_{\{i\}}$ are the natural inclusion 
\[
\mathcal{F}_\text{loc} [[\hbar]]\hookrightarrow  \mathcal{F}((\hbar))
,\qquad
\ssA \mapsto \, \,  :\! \ssA :  .
\]
The corresponding operator-valued distributions are determined everywhere on $\cX^I$ by causal factorization, apart from on the fat diagonal (=coinciding interaction points). In particular, off the fat diagonal, the time-ordered products $\text{T}(I)$ are given by the Moyal star product $\star_{\text{F}}$ with respect to the Feynman propagator $\Delta_{\text{F}}$ for the Klein-Gordon field. The terms of the product $\star_{\text{F}}$ may be encoded in finite multigraphs, i.e. Feynman graphs. The remaining inherent ambiguity means one has to make choices when extending the $\text{T}(I)$ to the fat diagonal, and these choices form a torsor of the \hbox{St\"uckelberg-Petermann} renormalization group. This is Stora's elaboration \cite{stora16}, \cite{stora1993differential}, \cite{klaus2000micro} on \hbox{St\"uckelberg-Bogoliubov-Epstein-Glaser} normalization \cite{ep73roleofloc}, which constructs the $\text{T}(I)$ inductively in $n=|I|$. We leave species-theoretic aspects of renormalization, and possible connections to \hbox{Connes-Kreimer} theory \cite{MR1845168}, \cite{Bondia00}, \cite{Kreimer05}, \cite{FredHopf14}, to future work.


In the original formulation by Tomonaga, Schwinger, Feynman and Dyson, would-be \hbox{time-ordered} products are obtained by informally multiplying Wick algebra products by step functions, which is in general ill-defined by H\"ormander's criterion. This leads to the divergence of individual terms of the formal power series, called UV-divergences. Then informal methods are used to obtain finite values from these infinite terms \cite[Preface and Section 4.3]{MR1359058}.

The exponential species $\textbf{E}$, given by $\textbf{E}[I]=\bC$ and $1_\bC\in \textbf{E}[I]$ denoted $\tH_I$, has the structure of an algebra in species by linearizing taking unions of sets,
\[
\mu_{S,T}: \textbf{E}[S]\otimes \textbf{E}[T]\to \textbf{E}[I]
,\qquad
\tH_S\otimes \tH_T\mapsto \tH_{I}
.\] 
An $\textbf{E}$-module $\textbf{m}=(\textbf{m},\rho)$ is an associative and unital morphism 
\[\rho:\textbf{E}\bigcdot\textbf{m}\to \textbf{m}\] 
for $\textbf{m}$ a species. Moreover, taking the inverse of $\mu_{S,T}$ as the comultiplication turns $\textbf{E}$ into a connected (co)commutative bialgebra, and so the category of $\textbf{E}$-modules $\textsf{Rep}(\textbf{E})$ is a symmetric monoidal category with monoidal product the Cauchy product of $\textbf{E}$-modules. In particular, we may consider Hopf/Lie algebras internal to $\textsf{Rep}(\textbf{E})$, which we call Hopf/Lie \hbox{$\textbf{E}$-algebras}.

The retarded $Y\downarrow(-)$ and advanced $Y\uparrow(-)$ Steinmann arrows are (we formalize as) raising operators on $\Sig$, whose precise definition is due to \hbox{Epstein-Glaser-Stora} \cite[p.82-83]{epstein1976general}. They define two $\textbf{E}$-module structures on $\Sig$, 
\[
\textbf{E}\bigcdot \Sig \to \Sig 
,\quad
\tH_Y \otimes \tH_F \, \mapsto\,  Y  \downarrow\tH_F
\qquad \text{and} \qquad 
\textbf{E}\bigcdot \Sig \to \Sig
,\quad
\tH_Y \otimes \tH_F \, \mapsto\,  Y  \uparrow \tH_F
.\]
See \autoref{sec:The Steinmann Arrows}. In particular, the retarded arrow is generated by putting $\{\ast\} \downarrow \tH_{(I)}= -\tH_{(\ast, I)} +\tH_{(\ast I)}$.\footnote{\ $(\ast I)$ denotes the composition of $\{ \ast \}\sqcup I$ which has a single lump} Then
\[
Y\! \downarrow \tH_{(I)}= \underbrace{\sum_{Y_1\sqcup Y_2=Y} \mu_{Y_1, Y_2\sqcup I}\big ( \text{s}(\tH_{(Y_1)}) \otimes \tH_{(Y_2\sqcup I)} \big )}_{\text{denoted $\mathtt{R}_{(Y;I)}$}}
\]
where $\text{s}:\Sig \to \Sig$ is the antipode of $\Sig$. The Steinmann arrows were first studied by Steinmann \cite[Section 3]{steinmann1960}, where $\Sig$ is represented as operator-valued distributions. Here, the \hbox{operator-valued} distribution which corresponds to $\mathtt{R}_{(Y;I)}\in \Sig[Y\sqcup I]$ is called the retarded product $\text{R}(Y;I)$.\footnote{\ note that some authors, e.g. \cite{dutsch2019perturbative}, call $\text{R}(Y;i)$ the retarded product, and then call $\text{R}(Y;I)$ the generalized retarded product}

Since $\{\ast\} \downarrow (-)$ is a commutative biderivation of $\Sig$ (\autoref{steinmannarrowaredercoder}), the retarded Steinmann arrow gives $\Sig$ the structure of a Hopf $\textbf{E}$-algebra, and $\Zie$ the structure of a Lie $\textbf{E}$-algebra (similarly for the advanced arrow). There is an interesting description of these Lie $\textbf{E}$-algebras in terms of the adjoint braid arrangement, see \autoref{sec:The Steinmann Arrows and Dynkin Elements}. The Steinmann arrows are ``two halves'' of the restricted adjoint representation $\textbf{L}\bigcdot \Sig \to \Sig$ of $\Sig$, which is reflected in \cite[Equation 13]{steinmann1960}. This directly corresponds to how the retarded $\Delta_-$ and advanced $\Delta_+$ propagators are two halves of the causal propagator $\Delta_{\text{S}}=\Delta_+ - \Delta_-$. 

Let $\cH^{\bigcdot}(-,-)$ denote the internal hom for the Cauchy product of species, and let 
\[
(-)^{\textbf{E}}=\cH^{\bigcdot} ( \textbf{E} , -)
.\] 
See \autoref{Coalgebras} for a more explicit definition.  See also \cite[Section 2]{norledge2020species} for more details here regarding the differentiation between the $\formj$-colored sets $I$ (physically, the source field) and the $\formg$-colored sets $Y$ (physically, the coupling constant). Then $(-)^{\textbf{E}}$ is an endofunctor on species, which is lax monoidal with respect to the Cauchy product. Therefore $\Sig^{\textbf{E}}$ is naturally an algebra, with multiplication inherited from $\Sig$. Then, by currying the retarded Steinmann action $\textbf{E}\bigcdot \Sig \to \Sig$, we obtain a homomorphism $\Sig \to \Sig^{\textbf{E}}$. Similarly for the setting with decorations, given a choice of adiabatically switched interaction action functional $\ssS_{\text{int}}\in \mathcal{F}_{\text{loc}}[[\hbar]]$, after acting with the retarded Steinmann arrows and currying, we obtain the homomorphism
\begin{align*}
\Sig \otimes \textbf{E}_{\mathcal{F}_{\text{loc}}[[\hbar]]} &\to    (\Sig \otimes \textbf{E}_{\mathcal{F}_{\text{loc}}[[\hbar]]})^{\textbf{E}}\\[6pt]
\tH_F\otimes \ssA_{i_1}\otimes \dots \otimes \ssA_{i_n} &\mapsto \,    \sum_{r=0}^\infty   \underbrace{\downarrow\dots  \downarrow}_{\text{$r$ times}} \tH_F  \otimes  \underbrace{\ssS_{\text{int}}\otimes \dots \otimes \ssS_{\text{int}}}_{\text{$r$ times}} \,  \otimes\,    \ssA_{i_1}\otimes \dots \otimes \ssA_{i_n}.
\end{align*}
Compare this with the formalism for creation-annihilation operators in \cite[Chapter 19]{aguiar2010monoidal}. Then, finally, the corresponding system of perturbed interacting time-ordered products $\widetilde{\text{T}}$ is given by composing this homomorphism with the image of $\text{T}$ under the endofunctor $(-)^{\textbf{E}}$,
\[
\widetilde{\text{T}} : \Sig\otimes \textbf{E}_{\mathcal{F}_\text{loc} [[\hbar]]} \to   (\Sig\otimes \textbf{E}_{\mathcal{F}_\text{loc} [[\hbar]]})^{\textbf{E}}   \xrightarrow{\text{T}^{\textbf{E}}}        (\textbf{U}_{ \mathcal{F}((\hbar))})^{\textbf{E}} \cong   \textbf{U}_{\mathcal{F}((\hbar))[[\formg]]}
.\]
See \autoref{sec:Perturbation of T-Products by Steinmann Arrows}. It is a theorem of pAQFT that this does indeed land in $\textbf{U}_{\mathcal{F}[[\hbar,\formg]]}$.

Finally, in \autoref{sec:T-Exponentials} and \autoref{sec:Time-Ordered Products}, we formalize S-matrices, or time-ordered exponentials, as follows. Let $\Hom(-,-)$ denote the external hom for species, which lands in vector spaces $\sfVec$. We let 
\[
\mathscr{S}(-)=\Hom(\textbf{E},-)
.\] 
This is lax monoidal with respect to the Cauchy product. In the presence of a generic system of products on an algebra $\textbf{a}$,
\[
\varphi:\textbf{a}\otimes \textbf{E}_V\to \textbf{U}_\cA,
\] 
series $\mathtt{s}\in \mathscr{S}(\textbf{a})$ of $\textbf{a}$
\[\mathtt{s}:\textbf{E}\to\textbf{a}
,\qquad 
\tH_I\mapsto \mathtt{s}_I\] 
induce $\mathscr{S}(\textbf{U}_\cA)\cong \cA[[\formj]]$-valued functions $\mathcal{S}_{\mathtt{s}}$ on $V$ as follows,
\[
\mathcal{S}_{\mathtt{s}} :    V \to \cA[[\formj]]
,\qquad
\ssA \mapsto   \mathcal{S}_{\mathtt{s}}(\formj\! \ssA) :=  \sum_{n=0}^{\infty}  \dfrac{\formj^n}{n!}  \varphi_{n} ( \mathtt{s}_{n} \otimes \underbrace{\ssA\otimes \dots \otimes \ssA}_{\text{$n$ times}})
.\]
If $\varphi$ is a homomorphism of algebras, then 
\[
\mathcal{S}_{(-)}: \mathscr{S}(\textbf{a})\to \text{Func}(V, \cA[[\formj]])
\] 
is a homomorphism of $\bC$-algebras. As a basic example, if we put $\textbf{a}=\textbf{E}$, $\cA=C^\infty(V^\ast)$, and set $\formj=1$ at the end, then one can recover the classical exponential function in this way. 

For $c\in \bC$, the so-called (scaled) universal series $\mathtt{G}(c)$ of $\Sig$ is given by sending each finite set to the (scaled) composition with one lump,
\[   
\mathtt{G}(c):  \textbf{E} \to  \Sig
,\qquad  
\tH_{I}\mapsto \mathtt{G}(c)_{I}:= c^n\,  \tH_{(I)}    
.\]
If we set $c=1/\text{i}\hbar$, then the function $\mathcal{S}=\mathcal{S}_{\mathtt{G}(1/\text{i}\hbar)}$ above for a fully normalized system of generalized time-ordered products $\text{T}:\Sig\otimes \textbf{E}_{\mathcal{F}_\text{loc} [[\hbar]]}\to\textbf{U}_{\mathcal{F}((\hbar))}$ recovers the usual perturbative S-matrix scheme of pAQFT,
\[
\mathcal{S}:\mathcal{F}_{\text{loc}}[[\hbar]]\to\mathcal{F}((\hbar))[[\formj]]
,\qquad
\ssA \mapsto   \mathcal{S}(\formj\!  \ssA) =  \sum_{n=0}^{\infty}  \bigg( \dfrac{1}{\text{i} \hbar} \bigg)^n \dfrac{\formj^n}{n!}  \text{T}_{n} ( \tH_{(n)} \otimes \underbrace{\ssA\otimes \dots \otimes \ssA}_{\text{$n$ times}})
.\] 
The image of $\mathcal{S}(\formj\! \ssA)$ after applying perturbation by the retarded Steinmann arrow and a choice of interaction $\ssS_{\text{int}}\in \mathcal{F}_{\text{loc}}[[\hbar]]$ is
\[
\mathcal{Z}_{\formg \ssS_{\text{int}}}(\formj\! \ssA)
=
\sum_{n=0}^\infty \sum_{r=0}^\infty 
\bigg(\dfrac{1}{\text{i}\hbar}\bigg)^{\! r+n}
\dfrac{\formg^{r} \formj^n}{r!\, n!}\,  \text{R}_{r;n} (\underbrace{\ssS_{\text{int}}\otimes \dots \otimes \ssS_{\text{int}}}_{\text{$r$ times}}\,   ;\,  \underbrace{\ssA\otimes \dots \otimes \ssA}_{\text{$n$ times}} )    
\]
where, by our previous expression for $\mathtt{R}_{(Y;I)}=Y\downarrow \tH_{(I)}$ (and letting $\overline{\text{T}}$ denote the precomposition of $\text{T}$ with the antipode of $\Sig \otimes \textbf{E}_{\mathcal{F}_{\text{loc}}[[\hbar]]}$), we have
\[
\text{R}_{Y;I}(\ssS_{\text{int}}^{\, Y};\ssA^I)
= 
\text{T}_{Y\sqcup I}(Y\downarrow \tH_{(I)} \otimes   \ssS_{\text{int}}^{\, Y} \otimes \ssA^I)
=
\sum_{Y_1 \sqcup Y_2=Y}  \overline{\text{T}}_{Y_1}(\ssS_{\text{int}}^{\, Y_1}) \star_{\text{H}} \text{T}_{Y_2\sqcup I}( \ssS_{\text{int}}^{\, Y_2}\otimes  \ssA^I)
.\]
Then, since
\[
\mathcal{S}_{(-)}:\mathscr{S}(\Sig)\to \text{Func}\big (\mathcal{F}_{\text{loc}}[[\hbar]] , \mathcal{F}((\hbar))[[\formg]]\big )
\] 
is a homomorphism of $\bC$-algebras, it follows that $\mathcal{Z}_{\formg \ssS_{\text{int}}}$ is given by
\[
\mathcal{Z}_{\formg \ssS_{\text{int}}}(\formj\! \ssA)
=
\mathcal{S}^{-1}( \formg \ssS_{\text{int}})\star_{\text{H}} \mathcal{S}(\formg \ssS_{\text{int}} +\formj\! \ssA )
.\]
This is the generating function, or partition function, for time-ordered products of interacting field observables, see e.g. \cite[Section 8.1]{ep73roleofloc}, \cite[Section 6.2]{dutfred00}, going back to Bogoliubov \cite[Chapter 4]{Bogoliubov59}. In this paper, we arrive at the generating function $\mathcal{Z}_{\formg \ssS_{\text{int}}}$ through purely Hopf-theoretic considerations. However, it was originally motivated by attempts to make sense of the path integral synthetically. For some recent developments, see \cite{collini2016fedosov}, \cite{MR4109798}.

\subsection*{Structure.} 

This paper is divided into two parts. In part one, we focus on developing theory for the Hopf algebra of compositions $\Sig$ and its primitive part $\Zie$. In part two, we specialize to pAQFT for the case of a real scalar field on Minkowski spacetime.



\subsection*{Acknowledgments.} 

We thank Adrian Ocneanu for his support and useful discussions. This paper would not have been written without Nick Early's discovery that certain relations appearing in Ocneanu's work were known in quantum field theory as the Steinmann relations. We thank Yiannis Loizides and Maria Teresa Chiri for helpful discussions during an early stage of this project. We thank Arthur Jaffe for his support, useful suggestions, and encouragement to pursue this topic. We thank Penn State maths department for their continued support.

\part{Hopf Monoids}

\section{The Algebras} 
We recall the Hopf algebra of compositions $\Sig$, together with its Lie algebra of primitive elements $\Zie\hookrightarrow \Sig$. We show that $\Sig$ and $\Zie$ are naturally algebras over the exponential species $\textbf{E}$. This will be a \hbox{species-theoretic} formalization of mathematical structure discovered by Steinmann \cite{steinmann1960} and \hbox{Epstein-Glaser-Stora} \cite{epstein1976general}, which, combined with a certain `perturbation of systems of products' construction using the $\textbf{E}$-action, will recover the perturbative construction of interacting fields in pAQFT, as in \cite[Section 8.1]{ep73roleofloc}, \cite[Section 6.2]{dutfred00}, going back to Bogoliubov \cite[Chapter 4]{Bogoliubov59}. 
  
\subsection{Compositions}  \label{comp}
Let $I$ be a finite set of cardinality $n$. We think of $I$ as having `color' $\formj$ (physically, the source field). As a particular example of the set $I$, we have the set of integers $[n]:=\{ 1, \dots, n \}$ (formally, we have picked a section of the decategorification functor $I\mapsto n$). For $k\in \bN$, let
\[
(k):=\{1,\dots,k\}
\]
equipped with the ordering \hbox{$1>\dots> k$}. A \emph{composition} $F$ of $I$ of \emph{length} $l(F)=k$ is a surjective function $F:I\to (k)$. The set of all compositions of $I$ is denoted $\Sigma[I]$,
\[  
\Sigma[I]:=  \bigsqcup_{k\in \bN}  \big \{  \text{surjective functions}\ F:I \to (k) \big\}
.\]
We often denote compositions by $k$-tuples 
\[  
F=  (S_1, \dots, S_k)
\]
where $S_j:= F^{-1}(j)$, $1\leq j \leq k$. The $S_j$ are called the \emph{lumps} of $F$. In particular, we have the length one composition $(I)$ for $I\neq \emptyset$, and the length zero composition $(\, )$ which is the unique composition of the empty set. The \emph{opposite} $\bar{F}$ of $F$ is defined by 
\[
\bar{F}:=(S_k,\dots, S_1), \qquad  \text{i.e.} \quad \bar{F}^{-1}(j)=F^{-1}(k+1-j)
.\]


Given a decomposition $I\! =S\sqcup T$ of $I$ ($S,T$ can be empty), for $F=(S_1, \dots , S_{k})$ a composition of $S$ and $G=(T_1,\dots, T_{l})$ a composition of $T$, their \emph{concatenation} $FG$ is the composition of $I$ given by
\[ 
FG: =   ( S_1, \dots , S_{k},  T_1,\dots, T_{l}  ) 
.\]
For $S\subseteq I$ and $F=(S_1, \dots , S_{k}) \in \Sigma[I]$, the \emph{restriction} $F|_S$ of $F$ to $S$ is the composition of $S$ given by
\[
F|_S:=  (  S_1 \cap S, \dots, S_k\cap S )_+
\]
where $(-)_+$ means we delete any sets from the list which are the empty set.

For compositions $F,G\in \Sigma[I]$, we write $G\leq F$ if $G$ can be obtained from $F$ by iteratively merging contiguous lumps. Given compositions $G\leq F$ with $G=(T_1, \dots, T_l)$, we let
\[
l(F/G):=\prod^{k}_{j=1} l( F|_{T_j} )     
\qquad \text{and} \qquad  
(F/G)!:=\prod^{k}_{j=1} l( F|_{T_j} )!\,    
.\]
\subsection{The Cocommutative Hopf Monoid of Compositions} \label{hopfofsetcomp}
Let
\[   
\Sig[I] 
:= \big\{\text{formal $\bC$-linear combinations of compositions of $I$}\big\}
.\]
The vector space $\Sig[I]$ is naturally a right module over the symmetric group on $I$, and these actions extend to a contravariant functor from the category $\textsf{S}$ of finite sets and bijections into the category $\textsf{Vec}$ of vector spaces over $\bC$,
\[
\Sig:\textsf{S}^\text{op} \to \textsf{Vec}
,\qquad 
I \mapsto \Sig[I]
.\] 
For $F$ a composition of $I$, let $\tH_F\in \Sig[I]$ denote the basis element corresponding to $F$. The sets $\{\tH_F: F\in \Sigma[I]\}$ form the \emph{$\tH$-basis} of $\Sig$. 

In general, functors $\textbf{p}:\textsf{S}^\text{op} \to \textsf{Vec}$ are called (complex) \emph{vector species}, going back to Joyal \cite{joyal1981theorie}, \cite{joyal1986foncteurs}. Morphisms of vector species $\eta:\textbf{p}\to \textbf{q}$ are natural transformations; they consist of a linear map $\eta_I:   \textbf{p}[I]\to \textbf{q}[I]$ for each finite set $I$ which commutes with the action of the bijections. When $I=[n]:=\{ 1,\dots,n\}$, we abbreviate $\eta_n:= \eta_{[n]}$.

We equip vector species with the tensor product $\textbf{p}\bigcdot \textbf{q}$ known as the \emph{Cauchy product} \cite[Definition 8.5]{aguiar2010monoidal}, given by
\begin{equation}\label{eq:Cauchy}
\textbf{p}\bigcdot \textbf{q} [I] := \bigoplus_{I=S\sqcup T  }   \textbf{p}[S] \otimes \textbf{q}[T]
.
\end{equation}
This is the Day convolution with respect to disjoint union of sets and tensor product of vector spaces. In this paper, we consider algebraic structures on species which are constructed using this tensor product. In particular, a multiplication on a species $\textbf{p}$ consists of linear maps
\[
\mu_{S,T} : \textbf{p} [S] \otimes \textbf{p}[T] \to \textbf{p}[I] 
\]
and a comultiplication on $\textbf{p}$ consists of linear maps 
\[
 \Delta_{S,T} : \textbf{p}[I] \to  \textbf{p} [S] \otimes \textbf{p}[T]
,\]
where we have a map for each choice of decomposition $I=S\sqcup T$ ($S,T$ can be empty). We can then impose conditions like (co)associativity, see e.g. \cite[Section 1.3]{norledge2020species}.

Following \cite[Section 11]{aguiar2013hopf}, $\Sig$ is a connected\footnote{\ a species $\textbf{p}$ is \emph{connected} if $\textbf{p}[\emptyset]=\bC$} bialgebra, meaning it is naturally equipped with an associative, unital multiplication and a coassociative, counital comultiplication, which are compatible in the sense they satisfy the bimonoid axiom. See \cite[Section 8.3.1]{aguiar2010monoidal} for details. The multiplication and comultiplication are given in terms of the $\tH$-basis by
\[
\mu_{S,T}(\tH_F\otimes \tH_G):=\tH_{FG} \qquad \text{and} \qquad   \Delta_{S,T}  (\tH_F) :=  \tH_{F|_S} \otimes    \tH_{F|_T}
.\]
We sometimes abbreviate $\tH_F \tH_G:= \mu_{S,T}(\tH_F \otimes\tH_G)$. The unit and counit are given by 
\[
\mathtt{1}_{\Sig}:=\tH_{(\, )} \qquad  \text{and} \qquad \epsilon_\emptyset(\tH_{(\, )}):=1_\bC
.\] 
Let
\begin{equation}\label{antipode}
\overline{\tH}_F:= \sum_{G\geq \bar{F}}  (-1)^{l(G)}\, \tH_G 
.
\end{equation}
Then \cite[Theorem 11.38]{aguiar2010monoidal} (in the case $\textbf{q}=\textbf{E}^\ast_+$ and $q=1$) shows that
\begin{equation}\label{eq:inversion relation for reverse time-ordered products}
\sum_{S\sqcup T=I} \tH_{F|_S} \overline{\tH}_{F|_T}
=0
\qquad \text{and} \qquad 
\sum_{S\sqcup T=I}\overline{\tH}_{F|_S}  \tH_{F|_T}
=0
.
\end{equation}
In general, connected bialgebras are automatically Hopf algebras, and it follows from \textcolor{blue}{(\refeqq{eq:inversion relation for reverse time-ordered products})} that the antipode $s:\Sig\to \Sig$ is given by 
\[
\text{s}_I(\tH_F)=\overline{\tH}_F
.\] 
The Hopf algebra $\Sig$ is the free cocommutative Hopf algebra on the positive coalgebra $\textbf{E}^\ast_+$ \cite[Section 11.2.5]{aguiar2010monoidal}, and so $\Sig\cong \textbf{L}\boldsymbol{\circ} \textbf{E}^\ast_+$ where `$\boldsymbol{\circ}$' is plethysm of species and $\textbf{L}\hookrightarrow \Sig$ is the subspecies of singleton lump compositions ($=$linear orders). 

There is a second important basis of $\Sig$, called the \emph{$\tQ$-basis}. The $\tQ$-basis is also indexed by compositions, and is given by  
\[ 
\tQ_F:= \sum_{G\geq F}   (-1)^{ l(G)-l(F) }  \dfrac{1}{    l(G/F) }   \tH_G\qquad \text{or equivalently} \qquad  \mathtt{H}_F=: \sum_{G\geq F} \dfrac{1}{( G/F )!}  \tQ_G 
.\]
For $S\subseteq I$ and $F\in \Sigma[I]$, we have \emph{deshuffling}
\[F\res_S\, :=  
\begin{cases}
F|_S &\quad  \text{if $S$ is a union of lumps of $F$}\footnote{\ }\\
0\in \Sig[S] &\quad \text{otherwise.}
\end{cases}
\]
\footnotetext{\ not necessarily contiguous}The multiplication and comultiplication of $\Sig$ is given in terms of the $\tQ$-basis by  
\[         
\mu_{S,T} ( \tQ_F\otimes \tQ_G ) =  \tQ_{FG} 
\qquad \text{and} \qquad  
\Delta_{S,T}  (\tQ_F) =  \tQ_{F\res_S} \otimes \,   \tQ_{F\res_T}
.\]


\subsection{Decorations}\label{Decorations}

Given a complex vector space $V$, we can use $V$ to `decorate' $\Sig$ in order to obtain an enlarged Hopf algebra $\Sig\otimes \textbf{E}_V$. This goes as follows.

We have the species denoted $\textbf{E}_V$, given by
\[
\textbf{E}_V[I] := V^{\otimes I}=  \! \! \! \! \! \! \! \! \!  \underbrace{V\otimes \dots \otimes V}_{\text{a copy of $V$ for each $i\in I$}}
\! \! \! \! \! \! \! \! \!  \! \!
.\]
The action of bijections is given by relabeling tensor factors. 

\begin{remark}
Notice species of the form $\textbf{E}_V$ are exactly the monoidal functors \hbox{$\textbf{E}_V: \textsf{S}^{\text{op}} \to \textsf{Vec}$}.
\end{remark}

We denote vectors by $\ssA,\ssS\in V$, and we denote simple tensors of $V^{\otimes I}$ by
\[
\ssA_I=\ssA_{i_1}\otimes \cdots \otimes \ssA_{i_n} \in V^{\otimes I}
\] 
where $I=\{ i_1,\dots, i_n\}$. If $\ssA_i=\ssA$ for all $i\in I$, then we write    
\begin{equation}\label{eq:simpletensors} 
	\ssA^{I}:= \ssA\otimes \cdots \otimes \ssA\in V^{\otimes I}
	\qquad \text{and} \qquad 
	\ssA^{ n }:=\ssA^{[n]}\in V^{\otimes [n]}
\end{equation}
where $[n]=\{1,\dots,n\}$ as usual.

We let `$\otimes$' denote the Hadamard product of species, which is given by componentwise tensoring, see e.g. \cite[Section 1.2]{norledge2020species}. Then the species of $V$-\emph{decorated compositions} $\Sig\otimes \textbf{E}_V$ is given by
\[
\Sig\otimes \textbf{E}_V[I] =   \Sig[I]\otimes \textbf{E}_V[I]=   \Sig[I] \otimes V^{\otimes I} 
.\]
Following \cite[Section 8.13.4]{aguiar2010monoidal}, $\Sig\otimes \textbf{E}_V$ is a connected bialgebra, with multiplication given by
\[
\mu_{S,T}\big((\tH_F\otimes\ssA_S) \otimes (\tH_G \otimes \ssA_T)\big)
:=
\tH_F \tH_G \otimes \ssA_S \otimes \ssA_T 
\]
and comultiplication given by 
\[
\Delta_{S,T}(\tH_F\otimes \ssA_I)
:=
(\tH_{F|_S}\otimes{\ssA_{I}}|_S)\otimes(\tH_{F|_T} \otimes {\ssA_{I}}|_T) 
.\]
The unit and counit are given by 
\[
\mathtt{1}_{\Sig\otimes \textbf{E}_V}:=\tH_{(\, )}\otimes 1_\bC\qquad  \text{and}\qquad \epsilon_\emptyset(\tH_{(\, )}\otimes 1_\bC):=1_\bC
.\] 
For $\tH_F\otimes \ssA_I\in \Sig\otimes \textbf{E}_V[I]$, we have
\[
\sum_{S\sqcup T=I} 
\mu_{S,T}\big ((\tH_{F|_S}\otimes {\ssA_{I}}|_S)\otimes (\overline{\tH}_{F|_T}\otimes {\ssA_{I}}|_T)\big) 
=
\underbrace{\sum_{S\sqcup T=I} \tH_{F|_S}  \overline{\tH}_{F|_T}}_{\text{$=0$ by \textcolor{blue}{(\refeqq{eq:inversion relation for reverse time-ordered products})}}} \otimes\,  \ssA_{I}
=0
\] 
and
\[
\sum_{S\sqcup T=I} 
\mu_{S,T}\big ((\overline{\tH}_{F|_S}\otimes {\ssA_{I}}|_S)\otimes (\tH_{F|_T}\otimes {\ssA_{I}}|_T)\big) 
=
\underbrace{\sum_{S\sqcup T=I} \overline{\tH}_{F|_S} \tH_{F|_T}}_{\text{$=0$ by \textcolor{blue}{(\refeqq{eq:inversion relation for reverse time-ordered products})}}}\otimes\,  \ssA_{I}
=0
.\]
It follows that the antipode of $\Sig\otimes \textbf{E}_V$ is given by
\begin{equation}\label{eq:antipode}
\text{s}_I(\tH_F\otimes \ssA_I)= \overline{\tH}_F \otimes \ssA_I.  
\end{equation}


\subsection{The Steinmann Algebra}\label{sec:Steain}
The Hopf algebra $\Sig$ is connected and cocommutative, and so the CMM Theorem applies, see \cite[Section 1.4]{norledge2020species}. We now describe the positive\footnote{\ a species $\textbf{p}$ is \emph{positive} if $\textbf{p}[\emptyset]=0$} Lie algebra of primitive elements 
\[
\mathcal{P}(\Sig)\subset \Sig
.\] 
For $I\in \sfS$ a finite set, let a \emph{tree} $\mathcal{T}$ over $I$ be a planar\footnote{\ i.e. a choice of left and right child is made at every node} full binary tree whose leaves are labeled bijectively with the blocks of a partition of $I$ (a \emph{partition} $P$ of $I$ is a set of disjoint nonempty subsets of $I$, called \emph{blocks}, whose union is $I$). The blocks of this partition, called the \emph{lumps} of $\mathcal{T}$, form a composition called the \emph{debracketing} $F_\mathcal{T}$ of $\mathcal{T}$, by listing them in order of appearance from left to right. We denote trees by nested products $[\, \cdot\, ,\, \cdot\, ]$ of subsets or trees, see \autoref{fig:tree}. We make the convention that no trees exist over the empty set $\emptyset$. 
\begin{figure}[H]
\centering
\includegraphics[scale=0.6]{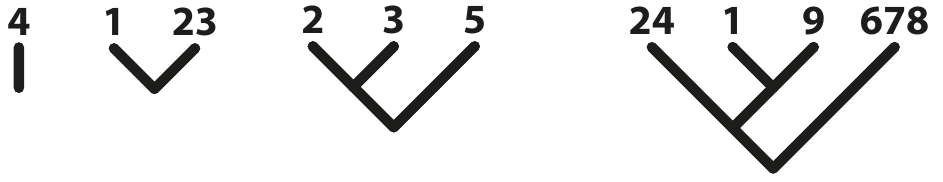}	\caption{Let $I$ be various subsets of $\{1,2,3,4,5,6,7,8,9\}$. The trees $[4]$, $[1,23]$ ($\neq[23,1]$), $[[2,3],5]$, $[[24,[1,9]],678]$ are shown. The debracketing of $[[24,[1,9]],678]$ is the composition $(24,1,9,678)$. If we put $\mathcal{T}_1=[24,[1,9]]$ and $\mathcal{T}_2=[678]$, then $[\mathcal{T}_1, \mathcal{T}_2]$ would also denote this tree.}
\label{fig:tree}
\end{figure} 
\noindent We define the positive species $\textbf{Zie}$ by letting $\textbf{Zie}[I]$ denote the vector space of formal $\bC$-linear combinations of trees over $I$, modulo the relations of antisymmetry and the Jacobi identity as interpreted on trees in the usual way. Explicitly, 
\begin{enumerate}
\item(antisymmetry) for all trees of the form $[\dots [\mathcal{T}_1, \mathcal{T}_2 ]\dots]$ (writing a tree in this form is equivalent to picking a node) we have
\[
[\dots [\mathcal{T}_1, \mathcal{T}_2 ]\dots]
+ [\dots [\mathcal{T}_2, \mathcal{T}_1 ]\dots]=
0
.\]
\item(Jacobi Identity) for all trees of the form $[\dots [[\mathcal{T}_1,\mathcal{T}_2],\mathcal{T}_3] \dots ]$ we have
\[
[\dots  [[\mathcal{T}_1,\mathcal{T}_2],\mathcal{T}_3]\dots ]+
[\dots [[\mathcal{T}_3,\mathcal{T}_1],\mathcal{T}_2]\dots ]+
[\dots [[\mathcal{T}_2,\mathcal{T}_3],\mathcal{T}_1]\dots ]=
0
.\]
\end{enumerate}
Then $\Zie$ is a positive Lie algebra in species, with Lie bracket $\partial^\ast$ given by
\[     
\partial_{S,T}^\ast(\mathcal{T}_1\otimes \mathcal{T}_2):=[\mathcal{T}_1,\mathcal{T}_2] 
.\]

\begin{remark}
We have that $\Zie$ is the free Lie algebra on the positive exponential species $\textbf{E}^\ast_+$, and so the species $\Zie$ is also given by
\[   
\Zie[I]
=   
\Lie \boldsymbol{\circ} \bE^\ast_+[I]= \bigoplus_{P}   \Lie[P]
\]
where $\textbf{Lie}$ is the species of the Lie operad, and the direct sum is over all partitions $P$ of $I$. 
\end{remark}

The Lie algebra in species $\Zie$ is closely related to the Steinmann algebra from the physics literature \cite[Section III.1]{bros}, \cite[Section 6]{Ruelle}. Precisely, the Steinmann algebra is an ordinary graded Lie algebra based on the structure map for the adjoint braid arrangement realization of $\Zie$. The adjoint braid arrangement realization of $\Zie$ is the topic of \cite{lno2019}, and the fact that the Lie algebra there is indeed $\Zie$ was shown in \cite{norledge2019hopf}. 

Via the commutator bracket, $\Sig$ is a Lie algebra in species, given by
\[
[\tH_F,\tH_G] =\tH_F \tH_G -\tH_G \tH_F
.\]
Let 
\[
[I; \text{2}]:= \big\{    \text{surjective functions}\ I\to \{1,2 \}   \big\} 
\]
denote the set of compositions of $I$ with two lumps. Since $\Sig$ is connected, its positive Lie subalgebra of primitive elements $\mathcal{P}(\Sig)\subset \Sig$ is given on nonempty $I$ by
\[
\mathcal{P}(\Sig)[I]
=\bigcap_{(S,T)\in [I; \text{2}]}  \text{ker} 
\big(    
\Delta_{S,T} :\Sig[I]\to \Sig[S]\otimes \Sig[T]    
\big)
.\]
In particular, $\tQ_{(I)}\in \mathcal{P}(\Sig)[I]$ for $I$ nonempty. Since $\Zie$ is freely generated by stick trees $[I]$, we can define a homomorphism of Lie algebras by 
\[\Zie\to \mathcal{P}(\Sig), \qquad [I]\mapsto \tQ_{(I)}.\] 
To describe this explicitly, given a tree $\mathcal{T}$, let $\text{antisym}(\mathcal{T})$ denote the set of $2^{l(F_{\mathcal{T}})-1}$ many trees which are obtained by switching left and right branches at nodes of $\mathcal{T}$. For $\mathcal{T}' \in \text{antisym}(\mathcal{T})$, let $(\mathcal{T}, \mathcal{T}')\in \bZ/2\bZ$ denote the parity of the number of node switches required to bring $\mathcal{T}$ to $\mathcal{T}'$. Then the homomorphism is given in full by
\[      
\textbf{Zie} \to \mathcal{P}(\Sig), \qquad \mathcal{T} \mapsto   \tQ_\mathcal{T}  := \sum_{\mathcal{T}' \in \text{antisym}(\mathcal{T})}  (-1)^{ (\mathcal{T},\mathcal{T}') }  \tQ_{F_{\mathcal{T}'}}
.\]
By \cite[Corollary 11.46]{aguiar2010monoidal}, this is an isomorphism. From now on, we make the identification
\[
\Zie= \mathcal{P}(\Sig)
\]
and retire the notation $\mathcal{P}(\Sig)$.



\subsection{Type $A$ Dynkin Elements}\label{adjoint}

Recall that the set of minuscule weights of (the root datum of) $\text{SL}_I(\bC)$ is in natural bijection with $[I; \text{2}]$. We denote the minuscule weight corresponding to $(S,T)$ by $\lambda_{ST}$. See \cite[Section 3.1]{norledge2019hopf} for more details.

A \emph{cell}\footnote{\ also known as maximal unbalanced families \cite{billera2012maximal} and positive sum systems \cite{MR3467341}} \cite[Definition 6]{epstein1976general} over $I$ is (equivalent to) a subset $\cS\subseteq [I; \text{2}]$ such that for all $(S,T)\in [I; \text{2}]$, exactly one of 
\[
(S,T)\in \cS \qquad \text{and} \qquad (T,S)\in \cS
\] 
is true, and whose corresponding set of minuscule weights is closed under conical combinations, that is
\[
\lambda_{UV}\in \text{coni}\big \la  \lambda_{ST} : (S,T)\in \cS \big \ra 
\quad \implies \quad 
(U,V)\in \cS
.\]
By dualizing conical spaces generated by minuscule weights, cells are in natural bijection with chambers of the adjoint of the braid arrangement, see \cite[Section 3.3]{norledge2019hopf}, \cite[Definition 2.5]{epstein2016}. Their number is sequence \href{https://oeis.org/A034997}{A034997} in the OEIS. We denote the species of formal $\bC$-linear combinations of cells by $\textbf{L}^\vee$.

Associated to each composition $F$ of $I$ is the subset $\cF_F\subseteq [I; \text{2}]$ consisting of those compositions $(S,T)$ which are obtained by merging contiguous lumps of $F$, 
\[
\cF_F:=\big \{  (S,T)\in [I; \text{2}] : (S,T) \leq F\big \} 
.\]
More geometrically, $\cF_F$ is the subset corresponding to the set of minuscule weights which are contained in the closed braid arrangement face of $F$. Let us write $F\subseteq \cS$ as abbreviation for $\cF_F \subseteq \cS$.  

Consider the morphism of species given by
\begin{equation}     \label{eq:hbasisexp}
\textbf{L}^\vee\to \Sig
,\qquad 
\cS\mapsto  \mathtt{D}_\cS
:= 
-\sum_{\bar{F}\subseteq \cS} (-1)^{l(F)} \tH_{F}
.
\end{equation} 
The element $\mathtt{D}_\cS$ is called the \emph{Dynkin element} associated to the cell $\cS$. These special elements were defined by Epstein-Glaser-Stora in \cite[Equation 1, p.26]{epstein1976general}, and the name is due to \hbox{Aguiar-Mahajan} \cite[Equation 14.1]{aguiar2017topics} (see \autoref{Rem:dny}). In fact, $\mathtt{D}_\cS$ is a primitive element \cite[Proposition 14.1]{aguiar2017topics}, and so we actually have a morphism $\textbf{L}^\vee\to \Zie$.

For $i\in I$, let $\cS_i$ denote the cell given by
\[     \cS_i:=\big \{ (S,T)\in [I, \text{2}]: i\in S     \big \}  .    \]
This is the cell corresponding to the adjoint braid arrangement chamber which contains the projection of the basis element $e_i\in \bR I$ onto the sum-zero hyperplane. Let the \emph{total retarded} Dynkin element $\mathtt{D}_i$ associated to $i$ be given by
\[   
\mathtt{D}_i:= \mathtt{D}_{\cS_i}  =-\sum_{\substack{F\in \Sigma[I]\\ i\in S_k}} (-1)^{l(F)} \tH_F   
.\]
These Dynkin elements are considered in \cite[Section 14.5]{aguiar2013hopf}. For $i\in I$, let
\[     
\bar{\cS}_i
:=
\big \{ 
(S,T)\in [I, \text{2}]: i\in T     
\big \}  
.\]
This is the cell corresponding to the adjoint braid arrangement chamber which is opposite to the chamber of $\cS_i$. Let the \emph{total advanced} Dynkin element $\mathtt{D}_{\bar{i}}$ associated to $i$ be given by
\[   
\mathtt{D}_{\bar{i}}:= \mathtt{D}_{\bar{\cS}_i}=-\sum_{\substack{F\in \Sigma[I]\\ i\in S_1}} (-1)^{l(F)} \tH_F   
.\]

\begin{figure}[t]
	\centering
	\includegraphics[scale=0.7]{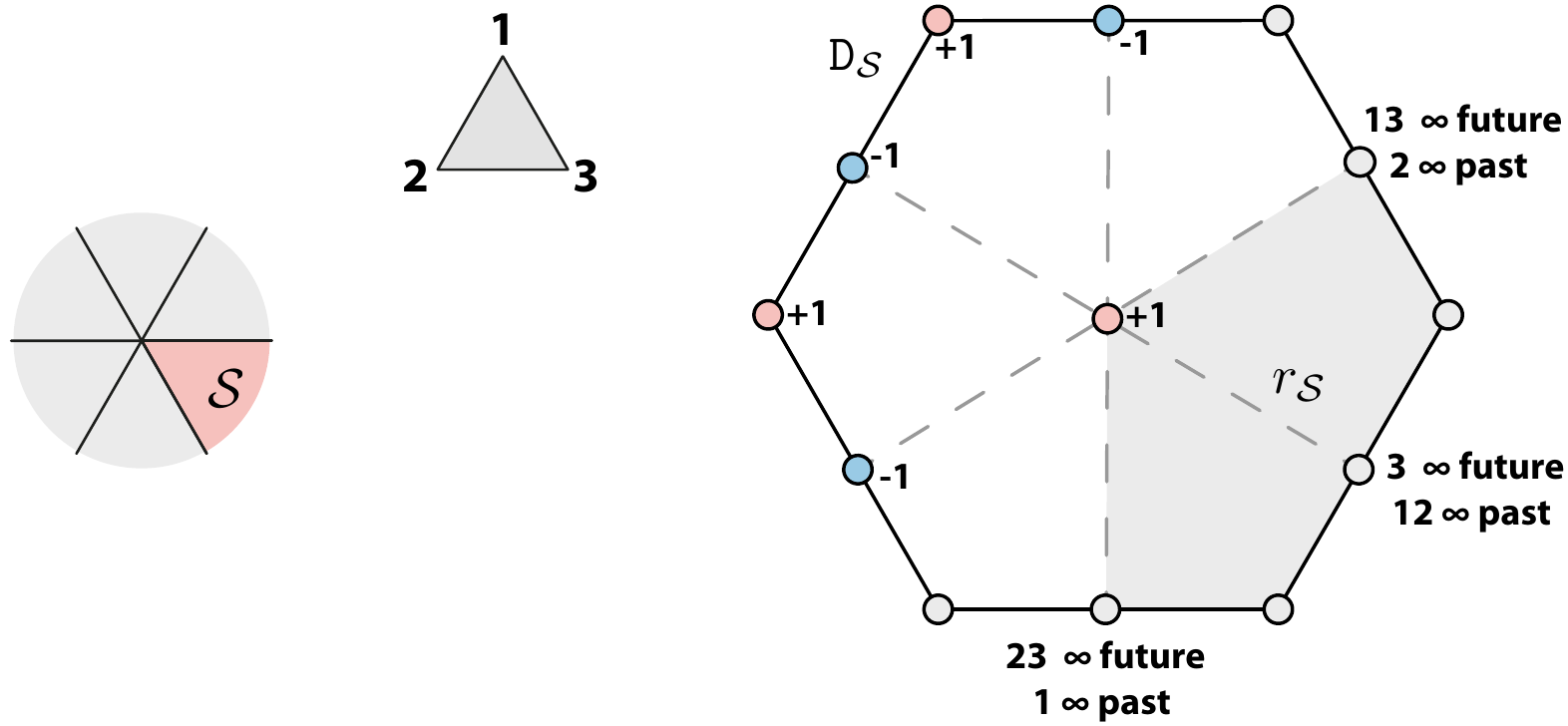}
	\caption{A cell $\cS$ over $\{1,2,3\}$ (on the adjoint braid arrangement) and its Dynkin element $\mathtt{D}_\cS$ (on the tropical geometric realization of $\boldsymbol{\Sigma}$, where the multiplication embeds facets and the comultiplication projects onto facets, see \cite[Introduction]{norledge2019hopf})). In the presence of causal factorization, the time component of the corresponding generalized retarded function $r_\cS$ is a \hbox{$\bC[[\hbar, \formg]]$-valued} generalized function on the braid arrangement with support the gray cone. The Dynkin element shown is $\mathtt{D}_\cS=\mathtt{D}_3=\mathtt{R}_{(12;3)}$. Its support consists of those configurations such that the event labeled by $3$ can be causally influenced by the events labeled by $1$ and $2$.}
	\label{fig:supp}
\end{figure}

\begin{remark}\label{Rem:dny}
More generally, Dynkin elements are certain Zie elements of generic real hyperplane arrangements, which are indexed by chambers of the corresponding adjoint arrangement. They were introduced by Aguiar-Mahajan in \cite[Equation 14.1]{aguiar2017topics}. Specializing to the braid arrangement, one recovers the type $A$ Dynkin elements $\mathtt{D}_\cS$. 
\end{remark}

In \cite{norledge2019hopf},  the following perspective on the Dynkin elements is given. The Hopf algebra $\Sig^\ast$ which is dual to $\Sig$ is realized as an algebra $\hat{\Sig}^\ast$ of piecewise-constant functions on the braid arrangement. Then its dual, in the sense of polyhedral algebras \cite[Theorem 2.7]{MR1731815}, is an algebra $\check{\Sig}^\ast$ of certain functionals of piecewise-constant functions on the adjoint braid arrangement, i.e. those coming from evaluating on permutohedral cones. We have the morphism of species
\[   
\check{\Sig}^\ast\to  (\textbf{L}^\vee)^\ast    
\]
defined by sending functionals to their restrictions to piecewise-constant functions on the complement of the hyperplanes. Since the multiplication of $\check{\Sig}^\ast$ corresponds to embedding hyperplanes, this morphism is the indecomposable quotient of $\check{\Sig}^\ast$ \cite[Theorem 4.5]{norledge2019hopf}. Then, in \cite[Proposition 5.1]{norledge2019hopf}, we see that taking the linear dual of this morphism recovers the Dynkin elements map,
\[        
 \textbf{L}^\vee\to \Sig, \qquad \cS\mapsto  \mathtt{D}_\cS
.\] 
(Here we have identified $\Sig^\ast=  \check{\Sig}^\ast$.) Therefore we obtain the following.

\begin{thm}[$\! \! ${\cite{norledge2019hopf}}]
The morphism of species $\textbf{L}^\vee\to \Zie$ is surjective. Therefore the Dynkin elements $\{\mathtt{D}_\cS: \cS\ \text{is a cell over $I$} \}$ span $\Zie$.
\end{thm}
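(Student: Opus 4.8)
The plan is to deduce the statement by pure duality from the two facts recorded in the paragraph just above: that the restriction-to-chambers map $f\colon \check{\Sig}^\ast \to (\textbf{L}^\vee)^\ast$ is the indecomposable quotient of $\Sig^\ast \cong \check{\Sig}^\ast$ \cite[Theorem 4.5]{norledge2019hopf}, and that its linear dual is the Dynkin map $f^\ast\colon \textbf{L}^\vee \to \Sig$, $\cS \mapsto \mathtt{D}_\cS$ \cite[Proposition 5.1]{norledge2019hopf}. Combined with the standard primitive--indecomposable duality for connected graded Hopf algebras, these identify the image of the Dynkin map with $\Zie$ exactly, which is the desired surjectivity.

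First I would make that duality explicit in the form needed. Since $\Sig$ is connected and each $\Sig[I]$ is finite-dimensional, the graded pairing $\langle -,-\rangle\colon \Sig\otimes \Sig^\ast\to \bC$ is nondegenerate and the product of $\Sig^\ast$ is dual to the coproduct of $\Sig$. For $x\in \Sig_+[I]$ and $a,b\in \Sig^\ast_+$ one has $\langle x, ab\rangle = \langle \Delta x, a\otimes b\rangle = \langle \bar\Delta x, a\otimes b\rangle$, where $\bar\Delta x := \Delta x - x\otimes \mathtt{1} - \mathtt{1}\otimes x$ is the reduced coproduct (the counit terms drop out because $a,b$ are positive). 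By nondegeneracy of the pairing on $\Sig_+\otimes \Sig_+$ this shows $\bar\Delta x = 0$ iff $x$ annihilates every decomposable, i.e.
\[
\Zie = \mathcal{P}(\Sig) = \big(\check{\Sig}^\ast_+\cdot \check{\Sig}^\ast_+\big)^{\perp},
\]
the annihilator under the pairing of the decomposables of $\Sig^\ast\cong\check{\Sig}^\ast$.

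Next I would invoke \cite[Theorem 4.5]{norledge2019hopf}: because multiplication in $\check{\Sig}^\ast$ is realized by embedding hyperplanes, a product of two positive-degree functionals is supported on the arrangement and hence restricts to zero on the chambers, while conversely every functional vanishing on all chambers is such a sum of products. This is precisely the assertion that $f$ is the indecomposable quotient, so $\ker f = \check{\Sig}^\ast_+\cdot \check{\Sig}^\ast_+$. The elementary identity $\operatorname{im}(f^\ast) = (\ker f)^{\perp}$ for the dual of a linear map (applied componentwise, using finite-dimensional reflexivity to identify $(\check{\Sig}^\ast)^\ast = \Sig$ and $((\textbf{L}^\vee)^\ast)^\ast = \textbf{L}^\vee$), together with the previous display and $f^\ast(\cS) = \mathtt{D}_\cS$, then gives
\[
\operatorname{span}\{\mathtt{D}_\cS\} = \operatorname{im}(f^\ast) = (\ker f)^{\perp} = \big(\check{\Sig}^\ast_+\cdot\check{\Sig}^\ast_+\big)^\perp = \Zie.
\]
Since each $\mathtt{D}_\cS$ already lies in $\Zie$ by \cite[Proposition 14.1]{aguiar2017topics}, this says exactly that $\textbf{L}^\vee\to \Zie$ is surjective and that the Dynkin elements span $\Zie$.

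The genuinely hard input is the middle step $\ker f = \check{\Sig}^\ast_+\cdot\check{\Sig}^\ast_+$, namely \cite[Theorem 4.5]{norledge2019hopf}. The inclusion $\supseteq$ is the easy ``multiplication embeds hyperplanes'' observation, but the reverse inclusion --- that a functional killing every chamber is necessarily decomposable --- is the substantive polyhedral-algebra computation, carried out via the polyhedral duality of \cite[Theorem 2.7]{MR1731815}. Everything else is formal: once that identification of the kernel with the decomposables is granted, the surjectivity onto $\Zie$ drops out purely by taking annihilators, and notably the failure of injectivity (the Dynkin elements are not independent for $|I|\geq 4$) is invisible to this argument, since it only controls the image of $f^\ast$.
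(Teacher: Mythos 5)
Your proposal is correct and follows essentially the same route as the paper, which likewise deduces surjectivity by combining \cite[Theorem 4.5]{norledge2019hopf} (the restriction map $\check{\Sig}^\ast\to(\textbf{L}^\vee)^\ast$ has the decomposables as its kernel) with \cite[Proposition 5.1]{norledge2019hopf} (its linear dual is the Dynkin map), the only difference being that you spell out the formal duality steps, $\operatorname{im}(f^\ast)=(\ker f)^{\perp}$ and $\mathcal{P}(\Sig)=\big(\check{\Sig}^\ast_+\cdot\check{\Sig}^\ast_+\big)^{\perp}$, that the paper leaves implicit in its ``Therefore we obtain the following.'' Your closing observation that the argument controls only the image, leaving the linear dependence of the Dynkin elements (the Steinmann relations) untouched, is also consistent with how the paper treats that issue separately.
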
 
\subsection{The Steinmann Relations}\label{stein}
The Dynkin elements span $\Zie$, but they are not linearly independent. The relations which are satisfied by the Dynkin elements are generated by relations known in physics as the Steinmann relations, introduced in \cite{steinmann1960zusammenhang}, \cite{steinmann1960}. 

Let a pair of \emph{overlapping channels} over $I$ be a pair $(S,T),(U,V)\in [I; \text{2}]$ of two-lump compositions of $I$ such that 
\[     
S\cap U\neq \emptyset \qquad \text{and}  \qquad      T\cap U \neq \emptyset   
.\]
Let $\cS_1$, $\cS_2$, $\cS_3$, $\cS_4$ be four cells over $I$ with $(S,T),(U,V)\in \cS_1$, and such that $\cS_2$, $\cS_3$, $\cS_4$ are obtained from $\cS_1$ by replacing, respectively, 
\[        (S,T), (U,V)   \mapsto  (T,S), (U,V)         \]
\[        (S,T), (U,V)   \mapsto  (T,S), (V,U)         \]
\[        (S,T), (U,V)   \mapsto  (S,T), (V,U).        \]
Then, by inspecting the definition of the Dynkin elements \textcolor{blue}{(\refeqq{eq:hbasisexp})}, we see that\footnote{\ we go through the argument for the basic $4$-point case in \autoref{ex:stein}, which is sufficient to exhibit the general phenomenon}
\[    
 \mathtt{D}_{\cS_1} -\mathtt{D}_{\cS_2}    +\mathtt{D}_{\cS_3}    -\mathtt{D}_{\cS_4}    =0.        
\]
In general, a \emph{Steinmann relation} is any relation between Dynkin elements obtained in this way, i.e. an alternating sum of four Dynkin elements which are obtained from each other by switching overlapping channels only. This definition of the Steinmann relations can be found in \cite[Seciton 4.3]{epstein1976general} (it is given slightly more generally there for paracells). 

An alternative characterization of the Steinmann relations in terms of the Lie cobracket of the dual Lie coalgebra $\Zie^\ast$ is \cite[Definition 4.2]{lno2019}. Here, the Steinmann relations appear in the same way one can arrive at generalized permutohedra, i.e. by insisting on type $A$ `factorization' in the sense of species-theoretic coalgebra structure. See \cite[Theorem 4.2 and Remark 4.2]{norledge2019hopf}. 


Thus, Dynkin elements satisfy the Steinmann relations. Moreover, they are sufficient.

\begin{thm}
The relations which are satisfied by the Dynkin elements are generated by the Steinmann relations. That is, if 
\[
\textbf{Stein}[I]
:=
\big \la
\mathtt{D}_{\cS_1}-\mathtt{D}_{\cS_2}+\mathtt{D}_{\cS_3}-\mathtt{D}_{\cS_4}
:
\mathtt{D}_{\cS_1}-\mathtt{D}_{\cS_2}+\mathtt{D}_{\cS_3}-\mathtt{D}_{\cS_4}=0 \text{ is a Steinmann relation}
\big \ra\footnote{\ angled brackets denote $\bC$-linear span} 
\] 
then
\[
\Zie\cong \bigslant{\textbf{L}^\vee}{\textbf{Stein}} 
.\] 
\end{thm}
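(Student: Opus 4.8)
The plan is to promote the surjection of the preceding theorem to an isomorphism by pinning down its kernel. Write $\phi\colon \textbf{L}^\vee\to \Zie$, $\cS\mapsto \mathtt{D}_\cS$, for the Dynkin map, which is surjective by the previous theorem, and recall that the computation just above the statement gives $\textbf{Stein}\subseteq \ker\phi$. Hence $\phi$ descends to a surjection $\bar\phi\colon \bigslant{\textbf{L}^\vee}{\textbf{Stein}}\twoheadrightarrow \Zie$, and the whole theorem reduces to showing that $\bar\phi$ is injective, i.e.\ that $\ker\phi\subseteq \textbf{Stein}$; as each component is finite dimensional this is equivalent to the numerical estimate $\dim\big(\bigslant{\textbf{L}^\vee}{\textbf{Stein}}\big)[I]\leq \dim\Zie[I]$.

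Rather than attack this rank computation head on, I would dualize and exploit the adjoint braid arrangement realization recalled in \autoref{adjoint}. Dualizing $\bar\phi$ produces an injection $\Zie^\ast\hookrightarrow \big(\bigslant{\textbf{L}^\vee}{\textbf{Stein}}\big)^\ast=\textbf{Stein}^\perp\subseteq (\textbf{L}^\vee)^\ast$, and $\bar\phi$ is an isomorphism precisely when this map is onto $\textbf{Stein}^\perp$. Now $\phi$ was obtained as the linear dual of the restriction morphism $r\colon \check{\Sig}^\ast\to (\textbf{L}^\vee)^\ast$, which is the indecomposable quotient of $\check{\Sig}^\ast\cong \Sig^\ast$; since $\Sig$ is connected and cocommutative we have $\mathcal{P}(\Sig)^\ast\cong Q(\Sig^\ast)$, so $\image(r)$ is exactly the copy of $\Zie^\ast$ sitting inside $(\textbf{L}^\vee)^\ast$, and $\image(\phi^\ast)=\image(r)$. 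The theorem is therefore equivalent to the single identity
\[
\image(r)=\textbf{Stein}^\perp .
\]
The inclusion $\image(r)\subseteq \textbf{Stein}^\perp$ is dual to $\textbf{Stein}\subseteq \ker\phi$ and is already available; the entire content is the reverse inclusion, namely that a functional on cells satisfying the Steinmann relations necessarily arises by restricting a piecewise-constant functional in $\check{\Sig}^\ast$.

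I expect this reverse inclusion to be the crux and the main obstacle. I would prove it as a patching statement on the adjoint braid arrangement: a pair of overlapping channels $(S,T),(U,V)$ determines a codimension-two flat around which sit exactly the four cells $\cS_1,\cS_2,\cS_3,\cS_4$ of the corresponding Steinmann relation, and the vanishing of $\mathtt{D}_{\cS_1}-\mathtt{D}_{\cS_2}+\mathtt{D}_{\cS_3}-\mathtt{D}_{\cS_4}$ is exactly the local compatibility condition allowing chamber values to be glued across that flat. The real work is to show that these codimension-two conditions generate all obstructions, so that a globally defined piecewise-constant functional exists if and only if every Steinmann relation is satisfied; this is the point at which the channel-switching definition of \autoref{stein} must be matched with the cobracket characterization of $\Zie^\ast$. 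As an independent sanity check, and a fallback should the patching not be pushed to full generality, one can verify the dimension estimate directly for small $\lvert I\rvert$: the number of cells is the sequence A034997, while $\dim\Zie[n]=n!\,[x^n]\big(-\log(2-e^x)\big)$ since $\Zie=\textbf{Lie}\,\boldsymbol{\circ}\,\textbf{E}^\ast_+$, and the predicted rank of the Steinmann system is the difference of the two.
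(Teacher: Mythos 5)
Your reduction is correct, and it in fact reconstructs the skeleton of the argument that the paper outsources: the proof given in the paper is a one-line combination of \cite[Theorem 4.3]{lno2019} with \cite[Theorems 4.2 and 4.5]{norledge2019hopf}, and those citations sit exactly where you located the content. The linear algebra is right: since each component is finite dimensional and $\phi=r^\ast$ for the restriction morphism $r\colon \check{\Sig}^\ast\to (\textbf{L}^\vee)^\ast$ of \autoref{adjoint}, one has $\ker\phi=\image(r)^\perp$; the identification $\image(r)\cong \Zie^\ast$ follows from $r$ being the indecomposable quotient of $\check{\Sig}^\ast$ together with the duality between primitives and indecomposables for the connected cocommutative $\Sig$; and the theorem is thereby equivalent to $\image(r)=\textbf{Stein}^\perp$, with the inclusion $\image(r)\subseteq \textbf{Stein}^\perp$ being dual to the already-established fact that the Dynkin elements satisfy the Steinmann relations.

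However, the reverse inclusion, which you correctly flag as the crux, is left unproven, and it is not a routine local-to-global patching statement: it is essentially the main theorem of \cite{lno2019}. Your sketch treats the Steinmann relation at an overlapping pair $(S,T),(U,V)$ as the gluing condition across the corresponding codimension-two flat and then asserts that such conditions ``generate all obstructions,'' but that assertion is the theorem restated, not an argument. Two concrete difficulties would surface if you tried to push it through. First, elements of $\image(r)$ are functionals spanned by evaluations on permutohedral cones, so from purely local compatibility data around the special flats (only those indexed by \emph{overlapping} channels carry a relation, and the adjoint braid arrangement is not simplicial, so the local chamber geometry varies) you must manufacture a global permutohedral-cone decomposition; this is exactly where \cite[Theorem 4.2]{norledge2019hopf}, matching the channel-switching relations of \autoref{stein} with the generalized-permutohedron/cobracket characterization of $\Zie^\ast$ in \cite[Definition 4.2]{lno2019}, does real work, and it is not a formal consequence of codimension-two consistency. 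Second, your fallback dimension count can only confirm finitely many small cases: the number of cells (A034997) has no known closed form, so comparing it against $\dim\Zie[I]$ cannot be promoted to a general proof. In short, the proposal correctly and cleanly reduces the theorem to the precise statement the paper imports from its references, but it does not prove that statement.
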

\begin{proof}
This follows by combining \cite[Theorem 4.3]{lno2019} with \cite[Theorems 4.2 and 4.5]{norledge2019hopf}. 
\end{proof}

\begin{ex}\label{ex:stein}
Let us give the basic $4$-point example $I=\{1,2,3,4\}$, which takes place on a square facet of the type $A$ coroot solid \cite[Figure 1]{lno2019}. Consider the following four cells over $I$ (we have marked where they differ, the names `$s$-channel' and `$u$-channel' are from physics and refer to Mandelstam variables),
\[   
\cS_1=\big\{\underbrace{(23,14)}_{u\text{-channel}}, (12,34), (1,234), (13,24), (13,24), (134,2), (3,124)  \}    
\]
\[ 
\cS_2=\big\{(23,14), \underbrace{(34,12)}_{s\text{-channel}}, (1,234), (13,24), (13,24), (134,2), (3,124)  \big \} 
\] 
\[ 
\cS_3=\big\{\underbrace{(14,23)}_{u\text{-channel}}, (34,12), (1,234), (13,24), (13,24), (134,2), (3,124)  \big \} 
\]
\[ 
\cS_4=\big\{(14,23), \underbrace{(12,34)}_{s\text{-channel}}, (1,234), (13,24), (13,24), (134,2), (3,124)  \big \}
.\] 
The $s$-channel and the $u$-channel overlap, and so we should now have
\[    
\mathtt{D}_{\cS_1}-\mathtt{D}_{\cS_2}+\mathtt{D}_{\cS_3}-\mathtt{D}_{\cS_4}=0 
.\]    
To see this, let us assume throughout that $\tH_{F}$ appears in the $\tH$-basis expansion \textcolor{blue}{(\refeqq{eq:hbasisexp})} of $\mathtt{D}_{\cS_1}$, i.e. $\bar{F} \subseteq \cS_1$. Then we have
\begin{equation*}
\bar{F} \subseteq \cS_1 \setminus \{  (12,34), (23,14)  \}   \quad \implies \quad      \bar{F}\subseteq\cS_1, \   \cS_2,\ \cS_3,\ \cS_4.  \tag{$\spadesuit$} \label{1}
\end{equation*}
If $\bar{F} \nsubseteq \cS_1 \setminus \{  (12,34), (23,14)  \}$, then either $(12,34)\in \bar{F}$ or $(23,14)\in \bar{F}$ but not both, since the channels overlap. We then have
\begin{equation*}
(12,34)\in \bar{F}   \implies    \bar{F}\subseteq\cS_1, \ \bar{F}\nsubseteq\cS_2, \ \bar{F}\nsubseteq\cS_3, \   \bar{F}\subseteq\cS_4.    \tag{$\varheart$}  \label{2}
\end{equation*}
We also have
\begin{equation*}
(23,14)\in \bar{F} \implies      \bar{F}\subseteq\cS_1, \ \bar{F}\subseteq\cS_2, \ \bar{F}\nsubseteq\cS_3, \   \bar{F}\nsubseteq\cS_4   .              \tag{$\vardiamond$}  \label{3}
\end{equation*}
Notice that in all three cases  \textcolor{blue}{(\refeqq{1})}, \textcolor{blue}{(\refeqq{2})}, \textcolor{blue}{(\refeqq{3})}, the prefactors of $\tH_{F}$ sum to zero in the four term alternating sum of the Steinmann relation. 
\end{ex}

\begin{remark}
In \cite{norledge2019hopf}, the Steinmann condition is seen to be equivalent to the restriction to generalized permutohedra in a certain local (or spherical) sense. Ocneanu \cite{oc17} and Early \cite{early2019planar} have studied an affine version of the Steinmann condition, in the context of higher structures and matroid subdivisions. Here, one observes that the (translated) hyperplanes of the adjoint braid arrangement for the Mandelstam variables give three subdivisions of the hypersimplex $\Delta(2,4)$ (octahedron). 
\begin{figure}[H]
\centering
\includegraphics[scale=0.5]{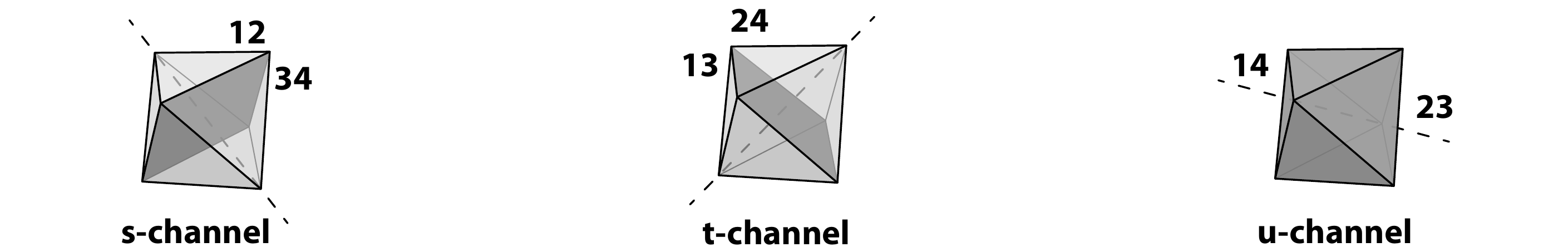}	
\label{fig:affine}
\end{figure}
\noindent See \cite{borges2019generalized}, \cite{cachazo2019planar} for the closely related study of generalized Feynman diagrams in generalized biadjoint $\Phi^3$-theory.
\end{remark}

\subsection{Ruelle's Identity} \label{sec:Ruelle's Identity and the GLZ Relation}
Since the Dynkin elements span $\Zie$, we can ask what is the description of the Lie bracket of $\Zie$ in terms of the Dynkin elements. The answer is known in the physics literature as Ruelle's identity.

In order to state Ruelle's identity, we need to notice the following. For $S\sqcup T=I$, if $\cS_1$ is a cell over $S$ and $\cS_2$ is a cell over $T$, then $\cS_1 \sqcup \cS_2$ describes a collection of codimension one faces of the adjoint braid arrangement which are supported by the hyperplane orthogonal to $\lambda_{ST}$ (in \cite{lno2019}, such faces were called \emph{Steinmann equivalent}). A cell $\cS^{[S,T]}$ over $I$ which satisfies
\[
\cS^{[S,T]} \supseteq \cS_1\sqcup\cS_2
\qquad \text{and} \qquad
(S,T)\in \cS^{[S,T]}
\]
corresponds to a chamber arrived at by moving (by an arbitrarily small amount) from an interior point of a face of $\cS_1 \sqcup \cS_2$ in the $\lambda_{ST}$ direction. In particular, such cells always exist, but they are not unique (the Steinmann relations exactly quotient out this ambiguity). The chamber obtained by moving in the opposite direction corresponds to the cell obtained by replacing $(S,T)$ with $(T,S)$ in $\cS^{[S,T]}$.

\begin{prop}[Ruelle's Identity {\cite[Equation 6.6]{Ruelle}}] \label{prop:ruelle}
For $S\sqcup T=I$, let $\cS_1$ be a cell over $S$ and let $\cS_2$ be a cell over $T$. Let $\cS^{[S,T]}$ be a cell over $I$ which satisfies 
\[
\cS^{[S,T]} \supseteq \cS_1\sqcup\cS_2
\qquad \text{and} \qquad
(S,T)\in \cS^{[S,T]}
.\] 
Let $\cS^{[T,S]}$ denote the cell obtained by replacing $(S,T)$ with $(T,S)$ in $\cS^{[S,T]}$. Then the Lie bracket of $\Zie$ is given by
\begin{equation}\label{eq:ruelleiden}
[\mathtt{D}_{\cS_1},\mathtt{D}_{\cS_2}]   
=   
\mathtt{D}_{\cS^{[S,T]}}-\mathtt{D}_{\cS^{[T,S]}}
.
\end{equation} 
\end{prop}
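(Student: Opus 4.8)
The plan is to expand both sides in the $\tH$-basis and match them term by term. Using the definition \textcolor{blue}{(\refeqq{eq:hbasisexp})} of the Dynkin elements and the fact that the Lie bracket of $\Zie\subseteq\Sig$ is the commutator, whose product is concatenation of compositions, I would first record
\[
[\mathtt{D}_{\cS_1},\mathtt{D}_{\cS_2}]
=
\sum_{\substack{\bar F\subseteq\cS_1\\ \bar G\subseteq\cS_2}}(-1)^{l(F)+l(G)}\big(\tH_{FG}-\tH_{GF}\big),
\]
the sum running over compositions $F$ of $S$ and $G$ of $T$; here the two leading minus signs of the Dynkin elements cancel, and $\tH_F\tH_G=\tH_{FG}$, $\tH_G\tH_F=\tH_{GF}$.

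For the right-hand side, note that $\cS^{[S,T]}$ and $\cS^{[T,S]}$ agree in every entry except that the former contains $(S,T)$ and the latter $(T,S)$. Consequently a composition $H$ of $I$ survives in $\mathtt{D}_{\cS^{[S,T]}}-\mathtt{D}_{\cS^{[T,S]}}$ precisely when $\cF_{\bar H}$ meets this distinguishing pair, i.e. when $\cF_{\bar H}$ contains exactly one of $(S,T)$, $(T,S)$ (it cannot contain both). Now $(S,T)\in\cF_{\bar H}$ is equivalent to $H$ coarsening to $(T,S)$, i.e. to $H=GF$ for a unique $F$ over $S$ and $G$ over $T$; dually $(T,S)\in\cF_{\bar H}$ forces $H=FG$. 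Thus the surviving terms are indexed by the \emph{same} pairs $(F,G)$ as the left-hand side, with the compositions $GF$ contributing $-(-1)^{l(H)}\tH_H$ (inherited from $\cS^{[S,T]}$) and the compositions $FG$ contributing $+(-1)^{l(H)}\tH_H$ (inherited from $\cS^{[T,S]}$).

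The crux is then the following combinatorial lemma: for $H=GF$ one has $\cF_{\bar H}\subseteq\cS^{[S,T]}$ if and only if $\cF_{\bar F}\subseteq\cS_1$ and $\cF_{\bar G}\subseteq\cS_2$, and symmetrically for $H=FG$ with $\cS^{[T,S]}$. I would prove this through the adjoint braid arrangement realization. Writing $\bar H=\bar F\,\bar G$, every two-lump coarsening of $\bar F\,\bar G$ is obtained by cutting its lump sequence, and these fall into three families: cuts inside the $S$-block, giving $(U,(S\setminus U)\cup T)$ for $(U,S\setminus U)\in\cF_{\bar F}$; the central cut $(S,T)$; and cuts inside the $T$-block, giving $(S\cup W,T\setminus W)$ for $(W,T\setminus W)\in\cF_{\bar G}$. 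Representing $\cS_1\sqcup\cS_2$ by a generic sum-zero point $(y,z)$ with $y$ in the chamber of $\cS_1$ and $z$ in that of $\cS_2$, and perturbing to $(y,z)+\epsilon\lambda_{ST}$, for small $\epsilon$ the sign of $\sum_{i\in U}(y,z)_i$ equals the sign of $\sum_{i\in U}y_i$, so membership of $(U,(S\setminus U)\cup T)$ in $\cS^{[S,T]}$ reduces to membership of $(U,S\setminus U)$ in $\cS_1$; the $T$-block cuts reduce similarly via $z$, and the central cut is exactly the hypothesis $(S,T)\in\cS^{[S,T]}$. For $H=FG$ one perturbs instead by $-\epsilon\lambda_{ST}$, which flips only the central term to $(T,S)$ while leaving the block signs unchanged, so the same two conditions $\cF_{\bar F}\subseteq\cS_1$, $\cF_{\bar G}\subseteq\cS_2$ emerge. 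Genericity of $y,z$ (no partial sum vanishing) is precisely the assertion that $\cS_1,\cS_2$ are chambers.

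Combining the lemma with the index bookkeeping above matches the two families of surviving terms on the right — the $GF$-family and the $FG$-family — with the $-\tH_{GF}$ and $+\tH_{FG}$ parts of the left, carrying identical signs $(-1)^{l(F)+l(G)}$, which proves \textcolor{blue}{(\refeqq{eq:ruelleiden})}. I expect the main obstacle to be the lemma: one must pin down the embedding $\cS_1\sqcup\cS_2\hookrightarrow\cS^{[S,T]}$ of the block cuts and verify that the perturbation argument is uniform in $\epsilon$, handling the genericity hypothesis cleanly. As a byproduct, since the left-hand side is manifestly independent of the chosen extension $\cS^{[S,T]}$, the identity also shows that $\mathtt{D}_{\cS^{[S,T]}}-\mathtt{D}_{\cS^{[T,S]}}$ does not depend on this (non-unique) choice.
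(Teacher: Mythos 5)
Your argument is correct, but it takes a genuinely different route from the paper: the paper's proof of \autoref{prop:ruelle} is essentially a citation, observing that the Lie bracket placed on the adjoint braid arrangement realization of $\Zie$ in \cite[Section 5.2]{lno2019} coincides with \textcolor{blue}{(\refeqq{eq:ruelleiden})}, and mentioning the explicit check (as in \cite[Section 4.3]{epstein1976general}) only as an alternative; you actually carry out that explicit check in full. Your bookkeeping is right: the commutator expands as you state, the terms of $\mathtt{D}_{\cS^{[S,T]}}-\mathtt{D}_{\cS^{[T,S]}}$ with $\cF_{\bar H}$ disjoint from $\{(S,T),(T,S)\}$ cancel pairwise, the survivors factor uniquely as $H=GF$ or $H=FG$ with $l(H)=l(F)+l(G)$, and your lemma is exactly what identifies the index sets on the two sides. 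One simplification worth noting: the geometric perturbation argument for the lemma --- and with it the genericity/uniformity-in-$\epsilon$ worry you flag as the main obstacle --- is dispensable. Every two-lump coarsening of $\bar F\bar G$ other than the central cut $(S,T)$ is unmixed (its first lump is either a proper union of lumps of $\bar F$ inside $S$, or contains all of $S$), so its membership in $\cS^{[S,T]}$ is pinned down directly by the combinatorial hypothesis $\cS^{[S,T]}\supseteq\cS_1\sqcup\cS_2$, under the identification of $(U,U')\in\cS_1$ with $(U,U'\sqcup T)\in[I;\text{2}]$ and of $(W,W')\in\cS_2$ with $(W,W'\sqcup S)$, together with $(S,T)\in\cS^{[S,T]}$; the case $H=FG$ uses the same unmixed data plus $(T,S)\in\cS^{[T,S]}$, with the ``exactly one of each opposite pair'' property of cells handling the complementary cuts. (If you prefer to keep the geometric version, uniformity is automatic anyway, since only finitely many linear functionals occur.) As for what each approach buys: the paper's citation situates the identity as the defining bracket of the geometric realization of $\Zie$, whereas your computation is self-contained inside $\Sig$ and, as you observe, shows en passant that the right-hand side is independent of the non-unique choice of $\cS^{[S,T]}$ --- precisely the ambiguity that the paper notes is quotiented out by the Steinmann relations.
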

\begin{proof}
This result is clear from \cite[Section 5.2]{lno2019}; the Lie bracket which was given to the adjoint braid arrangement realization of $\textbf{Z}\textbf{ie}$ (denoted there by $\Gam$) coincides with \textcolor{blue}{(\refeqq{eq:ruelleiden})}. Alternatively, we can just explicitly check, as in \cite[Section 4.3]{epstein1976general}. 
\end{proof}


\section{$\Sig$ as a Hopf $\textbf{E}$-Algebra} \label{Sig as a Hopf E-Algebra} 

We now recall the Steinmann arrows, which are (or we interpret as) actions of the exponential species $\textbf{E}$ on $\Sig$. We show that they give $\Sig$ the structure of a Hopf $\textbf{E}$-algebra (=Hopf monoid internal to $\textbf{E}$-modules) in two ways, and thus the primitive part $\Zie=\mathcal{P}(\Sig)$ the structure of a Lie $\textbf{E}$-algebra in two ways. 
\subsection{Derivations and Coderivations of $\Sig$}  \label{Derivations and Coderivations of Sig} 

Let $Y=\{  y_1,\dots, y_r \}$ be a finite set with cardinality $r\in \bN$. We think of $Y$ as having `color' $\formg$ (physically, the coupling constant). Given a species $\textbf{p}$, we have the \hbox{$Y$-\emph{derivative}} $\textbf{p}^{[Y]}$ of $\textbf{p}$, which is the species given by
\[
\textbf{p}^{[Y]}[I] :=   \textbf{p}[Y \sqcup I] \qquad \text{and} \qquad  \textbf{p}^{[Y]}[\sigma] :=   \textbf{p}[\text{id}_Y \sqcup \sigma] 
.\]
A \emph{raising operator} $u$ on $\textbf{p}$ is a morphism of species of the form\footnote{\ for raising operators, we often abbreviate $u(\mathtt{a}):=u_I(\mathtt{a})$}
\[
u: \textbf{p} \to \textbf{p}^{[Y]}, \qquad \ta \mapsto u(\ta)
.\]

\begin{remark}
Moreover, there is an endomorphism algebra of raising operators \cite[Section 2.4]{norledge2020species}, which features when considering modules internal to species, see \cite[Section 5.1]{norledge2020species}.
\end{remark}

As a particular example of the set $Y$, we have the set of formal symbols $[r]:=\{ \ast_1, \dots, \ast_r \}$ (formally, we have picked a section of the decategorification functor $Y\mapsto r$). We often abbreviate $\ast=\ast_1$, also $\ast=\{\ast\}$ and $\ast I = \{ \ast \}\sqcup I$. The \emph{derivative} $\textbf{p}'$ of $\textbf{p}$ is the $Y$-derivative in the singleton case $Y=\{\ast\}$, thus
\[
\textbf{p}'[I]:= \textbf{p}^{[\ast]}[I]= \textbf{p}[\ast I]
.\]
Following \cite[Section 8.12.1]{aguiar2010monoidal}, an \emph{up operator} $u$ on $\textbf{p}$ is a raising operator of the form $u:\textbf{p} \to \textbf{p}'$. Writing $u_\ast(\ta)=u(\ta)$ in order to specify the name of the adjoined singleton, we call an up operator \emph{commutative} if
\[
	u_{\ast_2}(u_{\ast_1}(\mathtt{a})) = u_{\ast_1} (u_{\ast_2}(\mathtt{a}))
	.
\]
Raising operators can be obtained by iteratively applying commutative up operators, see \cite[Section 5.4]{norledge2020species}. Following \cite[Section 8.12.4]{aguiar2010monoidal}, an up operator on an algebra $\textbf{a}$ is called an \emph{up derivation} if
\begin{equation}\label{eq:upder}
	u\big(\mu_{S,T}(\mathtt{a} \otimes \mathtt{b})\big) 
	=\mu_{\ast S,T}\big (u(\mathtt{a}) \otimes    \mathtt{b}\big)+\mu_{S,\ast T}\big(\mathtt{a} \otimes u(\mathtt{b})\big)
\end{equation}
(it follows that $u(\mathtt{1}_{\ta})=0$ if $\textbf{a}$ is unital) and an up operator on a coalgebra $\textbf{c}$ is called an \emph{up coderivation} if
\begin{equation}\label{eq:upcoder}
	\big(  
	u\otimes \text{id} + \text{id} \otimes u
	\big)
	\circ  
	\Delta_{S,T}(\mathtt{a})
	=
	\Delta_{\ast S,T} \big(u(\mathtt{a})\big)
	+
	\Delta_{S,\ast T} \big(u(\mathtt{a})\big)
	.
\end{equation}
An \emph{up biderivation} on a bialgebra $\textbf{h}$ is an up operator which is both an up derivation and an up coderivation. The data of an up (co/bi)derivation on a connected species $\textbf{h}$ is equivalent to giving $\textbf{h}$ the structure of an \hbox{$\textbf{L}$-(co/Hopf)algebra} (= an (co/Hopf)monoid internal to $\textbf{L}$-modules). The data of a commutative up (co/bi)derivation on $\textbf{h}$ is equivalent to giving $\textbf{h}$ the structure of an \hbox{$\textbf{E}$-(co/Hopf)algebra}. See \cite[Section 5]{norledge2020species} for more details and proofs.

Thus, an up derivation $u$ of $\Sig$ is a morphism of species 
\[
u:\Sig\to \Sig',
\qquad 
\tH_F\mapsto u(\tH_F) 
\qquad \quad \text{such that} \qquad 
u(\tH_F \tH_G )=
u(\tH_F)\tH_G+\tH_F u(\tH_G)
.\] 
An up derivation of $\Sig$ is determined by its values on the elements $\tH_{(I)}$, $I\in \sfS$, since then
\[
u(\tH_F)= u(\tH_{(S_1)})\tH_{(S_2)}\dots \tH_{(S_k)} 
+\ \,  \cdots\ \,  +
\tH_{(S_1)} \dots \tH_{(S_{k-1})}   u(\tH_{(S_k)}) 
.\] 
An up derivation must have $u(\tH_{(\, )})=0$, since $\mathtt{1}_{\Sig}= \tH_{(\, )}$. An up coderivation $u$ of $\Sig$ is a morphism of species
\[
u:\Sig\to \Sig', \qquad \tH_F\mapsto u(\tH_F) 
\qquad \quad \text{such that} \qquad
\Delta_{\ast S,T} \big(  u( \tH_F )\big)=
u(\tH_{F|_S}) \otimes \tH_{F|_T}
.\]
In particular, an up coderivation must have
\[
\Delta_{\ast S,T}\big(u( \tH_{(I)} )\big)=
u(\tH_{(S)}) \otimes \tH_{(T)}  
.\]
Therefore, an up biderivation $u$ of $\Sig$ must have
\[
u(\tH_{(i)})=  a_1 \tH_{(\ast,i)}+a_2 \tH_{(\ast i)}+a_3 \tH_{(i, \ast)} 
\qquad \quad \text{where} \qquad
a_1+a_2+a_3=0\in \bC
.\]
Motivated by this, given $a,b\in \bC$, we define an up derivation $u_{a,b}$ of $\Sig$ by
\begin{equation}\label{eq:defbider}
u_{a,b}:\Sig\to \Sig'
,\qquad 
u_{a,b}(\tH_{(I)}):
= -a \tH_{(\ast,I)}+(a+b) \tH_{(\ast I)}-b \tH_{(I, \ast)}
.
\end{equation}
Towards an explicit description, consider the following example for $I=\{1,2,3\}$,
\begin{align*}
u_{a,b}(\tH_{(12,3)})
&
=\, u_{a,b}(\tH_{(12)})\tH_{(3)}+ \tH_{(12)}u_{a,b}(\tH_{(3)})\\
&
=(-a\tH_{(\ast, 12)}+(a+b)\tH_{(\ast 12)}-b\tH_{(12,\ast)})\tH_{(3)}+
\tH_{(12)}(-a\tH_{(\ast, 3)}+(a+b)\tH_{(\ast 3)}-b\tH_{(3,\ast)})\\
&
=-a\tH_{(\ast, 12,3)}+(a+b)\tH_{(\ast 12,3)}-b\tH_{(12,\ast,3)})
-a\tH_{(12,\ast, 3)}+(a+b)\tH_{(12,\ast 3)}-b\tH_{(12,3,\ast)}
.
\end{align*}
From this, we see that in general
\[
u_{a,b}(\tH_F)
= 
\sum_{1\leq m\leq k}
-a\mathtt{H}_{(S_1,\dots   ,\ast, S_m,\dots,S_k)}
+(a+b)\mathtt{H}_{(S_1,\dots, \ast S_m,\dots,S_k)}
-b\mathtt{H}_{(S_1,\dots, S_m,\ast,\dots,S_k)}
.\]

\begin{thm}\label{steinmannarrowaredercoder}
Given $a,b\in \bC$, the morphism of species 
\[
\Sig\to \Sig', 
\qquad 
\tH_F \mapsto u_{a,b}(\tH_{F})
\]
is an up biderivation of $\Sig$ (it follows this gives $\Sig$ the structure of a Hopf $\textbf{L}$-algebra).
\end{thm}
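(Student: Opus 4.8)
The plan is to verify the two halves of the biderivation condition separately: the up-derivation condition \eqref{eq:upder} is essentially forced by the definition, while the up-coderivation condition \eqref{eq:upcoder} carries all of the content. Note that commutativity of $u_{a,b}$ is \emph{not} required here — the theorem only claims a Hopf $\textbf{L}$-algebra structure, which is consistent with arbitrary $a,b$.

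First I would dispatch \eqref{eq:upder}. As an associative algebra under concatenation, $\Sig$ is freely generated by its one-lump pieces $\tH_{(S)}$ (concatenation freely assembles a composition from the linear sequence of its lumps), so the prescription \eqref{eq:defbider} extends uniquely along the Leibniz rule, and the resulting map is exactly the closed form for $u_{a,b}(\tH_F)$ displayed before the theorem. To see \eqref{eq:upder} directly from that closed form, write $FG=(S_1,\dots,S_k,T_1,\dots,T_l)$ for $F$ a composition of $S$ and $G$ of $T$; the sum over the $k+l$ lump-positions of $FG$ splits into the first $k$ positions, which reassemble into $\mu_{\ast S,T}(u_{a,b}(\tH_F)\otimes \tH_G)$, and the last $l$ positions, which reassemble into $\mu_{S,\ast T}(\tH_F\otimes u_{a,b}(\tH_G))$. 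This is routine.

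The heart of the proof is \eqref{eq:upcoder}. Fix a composition $F$ and a decomposition $I=S\sqcup T$, and compute $\Delta_{\ast S,T}\big(u_{a,b}(\tH_F)\big)$ term by term. Each summand of $u_{a,b}(\tH_F)$ acts at a single lump $S_m$ in one of three ways (new lump before, merge, new lump after) with coefficients $-a$, $a+b$, $-b$. Restricting any such summand to $T$ always returns $\tH_{F|_T}$, because $\ast\notin T$ and none of the three operations alters the $T$-restriction; hence the second tensor factor is constantly $\tH_{F|_T}$. For the first factor, restriction to $\ast S$ has two regimes. When $S_m\cap S\neq\emptyset$ the lump survives, and the three operations restrict precisely to the three operations of $u_{a,b}$ performed at the lump $S_m\cap S$ of $F|_S$; summing over all such $m$ reproduces $u_{a,b}(\tH_{F|_S})$ bijectively. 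When $S_m\cap S=\emptyset$ the lump is deleted by restriction, and here is the key point: all three operations at $S_m$ restrict to \emph{one and the same} composition ($\ast$ as a lone lump in the vacated slot), so their contributions carry total coefficient $-a+(a+b)-b=0$ and cancel for each such $m$ individually. Therefore $\Delta_{\ast S,T}\big(u_{a,b}(\tH_F)\big)=u_{a,b}(\tH_{F|_S})\otimes \tH_{F|_T}$, and symmetrically $\Delta_{S,\ast T}\big(u_{a,b}(\tH_F)\big)=\tH_{F|_S}\otimes u_{a,b}(\tH_{F|_T})$. Adding the two identities gives exactly $\big(u_{a,b}\otimes\id+\id\otimes u_{a,b}\big)\circ\Delta_{S,T}(\tH_F)$, which is \eqref{eq:upcoder}.

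The main obstacle is precisely the degenerate regime $S_m\cap S=\emptyset$: a ``merge'' on the ambient composition collapses to a ``new lump'' on the restriction, so the three operations become indistinguishable after restricting, and the identity is saved only by the vanishing of their combined coefficient. This is exactly the constraint $a_1+a_2+a_3=0$ that dictated the shape of \eqref{eq:defbider}, so the coefficients are not incidental but are what make $u_{a,b}$ a coderivation at all. Having verified both \eqref{eq:upder} and \eqref{eq:upcoder}, the map $u_{a,b}$ is an up biderivation, and the parenthetical claim follows from the stated equivalence between up biderivations on a connected species and Hopf $\textbf{L}$-algebra structures (\cite{norledge2020species}).
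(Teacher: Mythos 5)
Your proposal is correct and follows essentially the same route as the paper: the derivation property holds by construction (freeness of $\Sig$ on the one-lump elements $\tH_{(I)}$), and the coderivation property is verified by the identical term-by-term computation, splitting lump-positions into those with $S_m\cap S\neq\emptyset$, which reassemble $u_{a,b}(\tH_{F|_S})$, and the degenerate ones with $S_m\cap S=\emptyset$, where all three operations restrict to the same composition and cancel via $-a+(a+b)-b=0$. Your explicit treatment of the second factor $\tH_{F|_T}$ and of the symmetric identity $\Delta_{S,\ast T}$ is slightly more complete than the paper's write-up, but it is the same argument.
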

\begin{proof}
In the following, for $F=(S_1,\dots,S_k)$ a composition of $I$ and $S\subseteq I$, we write
\[ 
(U_1,\dots,U_k):=(S_1\cap S, \dots ,S_k\cap S)
.\]
In general, $(U_1,\dots,U_k)$ is a decomposition of $I$. 

First, $u_{a,b}$ defines a derivation of $\Sig$ by construction. To see that $u_{a,b}$ also defines a coderivation, we have
\begin{align*}
\Delta_{\ast S,T} \big(  u_{a,b}( \tH_F )\big)\ =\ 
&
\ \ \ \   \Delta_{\ast S,T} \Bigg( \sum_{1\leq m\leq k}
-a\mathtt{H}_{(S_1,\dots,\ast, S_m,\dots,S_k)}
+(a+b)\mathtt{H}_{(S_1,\dots, \ast S_m,\dots,S_k)}
-b\mathtt{H}_{(S_1,\dots, S_m,\ast,\dots,S_k)}\Bigg )\\[7pt]
=\
&
\ \ \ \    
\Bigg(\sum_{1\leq m\leq k}-a\mathtt{H}_{(U_1,\dots,\ast, U_m,\dots,U_k)_+}+
(a+b)\mathtt{H}_{(U_1,\dots, \ast U_m,\dots,U_k)_+}
-b\tH_{(U_1,\dots, U_m,\ast,\dots,U_k)_+}\Bigg)\otimes \tH_{F|_T}\\[7pt]
=\
&
\ \ \ \     
\Bigg(\sum_{\substack{1\leq m\leq k \\[2pt] U_m \neq \emptyset}}-a\mathtt{H}_{(U_1,\dots,\ast, U_m,\dots,U_k)_+}+
(a+b)\mathtt{H}_{(U_1,\dots, \ast U_m,\dots,U_k)_+}
-b\tH_{(U_1,\dots, U_m,\ast,\dots,U_k)_+}\Bigg)\otimes \tH_{F|_T}\\
&+\ \underbrace{ 
\Bigg(\sum_{\substack{1\leq m\leq k\\[2pt]U_m=\emptyset}}\big (-a+(a+b)-b\big)\, \mathtt{H}_{(U_1,\dots,U_{m-1},\ast,U_{m+1},\dots,U_k)_+}\Bigg)}_{=0} \otimes\, \tH_{F|_T}\\[7pt]
=\  
&
\ \ \ \ u(\tH_{F|_S})\otimes \tH_{F|_T}.
\end{align*}
Therefore $u_{a,b}$ is a biderivation of $\Sig$. 
\end{proof}

\subsection{The Steinmann Arrows} \label{sec:The Steinmann Arrows}
We now recall the Steinmann arrows for $\Sig$, whose precise definition is due to Epstein-Glaser-Stora \cite[p.82-83]{epstein1976general}. The Steinmann arrows were first considered by Steinmann in settings where $\Sig$ is represented as operator-valued distributions \cite[Section 3]{steinmann1960}.

Let the \emph{retarded Steinmann arrow} be the up biderivation of $\Sig$ given by
\begin{equation}\label{steindown}
\ast\downarrow(-) :\Sig\to \Sig', 
\qquad    
\ast\downarrow \tH_F:=  u_{1,0}(\tH_F)=
\sum_{1\leq m\leq k}
-\mathtt{H}_{(S_1,\dots,\ast, S_m,\dots,S_k)}
+\mathtt{H}_{(S_1,\dots, \ast S_m,\dots, S_k)} 
.
\end{equation} 
Let the \emph{advanced Steinmann arrow} be the up biderivation of $\Sig$ given by
\begin{equation}\label{steinup}
\ast\uparrow(-) : \Sig\to \Sig' , 
\qquad    
\ast\uparrow \tH_F:=  u_{0,1}(\tH_F)=
\sum_{1\leq m\leq k}
\mathtt{H}_{(S_1,\dots, \ast S_m,\dots, S_k)}
-\mathtt{H}_{(S_1, \dots, S_m,\ast, \dots,S_k)}
.
\end{equation} 
We use this arrow notation from now on instead of `$u$' in order to match the physics literature. In particular
\[
\ast \downarrow \tH_{(I)}=-\tH_{(\ast,I)} + \tH_{(\ast I)}
\qquad \text{and} \qquad
\ast \uparrow \tH_{(I)}=\tH_{(\ast I)} -\tH_{(I,\ast )}
.\]
We have
\[
\ast \uparrow \tH_{F}\, -\, \ast \downarrow \tH_{F}  =  u_{-1,1}(\tH_F) = [ \tH_{(\ast)} , \tH_{F}]
.\]
This identity appears often in the physics literature for operator-valued distributions, e.g. \cite[Equation 13]{steinmann1960}, \cite[Equation 83]{ep73roleofloc}. The biderivation $u_{-1,1}$ gives $\Sig$ the structure of a Hopf $\textbf{L}$-algebra. This $\textbf{L}$-action is the restriction of the adjoint representation of $\Sig$. Notice the Steinmann arrows are commutative up operators. By \cite[Proposition 5.4]{norledge2020species}, we can restrict them to obtain up derivations of $\Zie$,
\[
\ast\downarrow(-): \Zie\to \Zie', 
\qquad 
\mathtt{D}_\cS \mapsto  \ast  \downarrow \mathtt{D}_\cS
\qquad \text{and} \qquad
\ast\uparrow(-):\Zie\to \Zie', 
\qquad 
\mathtt{D}_\cS \mapsto  \ast  \uparrow \mathtt{D}_\cS
.\]


Following \cite[Section 5]{norledge2020species}, the Steinmann arrows equip $\Sig$ with the structure of a Hopf \hbox{$\textbf{E}$-algebra} (and $\Zie$ with the structure of a Lie $\textbf{E}$-algebra) in two ways. The details are as follows. First, $\textbf{E}$ is the \emph{exponential species}, given by 
\[
\textbf{E}[I]:=\bC \qquad \text{for all} \quad I\in \textsf{S}
.\]
We denote $\tH_I:=1_\bC\in  \textbf{E}[I]$. The exponential species is an algebra in species when equipped with the trivial multiplication
\[
\mu_{S,T}: \textbf{E}[S] \otimes \textbf{E}[T] = \bC \otimes \bC \xrightarrow{\sim} \bC = \textbf{E}[I]
,\qquad 
\tH_S \otimes \tH_T \mapsto \tH_I
.\]
We have the following $\textbf{E}$-modules induced by the Steinmann arrows, as defined in \cite[Equation 23]{norledge2020species},
\[
\textbf{E}\bigcdot \Sig \to \Sig,
\qquad
\tH_Y\otimes \mathtt{a}\, \mapsto \, 
Y\! \downarrow \mathtt{a}:=
\underbrace{y_r \downarrow 
\circ\cdots \circ     
y_1 \downarrow}_{\text{invariant of the order}}(\mathtt{a}) 
\]
and 
\[
\textbf{E}\bigcdot \Sig \to \Sig,
\qquad
\tH_Y\otimes \mathtt{a}\, \mapsto \, 
Y\! \uparrow \mathtt{a}:=
\underbrace{y_r \uparrow 
\circ\cdots \circ     
y_1 \uparrow}_{\text{invariant of the order}}(\mathtt{a}) 
\]
where $Y=\{y_1,\dots,y_r\}$ as usual. In particular, $Y\downarrow(-)$ and $Y\uparrow(-)$ are the Steinmann arrow raising operators obtained from iterating the Steinmann arrow up operators $\ast \downarrow(-)$ and $\ast\downarrow(-)$, as mentioned in \autoref{Derivations and Coderivations of Sig}. For example, the retarded arrow $Y\downarrow(-)$ consists of a linear map of the form
\[
\Sig[I] \to \Sig[Y\sqcup I]
\]
for each choice of finite set $I$. For $Y=[r]:=\{\ast_1 , \dots ,\ast_r \}$, we abbreviate 
\[
\downarrow(-) :=\ast\downarrow(-), \qquad  \downarrow\downarrow(-) :=\{  \ast_1, \ast_2 \} \downarrow(-),\quad  \dots
\] 
and similarly for the advanced arrow. Since the arrows are derivations, they respect the multiplication of $\Sig$, and since the arrows are coderivations, they respect the comultiplication of $\Sig$. It follows that these $\textbf{E}$-actions give $\Sig$ the structure of a Hopf monoid constructed internal to $\textbf{E}$-modules. 

By inspecting the definitions, we see that
\begin{equation}\label{eq:retardadvan}
Y\! \downarrow \tH_{(I)}=
\mathtt{R}_{(Y;I)}:=\! \sum_{Y_1\sqcup Y_2=Y}\overline{\tH}_{(Y_1)}  \tH_{(Y_2\sqcup I)} 
\qquad \text{and} \qquad  
Y\! \uparrow \tH_{(I)}=
\mathtt{A}_{(Y;I)}:=\! \sum_{Y_1\sqcup Y_2=Y}   \tH_{(Y_1\sqcup I)}    \overline{\tH}_{(Y_2)}
.
\end{equation}
It follows that 
\[     
Y\! \downarrow \tH_F
=  
\sum_{Y_1 \sqcup\dots\sqcup Y_k =Y} \mathtt{R}_{(Y_1;S_1)}\dots \mathtt{R}_{(Y_{k};S_{k})} 
\qquad \text{and} \qquad 
Y\! \uparrow \tH_F
=  
\sum_{Y_1 \sqcup\dots\sqcup Y_k =Y} \mathtt{A}_{(Y_1;S_1)}\dots \mathtt{A}_{(Y_{k};S_{k})} 
.\]
The sums are over all decompositions $(Y_1,\dots, Y_k)$ of $Y$ of length $l(F)$. We call \hbox{$\mathtt{R}_{(Y;I)}, \mathtt{A}_{(Y;I)}\in \Sig[Y\sqcup I]$} the \emph{retarded} and \emph{advanced} elements respectively. The \emph{total retarded} and \emph{total advanced} elements are given by
\[          
Y\! \downarrow \tH_{(i)}=
\mathtt{R}_{(Y;i)}  =\sum_{Y_1\sqcup Y_2=Y}\overline{\tH}_{(Y_1)}\,  \tH_{(Y_2 i)}
\qquad\text{and}\qquad    
Y\! \uparrow \tH_{(i)}=
\mathtt{A}_{(Y;i)}=\sum_{Y_1\sqcup Y_2=Y}   \tH_{(Y_2 i )}\,    \overline{\tH}_{(Y_1)}
\]
respectively. 
\begin{remark}\label{rem:double}
If we put $I=J\sqcup \{i\}$, then we have
\[
\mathtt{R}_{(J;i)}=
\sum_{\substack{S\sqcup T=I\\ i\in T}}\overline{\tH}_{(S)}\,  \tH_{(T)}
=-\sum_{\substack{F\in \Sigma[I]\\ i\in S_k}} (-1)^{l(F)} \tH_F
=\mathtt{D}_i 
\]
and
\[
\mathtt{A}_{(J;i)}=
\sum_{\substack{S\sqcup T=I\\ i\in T}} \tH_{(T)}\, \overline{\tH}_{(S)}=
-\sum_{\substack{F\in \Sigma[I]\\ i\in S_1}} (-1)^{l(F)} \tH_F=
\mathtt{D}_{\bar{i}}
.\]
\end{remark}

\subsection{Currying the Steinmann Arrows}\label{Coalgebras}  

Given a species $\textbf{p}$, we let $\textbf{p}^\textbf{E}$ denote the species given by
\[
\textbf{p}^{\textbf{E}}[I] := \prod_{r=0}^\infty\big (\textbf{p}^{[r]}[I]\big )^{ \sfS_r }
.\]
Here, $\textbf{p}^{[r]}$ is the $Y$-derivative of $\textbf{p}$ for $Y=[r]$, and $(-)^{ \sfS_r }$ denotes the subspace of $\sfS_r$-invariants, where $\sfS_r$ is the symmetric group on $[r]$. We denote elements of $\textbf{p}^{\textbf{E}}[I]$ using formal power series notation
\[
\sum_{r=0}^\infty \mathtt{x}_r, \qquad    \mathtt{x}_r\in \textbf{p}^{[r]}[I]
.\] 
Explicitly, $\mathtt{x}_r$ is an element of the vector space $\textbf{p} [   \{ \ast_1, \dots, \ast_r  \}   \sqcup I]$ which is invariant under the action of permuting $\{ \ast_1, \dots, \ast_r  \}$ and leaving $I$ fixed. 

The mapping $\textbf{p}\mapsto \textbf{p}^{\textbf{E}}$ extends to an endofunctor on species. In particular, given a morphism of species $\eta:\textbf{p} \to \textbf{q}$, we have the morphism $\eta^{\textbf{E}}$ given by
\begin{equation}\label{eq:endo}
\eta^{\textbf{E}} : \textbf{p}^{\textbf{E}} \to \textbf{q}^{\textbf{E}} 
,\qquad
\sum_{r=0}^\infty \mathtt{x}_r\mapsto      \sum_{r=0}^\infty   \eta_{[r]\sqcup I}(\mathtt{x}_r)
.
\end{equation}
A \emph{series} of a species $\textbf{p}$ is a morphism of species of the form $s:\textbf{E}\to \textbf{p}$. Notice the elements of $\textbf{p}^{\textbf{E}}[I]$ are naturally series of the species $Y \mapsto \textbf{p}^{[Y]}[I]$. See \cite[Section 3.2]{norledge2020species} for more details.	For the connection between $\textbf{p}^\textbf{E}$ and the internal hom for the Cauchy product, see \cite[Section 2.3]{norledge2020species}.

If $\textbf{a}$ is an algebra in species, then so is $\textbf{a}^\textbf{E}$, see \cite[Equation 12]{norledge2020species}. In particular, $\Sig^\textbf{E}$ is an algebra, with multiplication given by
\[
\sum_{r=0}^\infty \mathtt{x}_r \otimes \sum_{r=0}^\infty \mathtt{y}_r \ \mapsto \ 
	\sum_{r=0}^\infty \sum_{r_1 + r_2 =r}
	\dfrac{r!}{r_1 !\,  r_2 !}
	\mu_{[r_1] \sqcup S, [r_2] \sqcup T}(\mathtt{x}_{r_1} \otimes \mathtt{y}_{r_2})
.\]

\begin{thm} \label{coalgahomo}
We have the following homomorphisms of algebras in species,
\[
\Sig\to \Sig^{\textbf{E}}, 
\qquad
\tH_F\mapsto
\sum_{r=0}^\infty\,  \sum_{Y_1 \sqcup\dots\sqcup Y_k =[r]}   
\mathtt{R}_{(Y_1;S_1)}  \dots \mathtt{R}_{(Y_{k};S_{k})}
\]
and
\[
\Sig\to \Sig^{\textbf{E}}, 
\qquad
\tH_F\mapsto 
\sum_{r=0}^\infty\,  \sum_{Y_1 \sqcup\dots\sqcup Y_k =[r]}   
\mathtt{A}_{(Y_1;S_1)}  \dots \mathtt{A}_{(Y_{k};S_{k})}
.\]
\end{thm}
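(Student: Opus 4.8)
The plan is to recognize the two maps as the curried forms of the retarded and advanced $\textbf{E}$-actions on $\Sig$, and then to promote the underlying morphism of species to a homomorphism of algebras by exploiting that the Steinmann arrows are up \emph{derivations}. Recall from \autoref{Coalgebras} that $\Sig^{\textbf{E}}=\cH^{\bigcdot}(\textbf{E},\Sig)$, so that by the tensor--hom adjunction $\textbf{E}\bigcdot(-)\dashv\cH^{\bigcdot}(\textbf{E},-)$ for the Cauchy product, the retarded $\textbf{E}$-module structure $\textbf{E}\bigcdot\Sig\to\Sig$, $\tH_Y\otimes\tH_F\mapsto Y\!\downarrow\tH_F$, transposes to a morphism of species $\Sig\to\Sig^{\textbf{E}}$. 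Its $\tH$-basis values are read off from the explicit expansion $Y\!\downarrow\tH_F=\sum_{Y_1\sqcup\dots\sqcup Y_k=Y}\mathtt{R}_{(Y_1;S_1)}\dots\mathtt{R}_{(Y_k;S_k)}$ established just after \eqref{eq:retardadvan}, which is exactly the map $\tH_F\mapsto\sum_{r=0}^\infty\sum_{Y_1\sqcup\dots\sqcup Y_k=[r]}\mathtt{R}_{(Y_1;S_1)}\dots\mathtt{R}_{(Y_k;S_k)}$ of the statement; that each summand lands in the $\sfS_r$-invariants is precisely the order-invariance of the iterated arrows recorded in \autoref{sec:The Steinmann Arrows}. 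The advanced map arises identically from $\tH_Y\otimes\tH_F\mapsto Y\!\uparrow\tH_F$ and the parallel expansion in terms of the $\mathtt{A}_{(Y_j;S_j)}$.

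For the multiplicativity, I would first invoke the general principle recalled in \autoref{Coalgebras}: because $\textbf{E}$ is a bialgebra, $(-)^{\textbf{E}}=\cH^{\bigcdot}(\textbf{E},-)$ is lax monoidal for the Cauchy product, and the curry of an $\textbf{E}$-module whose action is by derivations---that is, of a Hopf $\textbf{E}$-algebra---is a homomorphism of algebras. By \autoref{steinmannarrowaredercoder} the arrow $\ast\!\downarrow(-)=u_{1,0}$ is an up biderivation of $\Sig$, so $\Sig$ is a Hopf $\textbf{E}$-algebra under the retarded action, and the curried map $\Sig\to\Sig^{\textbf{E}}$ is therefore a homomorphism of algebras; the same applies verbatim to $\ast\!\uparrow(-)=u_{0,1}$.

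To exhibit the mechanism concretely, I would unwind both sides. Iterating the derivation identity \eqref{eq:upder} for the commuting up operators $y\!\downarrow(-)$ yields the higher Leibniz rule
\[
Y\!\downarrow(\tH_F\tH_G)=\sum_{Y_1\sqcup Y_2=Y}\mu_{Y_1\sqcup S,\,Y_2\sqcup T}\big((Y_1\!\downarrow\tH_F)\otimes(Y_2\!\downarrow\tH_G)\big),
\]
valid for $F\in\Sigma[S]$, $G\in\Sigma[T]$ and $I=S\sqcup T$, where the stars acting on $\tH_F$ are adjoined to $S$ and those acting on $\tH_G$ to $T$. Specializing $Y=[r]$, the labeled splits $Y_1\sqcup Y_2=[r]$ of shape $(r_1,r_2)$ number $\tfrac{r!}{r_1!\,r_2!}$; by the order-invariance of the arrows and the $\sfS$-equivariance of $\mu$, their contributions are the $\sfS_r$-translates of the single invariant representative assembled from $[r_1]\!\downarrow\tH_F$ and $[r_2]\!\downarrow\tH_G$. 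This is exactly how the multiplication of $\Sig^{\textbf{E}}$ recalled in \autoref{Coalgebras} reproduces the component $[r]\!\downarrow(\tH_F\tH_G)$ from the components $[r_1]\!\downarrow\tH_F$ and $[r_2]\!\downarrow\tH_G$, the multinomial weight $\tfrac{r!}{r_1!\,r_2!}$ accounting for the reshuffling of the adjoined $\formg$-points. Unitality is immediate, since $\ast\!\downarrow\tH_{(\,)}=0$ forces $\tH_{(\,)}\mapsto\tH_{(\,)}$.

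The main obstacle is this last bookkeeping step: matching the multinomial weight $r!/(r_1!\,r_2!)$ appearing in the $\Sig^{\textbf{E}}$-multiplication with the count of labeled splits $Y_1\sqcup Y_2=[r]$ of a prescribed shape, and verifying that passage to $\sfS_r$-invariants converts the orbit sum in the higher Leibniz rule into precisely that weighted representative. Everything upstream---the up-derivation property of the arrows (\autoref{steinmannarrowaredercoder}), their commutativity and hence order-invariance, and the explicit retarded and advanced expansions---is already in hand, so this combinatorial comparison is the only genuinely new computation.
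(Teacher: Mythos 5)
Your proposal is correct and takes essentially the same route as the paper, whose entire proof reads: the Steinmann arrows are commutative up biderivations of $\Sig$ (\autoref{steinmannarrowaredercoder}), hence give $\Sig$ the structure of a Hopf $\textbf{E}$-algebra, and the statement is then a special case of \cite[Theorem 5.1]{norledge2020species}. Your concluding computation---the higher Leibniz rule for the iterated arrows and the matching of the weight $r!/(r_1!\,r_2!)$ in the $\Sig^{\textbf{E}}$-multiplication against the $\binom{r}{r_1}$ decompositions $Y_1\sqcup Y_2=[r]$ under $\sfS_r$-invariance---is sound and is precisely the content of that cited theorem specialized to this setting, so you have merely made explicit what the paper outsources.
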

\begin{proof}
The Steinmann arrows are commutative up biderivations of $\Sig$, and so give $\Sig$ the structure of a Hopf $\textbf{E}$-algebra. This result is then a special case of \cite[Theorem 5.1]{norledge2020species}.
\end{proof}

The homomorphisms of \autoref{coalgahomo} are the unique extensions of the maps
\[
\tH_{(I)}\mapsto \sum_{r=0}^\infty \mathtt{R}_{(r;I)}
\qquad \text{and} \qquad
\tH_{(I)}\mapsto \sum_{r=0}^\infty \mathtt{A}_{(r;I)}
\]
to homomorphisms. In the application to causal perturbation theory, we shall be interested in the decorated analog of these homomorphisms, see \autoref{sec:Perturbation of T-Products by Steinmann Arrows}.

\begin{remark}
These homomorphisms $\Sig\to \Sig^\textbf{E}$ come from currying the $\textbf{E}$-actions of the Steinmann arrows. See \cite[Section 5.1]{norledge2020species} for details.
\end{remark}

\subsection{The Steinmann Arrows and Dynkin Elements} \label{sec:The Steinmann Arrows and Dynkin Elements}

\begin{figure}[t]
	\centering
	\includegraphics[scale=0.6]{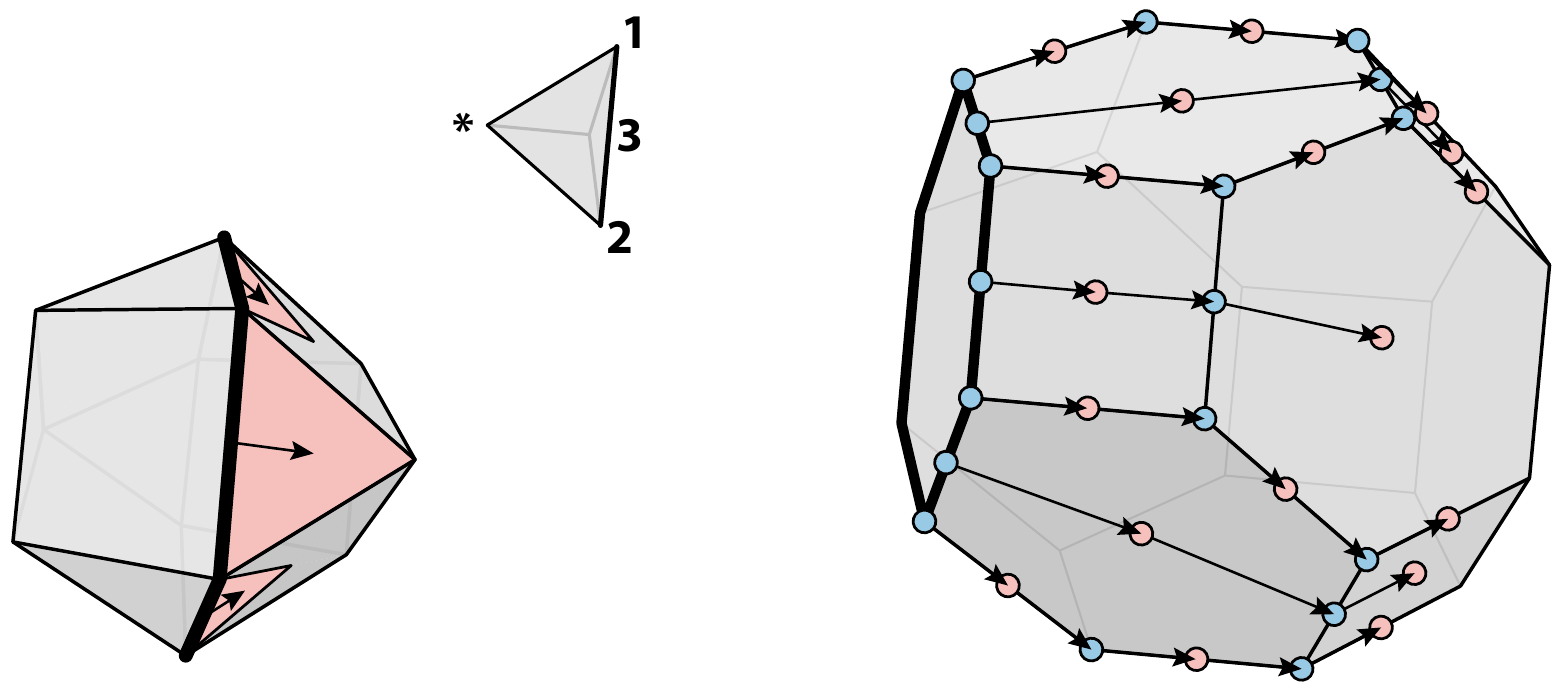}
	\caption{Schematic for the action of the retarded Steinmann arrow $\ast \downarrow$ for $I=\{1,2,3\}$ on the Steinmann sphere (left) and the tropical geometric realization of $\boldsymbol{\Sigma}$ (right, see \cite[Introduction]{norledge2019hopf}).}
	\label{fig:steinarrow}
\end{figure}

We now show that the restriction of the Steinmann arrows to $\Zie$, which are derivations for its Lie bracket, have an interesting description in terms of cells, i.e. chambers of the adjoint braid arrangement.

Following \cite[Section 2]{epstein2016}, we define the commutative up operators
\[
\ast \downarrow(-):\textbf{L}^\vee\to {\textbf{L}^\vee}',
\qquad
\ast \downarrow \cS:=\big \{(\ast S,T),(S,\ast T),(I,\ast):(S,T)\in \cS\big\}
\]
and
\[
\ast\uparrow(-):\textbf{L}^\vee\to {\textbf{L}^\vee}',
\qquad
\ast\uparrow \cS:=\big \{(\ast S,T),(S,\ast T),(\ast,I):(S,T)\in \cS\big\}
.\]
These are indeed well-defined; $\ast \downarrow \cS$ corresponds to the adjoint braid arrangement chamber on the $I$ side of the hyperplane $\lambda_{\ast,I}=0$ which has the face of $\cS$ as a facet, and $\ast\uparrow \cS$ corresponds to the chamber on the $\ast$ side of the hyperplane $\lambda_{\ast,I}=0$ which has the face of $\cS$ as a facet. See around \cite[Remark 2.2]{lno2019} for more details. Thus, it follows from \autoref{prop:ruelle} (Ruelle's identity) that
\[     
[ \tH_{(\ast)}, \mathtt{D}_\cS]= \mathtt{D}_{\ast \uparrow \cS} - \mathtt{D}_{\ast \downarrow \cS} 
.\]
The induced $\textbf{E}$-modules are given by
\[
\textbf{E}\bigcdot \textbf{L}^\vee \to \textbf{L}^\vee
,\qquad 
\tH_Y\otimes \cS \mapsto Y\downarrow \cS:=
\big\{ 
( Y_1\sqcup S, Y_2\sqcup T )\in [Y \sqcup I;\text{2}]  : (S,T)\in \cS\ \text{or}\ S=I   
\big\}   
\]
and
\[
\textbf{E}\bigcdot \textbf{L}^\vee \to \textbf{L}^\vee
,\qquad 
\tH_Y\otimes \cS \mapsto Y\uparrow \cS:=
\big\{ 
( Y_1\sqcup S, Y_2\sqcup T )\in [Y \sqcup I;\text{2}]  : (S,T)\in \cS\ \text{or}\ T=I   
\big\}      
.\]

\begin{prop}\label{adjointinterp}
Given a cell $\cS$ over $I$, we have
\[        
Y \downarrow\mathtt{D}_\cS=  \mathtt{D}_{Y \downarrow \cS} 
\qquad \text{and} \qquad
Y \uparrow\mathtt{D}_\cS=  \mathtt{D}_{Y \uparrow \cS}
.\]
\end{prop}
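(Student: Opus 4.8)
The plan is to reduce to the single-arrow statement and settle that by a coefficient comparison in the $\tH$-basis. Since both the $\textbf{E}$-action on $\Sig$ and the $\textbf{E}$-action on $\textbf{L}^\vee$ are, by construction, the modules induced (in the sense of \cite{norledge2020species}) from iterating the commutative up operators $\ast\downarrow(-)$, it suffices to show that the Dynkin morphism $\textbf{L}^\vee\to\Sig$ intertwines these up operators, i.e. that $\ast\downarrow\mathtt{D}_\cS=\mathtt{D}_{\ast\downarrow\cS}$ and $\ast\uparrow\mathtt{D}_\cS=\mathtt{D}_{\ast\uparrow\cS}$; the statement for $Y=\{y_1,\dots,y_r\}$ then follows by applying $y_1\downarrow,\dots,y_r\downarrow$ in turn (the explicit formula for $Y\downarrow\cS$, with its ``$S=I$'' clause, is precisely the iterate of the single-step rule, and this is consistent with \autoref{rem:double}). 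Moreover, the advanced identity is a formal consequence of the retarded one: combining the derivation relation $\ast\uparrow\tH_F-\ast\downarrow\tH_F=[\tH_{(\ast)},\tH_F]$ (the biderivation $u_{-1,1}$) with Ruelle's identity in the form $[\tH_{(\ast)},\mathtt{D}_\cS]=\mathtt{D}_{\ast\uparrow\cS}-\mathtt{D}_{\ast\downarrow\cS}$ (\autoref{prop:ruelle}) yields $\ast\uparrow\mathtt{D}_\cS-\mathtt{D}_{\ast\uparrow\cS}=\ast\downarrow\mathtt{D}_\cS-\mathtt{D}_{\ast\downarrow\cS}$, so only the retarded up-operator identity needs a direct argument.

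To prove $\ast\downarrow\mathtt{D}_\cS=\mathtt{D}_{\ast\downarrow\cS}$, I would expand both sides in the $\tH$-basis and compare the coefficient of each $\tH_G$, with $G$ ranging over compositions of $\{\ast\}\sqcup I$. On the left, using \eqref{steindown} together with $\mathtt{D}_\cS=-\sum_{\bar F\subseteq\cS}(-1)^{l(F)}\tH_F$ from \eqref{eq:hbasisexp}, one checks that $\tH_G$ can arise only from the single composition $F:=G|_I$ obtained by deleting $\ast$, and from it exactly once: via the ``insert'' term when $\ast$ is a singleton lump of $G$, and via the ``merge'' term when $\ast$ lies in a lump of size $\ge 2$. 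Tracking signs, the coefficient of $\tH_G$ equals $-(-1)^{l(G)}$ precisely when $\bar F\subseteq\cS$ and $\ast$ is \emph{not} a singleton lump in the last position of $G$, and vanishes otherwise (a trailing singleton $\ast$ receives no merge contribution and is barred by the retarded insert rule, which only prepends $\ast$). On the right, $\mathtt{D}_{\ast\downarrow\cS}=-\sum_{\bar G\subseteq\ast\downarrow\cS}(-1)^{l(G)}\tH_G$, so the coefficient of $\tH_G$ is $-(-1)^{l(G)}$ exactly when $\bar G\subseteq\ast\downarrow\cS$.

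The crux is therefore the combinatorial equivalence
\[
\bar G\subseteq\ast\downarrow\cS
\iff
\bar F\subseteq\cS\ \text{and}\ \ast\ \text{is not a trailing singleton lump of }G,
\]
and this is the step I expect to be the main obstacle. I would prove it by analyzing the two-lump coarsenings of $\bar G$, which are exactly the prefix/suffix cuts $(Q_i,P_i)=(S_{i+1}\cup\dots\cup S_\ell,\,S_1\cup\dots\cup S_i)$ of $G=(S_1,\dots,S_\ell)$, testing each against the three families generating $\ast\downarrow\cS$, namely $(\ast S,T)$, $(S,\ast T)$, and the distinguished element $(I,\ast)$. Deleting $\ast$ identifies the interior cuts with the cuts of $F$, hence with $\cF_{\bar F}$ (with a harmless doubling of one cut-condition when $\ast$ is an interior singleton), leaving two boundary phenomena to reconcile: when $\ast$ is the leading lump, the cut isolating $\ast$ becomes $(I,\ast)$, which lies in $\ast\downarrow\cS$ automatically and imposes no condition; when $\ast$ is a trailing singleton, that cut becomes $(\ast,I)$, the opposite of $(I,\ast)$, which never lies in the cell and forces $\bar G\not\subseteq\ast\downarrow\cS$. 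Matching these two boundary outcomes with the two exclusions found on the $\Sig$-side is exactly what certifies the coefficient identity, and completes the single-step proof; the induction on $|Y|$ and the advanced case then close as above.
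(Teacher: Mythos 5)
Your proposal is correct, and for the retarded identity it is essentially the paper's own argument: the paper likewise reduces to the singleton case $Y=\{\ast\}$ and compares $\tH$-basis coefficients, asserting in one sentence exactly the equivalence you isolate as the crux --- that the $\tH_G$ appearing in $\ast\downarrow\mathtt{D}_\cS$ are precisely those with $\bar G\subseteq \ast\downarrow\cS$, each with total coefficient $-(-1)^{l(G)}$ because an inserted singleton $\ast$ raises the length by one while carrying an extra minus sign. Your contribution beyond the paper is twofold. First, you actually prove the combinatorial crux that the paper leaves implicit: the prefix/suffix-cut analysis, with its three boundary phenomena ($(I,\ast)$ lies in $\ast\downarrow\cS$ automatically when $\ast$ leads; $(\ast,I)$ never lies in it, killing trailing singletons on both sides; and the harmless doubled cut-condition for interior singletons), is exactly the verification the paper compresses, and your sign bookkeeping is right --- the insert term contributes $-(-1)^{l(F)}\cdot(-1)=-(-1)^{l(G)}$ with $l(G)=l(F)+1$, the merge term $-(-1)^{l(F)}=-(-1)^{l(G)}$ with $l(G)=l(F)$, and each $G$ arises from the unique $F=G|_I$ via a unique term. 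Second, for the advanced case the paper simply runs the mirrored computation (``the advanced case then follows similarly''), whereas you deduce it formally by combining $\ast\uparrow(-)-\ast\downarrow(-)=[\tH_{(\ast)},-]$ with $[\tH_{(\ast)},\mathtt{D}_\cS]=\mathtt{D}_{\ast\uparrow\cS}-\mathtt{D}_{\ast\downarrow\cS}$; since the latter is established from \autoref{prop:ruelle} independently of \autoref{adjointinterp}, there is no circularity, and this halves the work at the cost of invoking Ruelle's identity, where the paper's route stays self-contained. One line you should still write out in a final version is the induction showing that the displayed formula for $Y\downarrow\cS$, with its ``$S=I$'' clause, really is the iterate of the single-step rule (at each step the $(I,\ast)$-type element on the enlarged ground set generates precisely the cuts $(Y_1\sqcup I,\, Y_2)$ with $Y_2\neq\emptyset$); this is routine and matches the paper's unstated reduction to the singleton case.
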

\begin{proof}
We consider the retarded case $Y\downarrow\mathtt{D}_\cS=  \mathtt{D}_{Y \downarrow \cS}$ only, since the advanced case then follows similarly. It is sufficient to consider the case $Y=\{\ast\}$. We have
\[   
\downarrow\mathtt{D}_\cS
=  
-\sum_{ \bar{F} \subseteq \cS } (-1)^{l(F)}  \downarrow\tH_{F} 
\qquad \text{and} \qquad 
\mathtt{D}_{\downarrow \cS}= -\sum_{\bar{F} \subseteq \,   \downarrow\cS } (-1)^{l(F)}\,  \tH_{ F }
.\]
So, the result follows if we have the following equality
\[ 
\sum_{ \bar{F} \subseteq \cS } (-1)^{l(F)} \sum_{1\leq m\leq k}
- \mathtt{H}_{(S_1, \dots,\ast, S_m, \dots,S_k)} 
+\mathtt{H}_{(S_1,\dots, \ast S_m,\dots, S_k)}   
\overset{\mathrm{?}}{=} 
\sum_{\bar{G} \subseteq \,   \downarrow\cS } (-1)^{l(G)}\,  \tH_{G}    
.\]
Indeed, notice that the $\tH$-basis elements $\tH_G\in \Sig[\ast I]$ which appear on the LHS are exactly those such that 
\[\bar{G}\subseteq  \downarrow\cS.\] 
Notice also that each $\tH_G$ appears with total sign $(-1)^{l(G)}$, since when $\ast$ is inserted as a singleton lump, thus increasing $l(G)$ by one, it appears also with a negative sign. 
\end{proof} 

\begin{remark}
This interpretation of the $\textbf{E}$-module structure of $\Sig$ restricted to the primitive part $\Zie=\mathcal{P}(\Sig)$ in terms of the adjoint braid arrangement suggests obvious generalizations of the Steinmann arrows in the direction of \cite{aguiar2017topics}, \cite{aguiar2020bimonoids}, since the generalization of Hopf monoids there is via hyperplane arrangements.
\end{remark}

\begin{cor}
We have the following homomorphisms of Lie algebras in species,
\[
\Zie\to \Zie^{\textbf{E}}, 
\qquad
\mathtt{D}_\cS\mapsto \mathtt{D}_{ (-) \downarrow \cS }
=
\sum_{r=0}^\infty \mathtt{D}_{ [r] \downarrow \cS }
= \mathtt{D}_\cS+ \mathtt{D}_{\downarrow \cS}+ \mathtt{D}_{  \downarrow\downarrow \cS}+ \cdots 
\]
and
\[
\Zie\to \Zie^{\textbf{E}}, 
\qquad
\mathtt{D}_\cS\mapsto \mathtt{D}_{ (-) \uparrow \cS }
=
\sum_{r=0}^\infty \mathtt{D}_{ [r]\uparrow \cS }
= \mathtt{D}_\cS+ \mathtt{D}_{\uparrow  \cS}+ \mathtt{D}_{\uparrow \uparrow \cS}+ \cdots 
.\]
\end{cor}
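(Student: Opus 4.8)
The plan is to obtain these two Lie algebra homomorphisms by restricting the associative algebra homomorphisms of \autoref{coalgahomo} to the primitive part $\Zie = \mathcal{P}(\Sig)$, and then rewriting the resulting images in terms of Dynkin elements by means of \autoref{adjointinterp}. I would treat only the retarded case explicitly, since the advanced case is verbatim identical with $\downarrow$ replaced throughout by $\uparrow$.

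First I would recall that the homomorphism $\Sig \to \Sig^{\textbf{E}}$ of \autoref{coalgahomo} is precisely the currying of the retarded $\textbf{E}$-action $\textbf{E}\bigcdot\Sig \to \Sig$, $\tH_Y \otimes \mathtt{a} \mapsto Y\downarrow\mathtt{a}$; indeed, comparing its $r$-th component with the identity $Y\downarrow\tH_F = \sum_{Y_1\sqcup\dots\sqcup Y_k = Y} \mathtt{R}_{(Y_1;S_1)}\dots\mathtt{R}_{(Y_k;S_k)}$ from \autoref{sec:The Steinmann Arrows}, this map sends $\tH_F \mapsto \sum_{r=0}^\infty [r]\downarrow\tH_F$. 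Being a map of associative algebras, it is in particular a homomorphism of Lie algebras for the commutator brackets. Since it is a morphism of species, hence $\bC$-linear, I would evaluate it at $\mathtt{D}_\cS \in \Zie[I]$ and apply \autoref{adjointinterp} to each term, obtaining image $\sum_{r=0}^\infty [r]\downarrow\mathtt{D}_\cS = \sum_{r=0}^\infty \mathtt{D}_{[r]\downarrow\cS}$. Each summand $\mathtt{D}_{[r]\downarrow\cS}$ is a Dynkin element, hence primitive, so it lies in $\Zie[[r]\sqcup I]$; moreover it is $\sfS_r$-invariant because the Steinmann arrows commute, so the whole series lies in $\Zie^{\textbf{E}}[I]$. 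Thus the image of $\Zie$ under the homomorphism of \autoref{coalgahomo} lands inside the sub-species $\Zie^{\textbf{E}} \hookrightarrow \Sig^{\textbf{E}}$.

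The final step is to conclude that the induced restriction $\Zie \to \Zie^{\textbf{E}}$ is a homomorphism of Lie algebras. Because $\Zie$ is a Lie subalgebra of $\Sig$ under the commutator bracket and $(-)^{\textbf{E}}$ is lax monoidal for the Cauchy product, applying $(-)^{\textbf{E}}$ to the inclusion realizes $\Zie^{\textbf{E}}$ as a sub-Lie-algebra of $\Sig^{\textbf{E}}$ whose bracket is the commutator bracket inherited from $\Sig^{\textbf{E}}$. The hard point, which I would isolate and check carefully, is exactly this compatibility of brackets: that the Lie structure on $\Zie^{\textbf{E}}$ coming from viewing $\Zie$ as a Lie $\textbf{E}$-algebra via \cite[Proposition 5.4]{norledge2020species} coincides with the commutator bracket on $\Sig^{\textbf{E}}$. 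This is the naturality statement packaged in \cite[Theorem 5.1]{norledge2020species}, and it follows formally from lax monoidality, since $(-)^{\textbf{E}}$ preserves both the associative product and the symmetry used to form the commutator. Notably, no separate well-definedness argument is needed for $\mathtt{D}_\cS \mapsto \sum_{r} \mathtt{D}_{[r]\downarrow\cS}$ in the presence of the Steinmann relations, since the assignment is manifestly the restriction of a single globally defined homomorphism on all of $\Sig$.
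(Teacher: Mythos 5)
Your proposal is correct, but it is organized genuinely differently from the paper's proof, which is two sentences: the Steinmann arrows are commutative up biderivations that restrict to $\Zie$ (this is \cite[Proposition 5.4]{norledge2020species}, already invoked in \autoref{sec:The Steinmann Arrows}), hence give $\Zie$ the structure of a Lie $\textbf{E}$-algebra, and the corollary is then a direct instance of the currying theorem \cite[Theorem 5.1]{norledge2020species} applied at the Lie level, with the explicit formula $\mathtt{D}_\cS\mapsto \sum_{r}\mathtt{D}_{[r]\downarrow\cS}$ supplied by \autoref{adjointinterp}. You instead apply the machinery only at the associative level, taking the algebra homomorphism $\Sig\to\Sig^{\textbf{E}}$ of \autoref{coalgahomo}, and then restrict to the primitive part; this forces you to verify by hand what the paper gets for free, namely that the image lands in $\Zie^{\textbf{E}}$ (via \autoref{adjointinterp}, primitivity of Dynkin elements \cite[Proposition 14.1]{aguiar2017topics}, $\sfS_r$-invariance, and --- implicitly but necessarily --- the spanning of $\Zie$ by Dynkin elements from \cite{norledge2019hopf}, since you must handle arbitrary elements of $\Zie$, not just the $\mathtt{D}_\cS$) and that the Lie bracket on $\Zie^{\textbf{E}}$ coming from the Lie $\textbf{E}$-algebra structure agrees with the commutator bracket restricted from $\Sig^{\textbf{E}}$, which you correctly isolate as the delicate step and correctly resolve by symmetric lax monoidality of $(-)^{\textbf{E}}$. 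Both routes bottom out in the same external theorem, so neither is more elementary; yours buys explicitness (the landing in $\Zie^{\textbf{E}}$ and the formula are exhibited rather than quoted), at the cost of the bracket-comparison and spanning arguments, both of which the paper's route avoids entirely by working with $\Zie$ as a Lie $\textbf{E}$-algebra from the start. A small simplification available to you: instead of invoking the spanning theorem, note that the Steinmann arrows are up \emph{coderivations} and so preserve the primitive part, which immediately places the image of all of $\Zie$ inside $\Zie^{\textbf{E}}$.
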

\begin{proof}
The Steinmann arrows are commutative up biderivations of $\Zie$, and so give $\Zie$ the structure of a Lie $\textbf{E}$-algebra. This result is then a special case of \cite[Theorem 5.1]{norledge2020species}.
\end{proof}
\section{Products and Series}
We now recall several basic constructions of casual perturbation theory in the current, clean, abstract setting. We do this without yet imposing causal factorization/causal additivity. We say e.g. `$\text{T}$-product' and `$\text{R}$-product' for now, and then change to `time-ordered product' and `retarded product' in the presence of causal factorization.

\subsection{T-Products, Generalized T-Products, and Generalized R-Products} \label{sec:T-Products, Generalized T-Products, and Generalized R-Products}
Let $V$ be a vector space over $\bC$. Let $\cA$ be a $\bC$-algebra with multiplication denoted by $\star$. Let $\textbf{U}_\cA$ be the algebra in species given by 
\[
\textbf{U}_\cA[I]:= \cA
.\] 
The action of bijections is trivial, and the multiplication is the multiplication of $\cA$.

The \emph{positive exponential species} $\textbf{E}^\ast_+$ is given by
\[
\textbf{E}^\ast_+[I]:= \bC \qquad \text{if} \quad I\neq \emptyset \qquad \text{and} \qquad  \textbf{E}^\ast_+[\emptyset] =0
.\]
Let a \emph{system of} \hbox{$\text{T}$\emph{-products}} $\text{T}$ be a system of products for the positive exponential species $\textbf{E}^\ast_+$, as defined in \cite[Section 6.2]{norledge2020species}. This means $\text{T}$ is a morphism of species of the form\footnote{\ recall the definition and notation for $\textbf{E}_V$ from \autoref{Decorations}}
\[   
\text{T}:  \textbf{E}^\ast_+  \otimes \textbf{E}_V  \to \textbf{U}_\cA
,\qquad 
\tH_{(I)} \otimes \ssA_I \mapsto 
\text{T}_I(\tH_{(I)} \otimes \ssA_I)
\]
where recall $\textbf{E}^\ast_+  \otimes \textbf{E}_V$ is the Hadamard product of species, given by
\[
\textbf{E}^\ast_+  \otimes \textbf{E}_V [I] := \textbf{E}^\ast_+[I] \otimes \textbf{E}_V[I] 
.\]
Thus, if $I\neq 0$, we have 
\[
\textbf{E}^\ast_+  \otimes \textbf{E}_V [I] \cong   V^{\otimes I} 
.\]
We abbreviate
\begin{equation}\label{eq:abb}
\text{T}_I(\ssA_I):=\text{T}_I(\tH_{(I)} \otimes \ssA_I) 
.
\end{equation}
Let $\cH(\textbf{E}_V, \textbf{U}_\cA)$ denote the species of linear maps between components, given by
\[
\cH(\textbf{E}_V, \textbf{U}_\cA)[I] := \Hom_{\textsf{Vec}}\! \big (  \textbf{E}_V[I], \textbf{U}_\cA[I]   ) =  \Hom_{\textsf{Vec}}\! \big ( V^{\otimes I}, \cA   )
.\]
We have that $\cH(-, -)$ is the hom for the Hadamard product. Therefore we can curry $\text{T}$ to give the morphism of species
\[          
\textbf{E}^\ast_+\to \cH(\textbf{E}_V, \textbf{U}_\cA), \qquad \tH_{(I)}\mapsto \text{T}(I)
\]
where $\text{T}(I)$ is the linear map
\[
\text{T}(I): V^{\otimes I} \to \cA
,\qquad
\ssA_I \mapsto \text{T}_I(\ssA_I)
.\] 
The linear maps $\text{T}(I)$ are called $\text{T}$\emph{-products}. Notice that $\text{T}$-products are commutative in the sense that 
\[
\text{T}_I\big( \textbf{E}_V[\sigma](\ssA_I)\big)=\text{T}_I(\ssA_I) \qquad \quad \text{for all bijections}\quad \sigma:I\to I
.\]
This property holds because the system $\text{T}$ is a morphism of species, and bijections act trivially for $\textbf{U}_\cA$. This commutativity exists despite the fact that the algebra $\cA$ is noncommutative in general. 

\begin{remark}
In applications to QFT, we shall also have a causal structure on $V$. Then $\text{T}$ is meant to first order the vectors of $\ssA_I$ according to the causal structure, and then multiply in $\cA$, giving rise to this commutativity. 
\end{remark}


Let the \emph{system of generalized} $\text{T}$\emph{-products} associated to a system of $\text{T}$-products be the unique extension to a system of products for $\Sig=\textbf{L}\boldsymbol{\circ}\textbf{E}_+^\ast$ which is a homomorphism, as defined in \cite[Section 6.2]{norledge2020species}. Thus
\[  
\text{T}:  \Sig\otimes \textbf{E}_V  \to \textbf{U}_\cA
,\qquad 
\tH_F\otimes \ssA_I\mapsto \text{T}_I(\tH_F\otimes \ssA_I):=
\text{T}_{S_1}(\ssA_{S_1}) \star \dots \star \text{T}_{S_k}(\ssA_{S_k})
.\]
The currying of $\text{T}$ is denoted by
\[          
\Sig\to \cH(\textbf{E}_V, \textbf{U}_\cA), \qquad \tH_{F}\mapsto \text{T}(S_1)\dots \text{T}(S_k)
.\]
The linear maps 
\[
\text{T}(S_1)\dots \text{T}(S_k):V^{\otimes I} \to \cA
,\qquad
\ssA_I\mapsto \text{T}_I(\tH_F\otimes \ssA_I)
\] 
are called \emph{generalized} $\text{T}$-\emph{products}. Let the \emph{system of generalized} $\text{R}$\emph{-products} associated to a system of $\text{T}$-products be the restriction to the Lie algebra of primitive elements $\Zie$,
\[  
\text{R}: \Zie \otimes \textbf{E}_V  \to \textbf{U}_\cA, \qquad  
\mathtt{D}_\cS\otimes \ssA_I\mapsto
\text{R}_I(\mathtt{D}_\cS \otimes \ssA_I):=\text{T}_I(\mathtt{D}_\cS  \otimes \ssA_I)   
.\]
This is a morphism of Lie algebras, where $\textbf{U}_\cA$ is equipped with the commutator bracket. The currying of $\text{R}$ is denoted by
\[          
\Zie \to \cH(\textbf{E}_V, \textbf{U}_\cA), \qquad \mathtt{D}_\cS\mapsto \text{R}_\cS
.\]
The linear maps 
\[
\text{R}_\cS: \textbf{E}_V[I] \to \cA
,\qquad
\ssA_I\mapsto  \text{R}_I(\mathtt{D}_\cS \otimes \ssA_I) 
\] 
are called \emph{generalized} $\text{R}$-\emph{products}. From the expansion \textcolor{blue}{(\refeqq{eq:hbasisexp})} of Dynkin elements $\mathtt{D}_\cS$ in terms of the $\tH$-basis, we recover \cite[Equation 79]{ep73roleofloc},
\[ 
\text{R}_\cS=-\sum_{\cF_{F}\subseteq\bar{\cS}} (-1)^{k}\, \text{T}(S_1)\dots \text{T}(S_k)
.\]

\begin{remark}
Consider a system of products of the form
\[ 
\text{Z}:\textbf{E}^\ast_+\otimes {\textbf{E}_{V}} 
\to 
\textbf{U}_{V}
,\qquad 
\tH_{(I)}\otimes \ssA_I\mapsto \text{Z}_I(\ssA_I)
.\]
Then we obtain a new $\text{T}$-product $\text{T}'$, given by
\[  
\text{T}':  
\textbf{E}^\ast_+\otimes  \textbf{E}_V  \to \textbf{U}_\cA
,\qquad 
\text{T}'_I(\ssA_I)
:= 
\sum_{P}\text{T}_P     
\big(
\text{Z}_{S_1} (\ssA_{S_1})\dots  \text{Z}_{S_k}(\ssA_{S_k})
\big)    
.\]
The sum is over all partitions $P=\{S_1,\dots,S_k\}$ of $I$. This construction underlies renormalization in pAQFT \cite[Section 3.6.2]{dutsch2019perturbative}, which deals with the remaining ambiguity of $\text{T}$-products after imposing causal factorization, and perhaps other renormalization conditions.
\end{remark}


  
\subsection{Reverse T-Products}\label{sec:rev T-Exponentials}

The system of \emph{reverse generalized} $\text{T}$\emph{-products} $\overline{\text{T}}$ of a system of generalized $\text{T}$-products is given by precomposing $\text{T}$ with the antipode \textcolor{blue}{(\refeqq{eq:antipode})} of $\Sig\otimes \textbf{E}_V$, thus
\[ 
\overline{\text{T}}:\Sig \otimes {\textbf{E}}_V\to \textbf{U}_{\cA^{\op}}, 
\qquad 
\overline{\text{T}}_I(\tH_{F}\otimes \ssA_I)
:=
{\text{T}}_I\big(\overline{\tH}_{F}\otimes \ssA_I\big)
.\]
Since the antipode is a homomorphism $\Sig\otimes \textbf{E}_V\to (\Sig\otimes \textbf{E}_V)^{\op, \text{cop}}$ \cite[Proposition 1.22 (iii)]{aguiar2010monoidal}, this is a system of generalized $\text{T}$-products into the opposite algebra $\textbf{U}_{\cA^{\op}}$. The image of $\tH_{(I)}$ under the currying of $\overline{\text{T}}$ is called the \emph{reverse} $\text{T}$\emph{-product} 
\[
\overline{\text{T}}(I): \textbf{E}_V[I] \to \cA^{\text{op}}
.\] 
From \textcolor{blue}{(\refeqq{antipode})}, we obtain
\[
\overline{\text{T}}(I) 
=
\sum_{F\in \Sigma[I]} (-1)^{k}\, \text{T}(S_1)\dots \text{T}(S_k)
.\]
Note that reverse $\text{T}$-products in \cite[Equation 11]{ep73roleofloc} are defined to be $(-1)^n\, \overline{\text{T}}(I)$. Our definition agrees with \cite[Definition 15.35]{perturbative_quantum_field_theory}.

\subsection{T-Exponentials} \label{sec:T-Exponentials}

For details on series in species, see \cite[Section 12]{aguiar2010monoidal}. The (scaled) \emph{universal series} $\mathtt{G}(c)$ is the group-like series of $\Sig$ given by
\[   
\mathtt{G}(c):  \textbf{E} \to  \Sig
,\qquad  
\tH_{I}\mapsto \mathtt{G}(c)_{I}:= c^n\,  \tH_{(I)}    
\qquad  \quad   
\text{for} \quad  
c\in \bC  
.\]
The fundamental nature of this series is described in \cite[Section 13.6]{aguiar2013hopf}. The series $\text{s} \circ \mathtt{G}(c)$ which is the composition of $\mathtt{G}(c)$ with the antipode $\text{s}$ of $\Sig$ is given by
\begin{equation}\label{inverseuni}
\text{s} \circ \mathtt{G}(c): \textbf{E} \to  \Sig
,\qquad  
\tH_I\mapsto \big (\text{s} \circ \mathtt{G}(c)\big )_I= c^n\, \overline{\tH}_{(I)}
.
\end{equation}
Let $\cA[[\formj]]$ denote the $\bC$-algebra of formal power series in the formal symbol $\formj$ with coefficients in $\cA$. Given a system of generalized $\text{T}$-products 
\[
\text{T}:\Sig \otimes {\textbf{E}}_V\to \textbf{U}_\cA
\] 
let the $\text{T}$-\emph{exponential} $\mathcal{S}:=\mathcal{S}_{\mathtt{G}(c)}$ of this system be the $\cA[[\formj]]$-valued function on the vector space $V$ associated to the series $\mathtt{G}(c)$, as constructed in \cite[Section 6.3]{norledge2020species}. Thus, we have\footnote{\ we use the abbreviations \textcolor{blue}{(\refeqq{eq:simpletensors})} and \textcolor{blue}{(\refeqq{eq:abb})}, and also $\text{T}_n:=\text{T}_{[n]}$}
\begin{equation}\label{eq:tsexp}
\mathcal{S}: V\to \cA[[ \formj ]] , 
\qquad        
\ssA\mapsto  \mathcal{S}(\formj\! \ssA)
=
\sum^\infty_{n=0} \dfrac{c^n}{n!} \text{T}_n 
\underbrace{\big ( \formj\! \ssA\otimes  \cdots \otimes \formj\! \ssA \big )}_{ \text{ $n$ times } }   
:=
\sum^\infty_{n=0} \dfrac{\formj^n c^n}{n!} \text{T}_n(\ssA^n)     
.
\end{equation}
By \cite[Equation 34]{norledge2020species} and \textcolor{blue}{(\refeqq{inverseuni})}, the $\text{T}$-exponential for the system of reverse $\text{T}$-products is the inverse of $\mathcal{S}$ as an element of the $\bC$-algebra of functions $\text{Func}(V,\cA[[\formj]])$, given by
\[
\mathcal{S}^{-1}: V\to \cA[[ \formj ]] , 
\qquad        
\ssA\mapsto  \mathcal{S}^{-1}(\formj\! \ssA) :=
\sum^\infty_{n=0} \dfrac{\formj^n c^n}{n!} \overline{\text{T}}_n(\ssA^n)
=
\sum^\infty_{n=0} \dfrac{\formj^n c^n}{n!} \text{T}_n(\overline{\tH}_{(n)}\otimes \ssA^n)
.\]
Therefore
\[
\mathcal{S}(\formj\! \ssA)\star \mathcal{S}^{-1}(\formj\! \ssA)=
\mathcal{S}^{-1}(\formj\! \ssA)\star \mathcal{S}(\formj\! \ssA) = 1_\cA
\]
for all $\ssA\in V$. This appears in e.g. \cite[Equation 2]{ep73roleofloc}.

\section{Perturbation of T-Products}

For the perturbation of $\text{T}$-products by a certain up coderivation of $\textbf{E}$ which gives the S-matrix scheme $\mathcal{S}_{\formg \ssS}(\formj\! \ssA)= \mathcal{S}(\formg \ssS+ \formj\! \ssA)$, see \cite[Section 10.1]{norledge2020species}.

\subsection{Perturbation of T-Products by Steinmann Arrows}  \label{sec:Perturbation of T-Products by Steinmann Arrows}
Suppose we have a system of generalized \hbox{$\text{T}$-products}
\[   
\text{T}:  \Sig \otimes \textbf{E}_V  \to \textbf{U}_\cA
,\qquad  
\tH_F\otimes \ssA_I \mapsto \text{T}_I(\tH_F\otimes \ssA_I)
.\]
Following \cite[Section 6.4]{norledge2020species}, given a choice of decorations vector $\ssS\in V$, we can use the retarded Steinmann arrow \textcolor{blue}{(\refeqq{steindown})} to perturb $\text{T}$ as follows. 

Recall the decorated Hopf algebra $\Sig\otimes\textbf{E}_V$ from \autoref{Decorations}. Recall also the derivative $(\Sig\otimes\textbf{E}_V)'$ of $\Sig\otimes\textbf{E}_V$ from \autoref{Derivations and Coderivations of Sig}, given by
\[
(\Sig\otimes\textbf{E}_V)'[I] =  \Sig[\ast I]\otimes V\otimes V^{\otimes I}
.\]
We have the up derivation of $\Sig\otimes\textbf{E}_V$ which is the decorated analog of the retarded Steinmann arrow, given by
\[
\Sig\otimes\textbf{E}_V \to (\Sig\otimes\textbf{E}_V)', \qquad  
\tH_F \otimes \ssA_{i_1} \otimes \dots \otimes   \otimes \ssA_{i_n}
\ \mapsto \    
\ast \downarrow \tH_F \otimes  \ssS\otimes  \ssA_{i_1} \otimes \dots \otimes   \otimes \ssA_{i_n}
.\]
This is indeed still an up derivation by \cite[Proposition 6.4]{norledge2020species}. Analogous to the setting without decorations, we have the induced raising operators and associated $\textbf{E}$-action by iterating, which, after currying, give us the homomorphism
\[
\Sig\otimes\textbf{E}_V \to (\Sig\otimes\textbf{E}_V)^{\textbf{E}}
\]
\[ 
\tH_F \otimes   \ssA_{i_1} \otimes \dots \otimes   \otimes \ssA_{i_n} 
\ \mapsto \
\sum^{\infty}_{r=0} \underbrace{\downarrow \dots  \downarrow}_{\text{$r$ times}} \tH_F  \otimes\underbrace{\ssS \otimes \dots \otimes \ssS}_{\text{$r$ times}}\otimes    \ssA_{i_1} \otimes \dots \otimes   \otimes \ssA_{i_n} 
.\]
This is a homomorphism by \cite[Theorem 5.1]{norledge2020species}. Then, a new `perturbed' system of generalized \hbox{$\text{T}$-products} is given by composing this homomorphism with $\text{T}^{\textbf{E}}$ (defined in \textcolor{blue}{(\refeqq{eq:endo})}),
\[ 
\widetilde{\text{T}} : \Sig\otimes\textbf{E}_V \to     (\Sig\otimes\textbf{E}_V )^{\textbf{E}} \xrightarrow{\text{T}^\textbf{E}}  (\textbf{U}_\cA)^{\textbf{E}} \cong  \textbf{U}_{\cA[[\formg]]}
.\]
For the result that $(\textbf{U}_\cA)^{\textbf{E}} \cong  \textbf{U}_{\cA[[\formg]]}$, see \cite[Section 4]{norledge2020species}. 

\begin{remark}
The fact $\widetilde{\text{T}}$ is still a homomorphism, and is thus still a generalized system of products, depends crucially on the fact the Steinmann arrow is a derivation \cite[Theorem 5.1]{norledge2020species}, and that $(-)^\textbf{E}$ is a monoidal functor \cite[Section 2.5]{norledge2020species}. We can similarly perturb a system of generalized $\text{R}$-products, which uses the fact the Steinmann arrow is a biderivation.
\end{remark}

We now unpack all this formalism to give a fully explicit description of the new perturbed system of products. Let us abbreviate
\[
\ssS_Y\ssA_I=  \ssS_{y_1}\otimes \dots \otimes \ssS_{y_r} \otimes \ssA_{i_1} \otimes \dots \otimes \ssA_{i_n} \in \textbf{E}_V[Y\sqcup I]
.\]
Let
\begin{equation}\label{eq:retardprod}
\text{R}_{Y;I}(\ssS_Y;\ssA_I)
:= 
\underbrace{\text{T}_{Y\sqcup I}( \mathtt{R}_{(Y;I)} \otimes \ssS_Y \ssA_I)
=
\sum_{Y_1 \sqcup Y_2=Y}  \overline{\text{T}}_{Y_1 \sqcup \emptyset}(\ssS_{Y_1}) \star \text{T}_{Y_2 \sqcup I}( \ssS_{Y_2} \ssA_I)}_{\text{by } \textcolor{blue}{(\refeqq{eq:retardadvan}})} 
\end{equation}
Then the new perturbed system is given by\footnote{\ we abbreviate $\text{R}_{r;I}(\ssS^{\, r};\ssA_I):=\text{R}_{[r];I}(\ssS^{\, [r]};\ssA_I)=\text{R}_{[r];I}(\underbrace{\ssS\otimes \dots \otimes \ssS}_{\text{$r$ times}} \, ;  \ssA_I)$}
\[ 
\widetilde{\text{T}}:\Sig\otimes\textbf{E}_V \to \textbf{U}_{\cA[[\formg]]}
,\quad 
\tH_F\otimes \ssA_I \mapsto \sum_{r=0}^\infty\    \sum_{r_1 +\, \cdots\,  + r_k=r } \dfrac{\formg^{r}}{r!}  \text{R}_{r_1;S_1}(\ssS^{\, r_1};\ssA_{S_1})\star \cdots\star \text{R}_{r_k;S_k}(\ssS^{\, r_k};\ssA_{S_k}) 
.\]
In particular, the restriction to $\textbf{E}_+^\ast\otimes\textbf{E}_V$, i.e. the new perturbed $\text{T}$-product, is given by
\begin{align*}
\widetilde{\text{T}}_I(\ssA_I)&= \sum_{r=0}^\infty \dfrac{\formg^{r}}{r!}  \text{R}_{r;I}(\ssS^{\, r};\ssA_I) \\
&=\text{T}_I(\ssA_I) + \underbrace{\formg\, \text{T}_{\ast_1 I}(\downarrow \tH_{(I)} \otimes \ssS  \ssA_I) + \dfrac{\formg^2}{2!} \text{T}_{\ast_2 \ast_1 I}(\downarrow \downarrow \tH_{(I)} \otimes \ssS \ssS   \ssA_I)  + \cdots}_{\text{perturbation}}\ .
\end{align*}
Similar, we can perturb a system of generalized T-products using the advanced Steinmann arrow. 

We let $\mathcal{V}_{\! \formg\ssS}$, respectively $\mathcal{W}_{\! \formg\ssS}$, denote the $\text{T}$-exponential (as defined in \textcolor{blue}{(\refeqq{eq:tsexp})}) for the new perturbed system of generalized $\text{T}$-products using the retarded, respectively advanced, Steinmann arrows. Thus
\[
\mathcal{V}_{\! \formg\ssS}: V\to \cA[[\formg,\! \formj]],
\qquad
\mathcal{V}_{\! \formg\ssS}(\formj\! \ssA)
:=
\sum_{n=0}^\infty 
\dfrac{\formj^n c^n}{n!}\, \wt{\text{T}}_n (\ssA^n)
=
\sum_{n=0}^\infty \sum_{r=0}^\infty 
\dfrac{\formg^{r} \formj^n c^{r+n}}{r!\, n!}\,  \text{R}_{r;n} (\ssS^{\, r} ; \ssA^n)         
\]
and
\[
\mathcal{W}_{\! \formg\ssS}: V\to \cA[[\formg,\! \formj]],
\qquad
\mathcal{W}_{\! \formg\ssS}(\formj\! \ssA):=\sum_{n=0}^\infty 
\dfrac{\formj^n c^n}{n!}\, \wt{\text{T}}_n (\ssA^n)
=
\sum_{n=0}^\infty \sum_{r=0}^\infty 
\dfrac{\formg^{r} \formj^n c^{r+n}}{r! \, n!}\,  \text{A}_{r;n} (\ssS^{\, r};\ssA^n)         
\]
where
\[
\text{A}_{Y;I}(\ssS_Y;\ssA_I)
:= 
\underbrace{\text{T}_{Y\sqcup I}(  \mathtt{A}_{(Y;I)} \otimes \ssS_Y \ssA_I)
	=
	\sum_{Y_1 \sqcup Y_2=Y}  \text{T}_{Y_1 \sqcup I}( \ssS_{Y_1} \ssA_I) \star \overline{\text{T}}_{Y_2\sqcup \emptyset}(\ssS_{Y_2})}_{\text{by } \textcolor{blue}{(\refeqq{eq:retardadvan}})} 
.\]

\begin{thm}
We have
\[
\mathcal{V}_{\formg \ssS}(\formj\! \ssA)= 
\mathcal{S}^{-1}( \formg \ssS)\star \mathcal{S}(\formg \ssS +\formj\! \ssA )
\qquad \text{and} \qquad 
\mathcal{W}_{\formg \ssS}(\formj\! \ssA)= 
\mathcal{S}(\formg \ssS +\formj\! \ssA )\star \mathcal{S}^{-1}( \formg \ssS)
.\]
\end{thm}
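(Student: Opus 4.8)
The plan is to prove both identities by expanding each side as a formal power series in $\formg$ and $\formj$ over $\cA$ and comparing the coefficient of $\formg^r\formj^n$. The retarded and advanced statements are mirror images of one another — the advanced case replaces $\mathtt{R}_{(Y;I)}$ by $\mathtt{A}_{(Y;I)}$ and reverses the order of the two $\star$-factors — so it suffices to establish $\mathcal{V}_{\formg\ssS}(\formj\!\ssA)=\mathcal{S}^{-1}(\formg\ssS)\star\mathcal{S}(\formg\ssS+\formj\!\ssA)$ in detail and then run the symmetric argument for $\mathcal{W}$.

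First I would expand the two factors on the right. Multilinearity of $\text{T}_m$ combined with the commutativity of generalized $\text{T}$-products (so that $\text{T}_m$ depends only on the multiset of its inputs) turns $\text{T}_m\big((\formg\ssS+\formj\!\ssA)^{\otimes m}\big)$ into a binomial sum, giving
\[
\mathcal{S}(\formg\ssS+\formj\!\ssA)=\sum_{q,n\geq 0}\frac{\formg^q\formj^n c^{q+n}}{q!\,n!}\,\text{T}_{q+n}(\ssS^{\,q}\otimes\ssA^n),
\]
while the inverse factor is $\mathcal{S}^{-1}(\formg\ssS)=\sum_{p\geq 0}\frac{\formg^p c^p}{p!}\,\overline{\text{T}}_p(\ssS^{\,p})$ by the reverse $\text{T}$-exponential formula of \autoref{sec:T-Exponentials}. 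Forming the product in $\cA[[\formg,\formj]]$ and collecting the coefficient of $\formg^r\formj^n$ (with $r=p+q$) produces
\[
c^{r+n}\sum_{p+q=r}\frac{1}{p!\,q!\,n!}\,\overline{\text{T}}_p(\ssS^{\,p})\star\text{T}_{q+n}(\ssS^{\,q}\otimes\ssA^n).
\]

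It remains to match this with the coefficient $\tfrac{c^{r+n}}{r!\,n!}\,\text{R}_{r;n}(\ssS^{\,r};\ssA^n)$ of $\mathcal{V}_{\formg\ssS}(\formj\!\ssA)$. Starting from the definition \eqref{eq:retardprod} of $\text{R}_{[r];[n]}(\ssS^{\,r};\ssA^n)$ as a sum over set decompositions $Y_1\sqcup Y_2=[r]$, I would use the symmetry of $\text{T}$ and $\overline{\text{T}}$ in equal decorations: every subset $Y_1$ of size $p$ contributes the same value $\overline{\text{T}}_p(\ssS^{\,p})\star\text{T}_{q+n}(\ssS^{\,q}\otimes\ssA^n)$ with $q=r-p$, and there are $\binom{r}{p}$ of them, so
\[
\text{R}_{r;n}(\ssS^{\,r};\ssA^n)=\sum_{p+q=r}\frac{r!}{p!\,q!}\,\overline{\text{T}}_p(\ssS^{\,p})\star\text{T}_{q+n}(\ssS^{\,q}\otimes\ssA^n).
\]
Dividing by $r!\,n!$ reproduces the coefficient computed above, finishing the retarded case; the advanced identity then follows identically from the advanced expansion $\text{A}_{Y;I}(\ssS_Y;\ssA_I)=\sum_{Y_1\sqcup Y_2=Y}\text{T}_{Y_1\sqcup I}(\ssS_{Y_1}\ssA_I)\star\overline{\text{T}}_{Y_2}(\ssS_{Y_2})$, which simply swaps the order of the two $\star$-factors.

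The main obstacle is purely the bookkeeping in the last step: one must faithfully convert the sum over unordered set splittings $Y_1\sqcup Y_2=[r]$ defining the retarded element into the binomial-weighted sum over integer splittings $p+q=r$, and this conversion depends essentially on the commutativity of the (reverse) $\text{T}$-products. A more conceptual route, which sidesteps the explicit count, is to recognise $\mathcal{V}_{\formg\ssS}$ as the $\text{T}$-exponential of the perturbed homomorphism $\widetilde{\text{T}}$ and to invoke that $\mathcal{S}_{(-)}$ is a $\bC$-algebra homomorphism on series; factoring the relevant series of $\Sig$ as the product of $\text{s}\circ\mathtt{G}(c)$ at $\formg\ssS$ with $\mathtt{G}(c)$ at $\formg\ssS+\formj\!\ssA$ then yields the product formula at once.
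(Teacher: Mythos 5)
Your proposal is correct and is essentially the paper's own argument: the paper likewise expands $\mathcal{V}_{\formg\ssS}(\formj\!\ssA)$ via $\text{R}_{r;n}(\ssS^{\,r};\ssA^n)=\sum_{Y_1\sqcup Y_2=[r]}\overline{\text{T}}_{Y_1\sqcup\emptyset}(\ssS^{Y_1})\star\text{T}_{Y_2\sqcup[n]}(\ssS^{Y_2}\ssA^n)$ and resums the double sum into $\mathcal{S}^{-1}(\formg\ssS)\star\mathcal{S}(\formg\ssS+\formj\!\ssA)$, which is your coefficient comparison run in the opposite direction, relying on the same two ingredients (commutativity converting set splittings $Y_1\sqcup Y_2=[r]$ into binomial-weighted integer splittings, and the binomial expansion of $\text{T}_m$ at $\formg\ssS+\formj\!\ssA$), with the advanced case handled symmetrically. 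Your closing remark about invoking the homomorphism property of $\mathcal{S}_{(-)}$ matches the conceptual route sketched in the paper's introduction, but the paper's actual proof is the direct computation you gave.
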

\begin{proof}
We have
\[         
\text{R}_{r;I}(\ssS^{\, r}; \ssA_I)=
\sum_{Y_1\sqcup Y_2=[r]} \overline{\text{T}}_{Y_1 \sqcup \emptyset}(\ssS^{Y_1})   
\star 
\text{T}_{Y_2\sqcup I}( \ssS^{Y_2}\ssA_I) 
.\]
Then
\begin{align*}
\mathcal{V}_{\formg \ssS}(\formj\! \ssA)
=&\  
\sum_{n=0}^\infty \sum_{r=0}^\infty 
\dfrac{\formg^{r} \formj^n c^{r+n}}{r!\, n!}\,  \text{R}_{r;n} (\ssS^{\, r} ; \ssA^n) \\[6pt]
=&\  
\sum_{n=0}^\infty \sum_{r=0}^\infty 
\dfrac{\formg^{r} \formj^n c^{r+n}}{r!\, n!} 
\sum_{Y_1\sqcup Y_2=[r]} \overline{\text{T}}_{Y_1 \sqcup \emptyset}(\ssS^{Y_1})   
\star 
\text{T}_{Y_2\sqcup [n]}(\ssS^{Y_2}\ssA^n) \\[6pt]
=& \ 
\sum^\infty_{r=0} \dfrac{\formg^{r} c^r}{r!} \overline{\text{T}}_{r+0}(\ssS^{\, r}) 
\star
\sum_{n=0}^\infty \sum_{r=0}^\infty \dfrac{c^n}{n!} \text{T}_{r+n} (\ssS^{\, r} \ssA^n) 
\\[6pt]
=& \ 
\mathcal{S}^{-1}( \formg \ssS)\star \mathcal{S}(\formg \ssS +\formj\! \ssA )
\end{align*}
The proof for $\mathcal{W}_{\formg \ssS}(\formj\! \ssA)$ is similar.
\end{proof}

\begin{cor}[Bogoliubov's Formula {\cite[Chapter 4]{Bogoliubov59}}] 
We have
\begin{equation} \label{eq:Bog}
\widetilde{\text{T}}_i(\ssA)
=
\dfrac{1}{c}\,  \dfrac{d}{d \formj }  \Bigr|_{\formj=0} \mathcal{V}_{\formg \ssS}(\formj\! \ssA)
.
\end{equation}
\end{cor}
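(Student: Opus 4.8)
The plan is to read off the coefficient of $\formj^1$ in the formal power series $\mathcal{V}_{\formg\ssS}(\formj\!\ssA)$, which is precisely what the operator $\frac{1}{c}\frac{d}{d\formj}\bigr|_{\formj=0}$ extracts. First I would recall that, by its very definition, $\mathcal{V}_{\formg\ssS}$ is the $\text{T}$-exponential \refeqq{eq:tsexp} of the perturbed system of generalized $\text{T}$-products $\widetilde{\text{T}}$, so that, as an element of $\cA[[\formg,\formj]]$ viewed as a formal power series in $\formj$,
\[
\mathcal{V}_{\formg\ssS}(\formj\!\ssA)=\sum_{n=0}^\infty \dfrac{\formj^n c^n}{n!}\,\widetilde{\text{T}}_n(\ssA^n).
\]
Everything below is an algebraic manipulation in this formal power series ring, so no analytic convergence is at stake.

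Next I would differentiate termwise in $\formj$, which is legitimate coefficientwise in $\cA[[\formg,\formj]]$: the degree-$n$ term $\tfrac{\formj^n c^n}{n!}\widetilde{\text{T}}_n(\ssA^n)$ maps to $\tfrac{n\,\formj^{n-1} c^n}{n!}\widetilde{\text{T}}_n(\ssA^n)$. Evaluating at $\formj=0$ then annihilates every term with $n\neq 1$ and leaves exactly the $n=1$ contribution $c\,\widetilde{\text{T}}_1(\ssA^{[1]})$. At this point I would invoke the elementary identification $\ssA^{[1]}=\ssA$ from \refeqq{eq:simpletensors}, together with the commutativity (species-equivariance) of $\widetilde{\text{T}}$ under the unique bijection $\{i\}\to[1]$, to rewrite $\widetilde{\text{T}}_1(\ssA)=\widetilde{\text{T}}_i(\ssA)$. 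Dividing through by $c$ yields
\[
\dfrac{1}{c}\,\dfrac{d}{d\formj}\Bigr|_{\formj=0}\mathcal{V}_{\formg\ssS}(\formj\!\ssA)=\widetilde{\text{T}}_i(\ssA),
\]
which is \refeqq{eq:Bog}.

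The argument is entirely routine, so I do not anticipate a genuine obstacle; the only items warranting a word of justification are the coefficientwise nature of formal differentiation and the two bookkeeping identifications $\ssA^{[1]}=\ssA$ and $\widetilde{\text{T}}_1=\widetilde{\text{T}}_i$. As a consistency check one could instead start from the fully unpacked double series for $\mathcal{V}_{\formg\ssS}(\formj\!\ssA)$ and observe that $\frac{d}{d\formj}\bigr|_{\formj=0}$ selects the layer $\sum_{r=0}^\infty \tfrac{\formg^r c^{r+1}}{r!}\,\text{R}_{r;1}(\ssS^{\,r};\ssA)$, which after dividing by $c$ reproduces the explicit perturbed singleton $\text{T}$-product; both routes agree.
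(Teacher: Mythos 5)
Your proposal is correct and is essentially identical to the paper's proof: both differentiate the formal power series $\mathcal{V}_{\formg\ssS}(\formj\!\ssA)=\sum_{n}\tfrac{\formj^n c^n}{n!}\widetilde{\text{T}}_n(\ssA^n)$ termwise, evaluate at $\formj=0$ to isolate the $n=1$ term $c\,\widetilde{\text{T}}_1(\ssA)$, and divide by $c$. Your explicit remarks on coefficientwise differentiation and the identification $\widetilde{\text{T}}_1=\widetilde{\text{T}}_i$ merely spell out bookkeeping the paper leaves implicit.
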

\begin{proof}
We have
\[
\dfrac{d}{d \formj }\mathcal{V}_{\formg \ssS}(\formj\! \ssA)
=
\dfrac{d}{d \formj } \sum_{n=0}^\infty 
\dfrac{\formj^n c^n}{n!}\, \widetilde{\text{T}}_n (\ssA^n)
=
\sum_{n=1}^\infty 
\dfrac{\formj^{n-1} c^n}{(n-1)!}\,  \widetilde{\text{T}}_n (\ssA^n)
.\]
Then, putting $\formj=0$, we obtain
\[
\dfrac{d}{d \formj }  \Bigr|_{\formj=0}  \mathcal{V}_{\formg \ssS}(\formj\! \ssA) 
= 
c\,  \widetilde{\text{T}}_1 (\ssA)
.\qedhere \]
\end{proof}

This formula was originally motivated by the path integral heuristic, see e.g. \cite[Remark 15.16]{perturbative_quantum_field_theory}. 

\subsection{$\text{R}$-Products and $\text{A}$-Products} 
The linear maps $\text{R}(Y;I)$ which are given by
\[
\text{R}(Y;I):\textbf{E}_V^{[Y]}[I] \to \cA
,\qquad
\ssS_Y \ssA_I \mapsto \text{R}_{Y;I}(\ssS_Y \ssA_I)
\] 
are called R\emph{-products}. In the case of singletons $I=\{i\}$, the maps $\text{R}(Y;i)$ are called \emph{total} R\emph{-products}. By \textcolor{blue}{(\refeqq{eq:retardadvan})}, $\text{R}$-products are given in terms of $\text{T}$-products and reverse $\text{T}$-products by
\[
\text{R}(Y;I)=\sum_{Y_1 \sqcup Y_2 =Y} \overline{\text{T}}(Y_1) \star  \text{T}(Y_2\sqcup I)  
.\]
Then
\[
\widetilde{\text{T}}(I)= \sum_{r=0}^\infty  \dfrac{c^r}{r!} \text{R}(r;I)
.\]
In a similar way, we can define the A-\emph{products} $\text{A}(Y;I)$, so that 
\[
\text{A}(Y;I)=\sum_{Y_1 \sqcup Y_2 =Y}  \text{T}(Y_1 \sqcup  I)   \star \overline{\text{T}}(Y_2)
.\]
The total R-products are both R-products and generalized R-products, which is due to the double description appearing in \autoref{rem:double}. A related result is \cite[Proposition 109]{aguiar2013hopf}.

\begin{remark}
In the literature, the total retarded products in our sense are sometimes called retarded products, and the retarded products in our sense are then called generalized retarded products, e.g. \cite{polk58}, \cite[Exercise 3.3.16]{dutsch2019perturbative}. 
\end{remark}

\part{Perturbative Algebraic Quantum Field Theory} \label{part 2}

We now apply the theory we have developed to the case of a real scalar quantum field on a Minkowski spacetime, as described by pAQFT.\footnote{\ although pAQFT deals more generally with perturbative \hbox{Yang-Mills} gauge theory on curved spacetimes} Mathematically, the important extra property is a causal structure on the vector space of decorations $V$, which allows one to impose causal factorization. Connections between QFT and species have been previously studied in \cite{MR2036353}, \cite{MR2862982}, \cite{MR3753672}.

Our references for pAQFT are \cite{dutfred00}, \cite{rejzner2016pQFT}, \cite{dutsch2019perturbative}, \cite{perturbative_quantum_field_theory}. We mainly adopt the notation and presentation of \cite{perturbative_quantum_field_theory}. Key features of pAQFT are its local, i.e. \hbox{sheaf-theoretic}, approach, the (closely related) use of adiabatic switching of interaction terms to avoid IR-divergences, and the interpretation of renormalization as the extension of distributions to the fat diagonal to avoid UV-divergences. The Wilsonian cutoff, sometimes called heuristic quantum field theory, may be rigorously formulated within pAQFT \cite{dutfred09}, \cite{dut12}, \cite[Section 3.8]{dutsch2019perturbative}, \cite[Section 16]{perturbative_quantum_field_theory}. 

\section{Spacetime and Field Configurations} 

Let $\cX\cong \bR^{1,p}$ denote a $(p+1)$-dimensional Minkowski spacetime, for $p\in \bN$. Thus, $\cX$ is a real vector space equipped with a metric tensor which is a symmetric nondegenerate bilinear form $\cX\times \cX\to \bR$ with signature $(1,p)$. The bilinear form gives rise to a volume form on $\cX$, which we denote by $\text{dvol}_\cX\in \Omega^{p+1}(\cX)$. For regions of spacetime $X_1,X_2\subset \cX$, we write 
\[
X_1\! \vee\! \! \wedge X_2
\] 
if one cannot travel from $X_1$ to $X_2$ on a future-directed timelike or lightlike curve. We have the set valued species $\cX^{(-)}$ given by
\[
I\ \mapsto \ \cX^I:= \big \{ \text{functions}\ I\to \cX \big\}   
.\]



For simplicity, we restrict ourselves to the Klein-Gordan real scalar field on $\cX$. Therefore, let $E\to \cX$ be a smooth real vector bundle over $\cX$ with one-dimensional fibers. An (off-shell) \emph{field configuration} $\Phi$ is a smooth section of the bundle $E\to \cX$,
\[
\Phi:\cX\hookrightarrow E
,\qquad
x\mapsto \Phi(x)
.\]
The space of all field configurations, denoted $\Gamma(E)$, has the structure of a Fr\'echet topological (real) vector space. 

\begin{remark}
We can always pick an isomorphism $(E\to \cX) \cong (\cX\times \bR\to \cX)$, which induces an isomorphism $\Gamma(E)\cong C^\infty(\cX,\bR)$, so that field configurations are modeled as smooth functions $\cX\to \bR$. 
\end{remark}

Let $E^\ast\to \cX$ denote the dual vector bundle of $E$, and let the canonical pairing be denoted by
\[
\la -,- \ra : E^\ast \otimes E \to \bR
.\]
Let a \emph{compactly supported distributional section} $\ga$ be a distribution of field configurations
\[
\ga:\Gamma(E)\to \bR
,\] 
i.e. an element of the topological dual vector space of $\Gamma(E)$, which is modeled as a sequence $(\ga_j)_{j\in \bN}$ of smooth compactly supported sections of the dual bundle $E^\ast\to \cX$,
\[
\ga_j:\cX \hookrightarrow E^\ast, \qquad j\in \bN
,\]
where the modeled distribution is recovered as the following limit of integrals,
\[
\Gamma(E)\to \bR, \qquad \Phi \mapsto \underbrace{\int_{x\in \cX}  \big \la \ga(x), \Phi(x)  \big \ra   \text{dvol}_\cX := \lim_{j\to \infty} \int_{x\in \cX} \la \ga_j(x) , \Phi(x) \ra    \text{dvol}_\cX}_{\text{sometimes called \emph{generalized function} notation}}
.\]
The space of all compactly supported distributional sections is denoted $\Gamma_{\text{cp}}'(E^\ast)$. By e.g. \cite[Lemma 2.15]{bar15}, \emph{all} distributions $\Gamma(E)\to \bR$ may be obtained as compactly supported distributional sections in this way. 


We can pullback the vector bundle $E^\ast$ to $\cX^I$ along each canonical projection 
\[
\cX^I\to \cX^{\{i\}}\cong \cX 
,\qquad 
i\in I
.\] 
The tensor product of these $n$ many pullback bundles is the exterior tensor product bundle $(E^\ast)^{ \boxtimes I }$. This defines a presheaf of smooth vector bundles on $\sfS$,  
\[
\sfS^{\op}\to \textsf{Diff}_{/ \cX}
, \qquad 
I\mapsto (E^\ast)^{ \boxtimes I }
.\]
By taking compexified compactly supported distributional sections ${{\Gamma'}^\bC}_{\! \! \! \! \! \! \text{cp}}(-):=\Gamma_{\text{cp}}'(-)\otimes_\bR \bC$, we obtain the complex vector species ${{\boldsymbol{\Gamma}'}^\bC}_{\! \! \! \! \! \! \text{cp}}(E^\ast)$, given by
\[
{{\boldsymbol{\Gamma}'}^\bC}_{\! \! \! \! \! \! \text{cp}}(E^\ast)[I]:= {{\Gamma'}^\bC}_{\! \! \! \! \! \! \text{cp}}\Big ((E^\ast)^{ \boxtimes I }\Big)
.\]
Of course, ${{\boldsymbol{\Gamma}'}^\bC}_{\! \! \! \! \! \! \text{cp}}(E^\ast)$ does not `factorize' in the sense that it is not a monoidal functor,
\begin{equation} \label{fact}
{{\boldsymbol{\Gamma}'}^\bC}_{\! \! \! \! \! \! \text{cp}}(E^\ast)[I] \ncong {{\boldsymbol{\Gamma}'}^\bC}_{\! \! \! \! \! \! \text{cp}}(E^\ast)[i_1]\otimes \dots \otimes {{\boldsymbol{\Gamma}'}^\bC}_{\! \! \! \! \! \! \text{cp}}(E^\ast)[i_n]  
\end{equation}
where $I=\{ i_1,\dots,i_n \}$. There are more distributional sections then just those coming from the tensor product.


\section{Observables} \label{sec:obs}


An off-shell \emph{observable} $\emph{\textsf{O}}$ is a smooth functional of field configurations into the complex numbers,
\[     
\emph{\textsf{O}}:\Gamma(E)\to \bC
,\qquad 
\Phi\mapsto \emph{\textsf{O}}(\Phi)  
.\]
The space of all observables is denoted $\text{Obs}$. We can pointwise multiply observables, sometimes called the \emph{normal ordered product}, so that observables form a commutative $\bC$-algebra,
\[ 
\text{Obs}\otimes \text{Obs} \to \text{Obs}, \qquad \emph{\textsf{O}}_1\otimes \emph{\textsf{O}}_2 \mapsto  \emph{\textsf{O}}_1 \cdot \emph{\textsf{O}}_2         
\]
where
\[      
\emph{\textsf{O}}_1 \cdot \emph{\textsf{O}}_2(\Phi):=  \! \! \! \underbrace{\emph{\textsf{O}}_1(\Phi)\emph{\textsf{O}}_2(\Phi)}_{\text{multiplication in $\bC$}} \! \!  \!  
.\]
Thus, we may form the commutative algebra in species $\textbf{U}_{\text{Obs}}$, given by $\textbf{U}_{\text{Obs}}[I]=\text{Obs}$. 

A \emph{linear observable} $\emph{\textsf{O}}\in \text{Obs}$ is an observable which is additionally a linear functional, that is
\[
\emph{\textsf{O}}( \Phi_1 + \Phi_2 ) = \emph{\textsf{O}}(\Phi_1) + \emph{\textsf{O}}(\Phi_2)
\qquad \text{and} \qquad
\emph{\textsf{O}}(c \Phi) = c \emph{\textsf{O}}(\Phi) \qquad \text{for}\quad  c\in \bC
.\] 
The space of linear observables is denoted $\text{LinObs}\subset \text{Obs}$. In particular, for each spacetime event $x\in \cX$, we have the \emph{field observable} $\boldsymbol{\Phi}(x) \in \text{LinObs}$, given by
\[    
\boldsymbol{\Phi}(x):    \Gamma(E)\to \bC,  \qquad  \Phi\mapsto \Phi(x)
.\] 
We now show how linear observables and so-called polynomial observables arise species-theoretically, via (generalized) systems of products for the species $\textbf{E}$ and $\textbf{X}=\mathcal{P}(\textbf{E})$. 

Let $\textbf{X}$ denote the species given by $\textbf{X}\big [\{i\}\big]:=\bC$ for singletons and $\textbf{X}[I]:=0$ otherwise. We denote $\tH_i:=1\in \textbf{X}\big [\{i\}\big]$. We have the following morphism of species,
\[
\textbf{X} \otimes {{\boldsymbol{\Gamma}'}^\bC}_{\! \! \! \! \! \! \text{cp}}(E^\ast) \to \textbf{U}_{\text{Obs}},
\qquad
\tH_i \otimes \ga \mapsto 
\bigg(  \Phi \mapsto   \int_{x\in \cX} \big \la \ga(x), \Phi(x) \big \ra     \text{dvol}_\cX  \bigg)
.\]
This is like a system of products for $\textbf{X}$, however ${{\boldsymbol{\Gamma}'}^\bC}_{\! \! \! \! \! \! \text{cp}}(E^\ast)$ does not factorize \textcolor{blue}{(\refeqq{fact})}, and so cannot be written in the form $\textbf{E}_V$. It follows from \cite[Lemma 2.15]{bar15} that the colimit (as defined in \cite[Remark 15.7]{aguiar2010monoidal}) of the species which is the image of this morphism is the space of linear observables $\text{LinObs}$. The currying of this map is given by
\[
\textbf{X}\to \cH \big ( {{\boldsymbol{\Gamma}'}^\bC}_{\! \! \! \! \! \! \text{cp}}(E^\ast) ,  \textbf{U}_{\text{Obs}}\big  ),
\qquad 
\tH_i\mapsto \boldsymbol{\Phi}_i=\boldsymbol{\Phi}    \]
where
\[
\boldsymbol{\Phi}(\ga):=\bigg( \Phi \mapsto  \int_{x\in \cX}\big \la \ga(x), \Phi(x) \big \ra     \text{dvol}_\cX   \bigg)
.\]
If we restrict $\boldsymbol{\Phi}$ to bump functions $b\in \Gamma_{\text{cp}}(E^\ast)\otimes_{\bR} \bC$, also called `smearing functions', then one might call the linear map
\[
\boldsymbol{\Phi}: \Gamma_{\text{cp}}(E^\ast)\otimes_{\bR} \bC \to   \text{Obs}
,\qquad
  b\mapsto \boldsymbol{\Phi}(b)
\] 
an `observable-valued distribution', and this is sometimes referred to as `the (smeared) field'. The field observable $\boldsymbol{\Phi}(x)$ is recovered by evaluating $\boldsymbol{\Phi}$ on the Dirac delta function $\delta_x$ localized at $x$. One views $b$ as the smearing of a Dirac delta function, hence smearing functions and smeared field. 

We extend the smeared field by replacing $\textbf{X}$ with $\textbf{E}$ to define the following morphism of species,
\[
\textbf{E} \otimes {{\boldsymbol{\Gamma}'}^\bC}_{\! \! \! \! \! \! \text{cp}}(E^\ast) \to \textbf{U}_{\text{Obs}}
,\qquad
\tH_I\otimes \ga_I 
\mapsto
\bigg(  \Phi \mapsto   \int_{\cX^I}
\big \la 
\ga_I(x_{i_1}, \dots, x_{i_n}),
\Phi(x_{i_1})\dots \Phi(x_{i_n}) 
\big \ra     \text{dvol}_{\cX^I}  \bigg)  
.\]
This is like a system of products for $\textbf{E}$, but again without factorization. The colimit of the species which is the image of this morphism is the vector space of \emph{polynomial observables}, as defined in e.g. \cite[Definition 7.13]{perturbative_quantum_field_theory}, denoted
\[
\text{PolyObs}\subset \text{Obs}
.\]
(Alternatively, if we restrict the limit of this map $\mathscr{S}({{\boldsymbol{\Gamma}'}^\bC}_{\! \! \! \! \! \! \text{cp}}(E^\ast)) \to \text{Obs}[[\formj]]$ to finite series and set $\formj=1$, then we recover \cite[Definition 1.2.1]{dutsch2019perturbative}.) The space of \emph{microcausal polynomial observables} $\mathcal{F}$ is the subspace 
\[
\mathcal{F} \subset \text{PolyObs}
\] 
consisting of those polynomial observables which satisfy a certain microlocal-theoretic condition called \emph{microcausality}, see \cite[Definition 1.2.1 (ii)]{dutsch2019perturbative}. Following \cite[Definition 1.3.4]{dutsch2019perturbative}, the space of \emph{local observables}
\[
\mathcal{F}_{\text{loc}}\subset \text{Obs}
\] 
consists of those observables obtained by integrating a polynomial with real coefficients in the field and its derivatives (`field polynomials') against a bump function $b\in \Gamma_{\text{cp}}(E^\ast)\otimes_\bR \bC$. Importantly, we have a natural inclusion
\[
\mathcal{F}_{\text{loc}} \hookrightarrow \mathcal{F}
,\qquad
\ssA \mapsto\  :\ssA: 
.\] 
Let $\mathcal{F}_{\text{loc}}[[\hbar]]$ and $\mathcal{F}[[\hbar]]$ denote the spaces of formal power series in $\hbar$ with coefficients in $\mathcal{F}_{\text{loc}}$ and $\mathcal{F}$ respectively, and let $\mathcal{F}((\hbar))$ denote the space of Laurent series in $\hbar$ with coefficients in $\mathcal{F}$.

Applying Moyal deformation quantization with formal Planck's constant $\hbar$, $\mathcal{F}[[\hbar]]$ is a formal power series $\ast$-algebra, called the (abstract, off-shell) \emph{Wick algebra}, with multiplication the Moyal star product \cite[Definition 2.1.1]{dutsch2019perturbative} defined with respect to the Wightman propagator $\Delta_{\text{H}}$ for the Klein-Gordan field \cite[Section 2.2]{dutsch2019perturbative}, 
\[
\mathcal{F}[[\hbar]] \otimes \mathcal{F}[[\hbar]] \to \mathcal{F}[[\hbar]]
,\qquad
\emph{\textsf{O}}_1 \otimes \emph{\textsf{O}}_2\\
\mapsto
\emph{\textsf{O}}_1\star_{\text{H}}\emph{\textsf{O}}_2
.\] 
We may form the algebra in species $\textbf{U}_{\mathcal{F}[[\hbar]]}$, or, allowing negative powers of $\hbar$, $\textbf{U}_{\mathcal{F}((\hbar))}$. 


\section{Time-Ordered Products and S-Matrix Schemes} \label{sec:Time-Ordered Products}


For $\ssA\in \mathcal{F}_{\text{loc}}[[\hbar]]$, let $\text{supp}(\ssA)$ denote the spacetime support of $\ssA$. Given a composition $G$ of $I$, we say that $\ssA_I\in \textbf{E}_{\mathcal{F}_{\text{loc}}[[\hbar]]}[I]$ \emph{respects} $G$ if
\[
\text{supp}({\ssA_{i_1}})\vee\! \! \wedge\ \text{supp}(\ssA_{i_2}) 
\qquad \quad \text{for all}\quad 
(i_1,i_2)\quad \text{such that}\quad  G|_{\{i_1, i_2\}}= (i_1,i_2) 
.\footnote{\ $G|_{\{i_1, i_2\}}= (i_1,i_2)$ means that $i_1$ and $i_2$ are in different lumps, with the lump containing $i_1$ appearing to the left of the lump containing $i_2$}\]  
Consider a system of $\text{T}$-products (as defined in \autoref{sec:T-Products, Generalized T-Products, and Generalized R-Products}) of the form
\[
\text{T}: \textbf{E}^\ast_+ \otimes \textbf{E}_{\mathcal{F}_{\text{loc}}[[\hbar]]}\to 
\textbf{U}_{\mathcal{F}((\hbar))},
\qquad
\tH_{(I)}\otimes \ssA_I\mapsto \text{T}_I(\tH_{(I)}\otimes \ssA_I)=\text{T}_I(\ssA_I)
.\]



\noindent Since $\Sig$ is the free algebra on $\textbf{E}^\ast_+$, we have the unique extension to a system of generalized \hbox{$\text{T}$-products}
\[
\text{T}: \Sig \otimes \textbf{E}_{\mathcal{F}_{\text{loc}}[[\hbar]]}\to 
\textbf{U}_{\mathcal{F}((\hbar))},
\qquad
\text{T}_I(\tH_{F}\otimes \ssA_I):=\text{T}_{S_1}(\ssA_{S_1})\star_{\text{H}} \dots \star_{\text{H}}  \text{T}_{S_k}(\ssA_{S_k}) 
.\]
Then:
\begin{enumerate}[label={\arabic*.}]
\item 
(perturbation) we say that $\text{T}$ satisfies \emph{perturbation} if the singleton components $\text{T}_{i}$ are isomorphic to the inclusion $\mathcal{F}_{\text{loc}}[[\hbar]]\hookrightarrow \mathcal{F}((\hbar))$, that is
\[   
{\text{T}}_{i}( \ssA  )=  \,   :\! \ssA \! :       
\]
\item
(causal factorization) we say that $\text{T}$ satisfies \emph{causal factorization} if for all compositions $(S,T)$ of $I$ with two lumps, if $\ssA_I \in \textbf{E}_{\mathcal{F}_{\text{loc}}[[\hbar]]}[I]$ respects $(S,T)$\footnote{\ explicitly, $\text{supp}({\ssA_{i_1}})\vee\! \! \wedge\ \text{supp}(\ssA_{i_2})$ for all $i_1\in S$ and $i_2\in T$} then
\begin{equation}\label{eq:causalfac}
\text{T}_I(\tH_{(I)}\otimes\ssA_I)
=\text{T}_I(\tH_{(S,T)}\otimes\ssA_I).\footnote{\ or equivalently $\text{T}_I(\ssA_I)=\text{T}_S(\ssA_S)\star_{\text{H}} \text{T}_T(\ssA_T)$} 
\end{equation}
\end{enumerate}
Let a (fully normalized) \emph{system of time-ordered products} be a system of $\text{T}$-products which satisfies perturbation and causal factorization. The corresponding unique extension of $\text{T}$ to $\Sig$ is called the associated \emph{system of generalized time-ordered products}. After currying
\[     
\Sig \to \cH(  \textbf{E}_{\mathcal{F}_{\text{loc}}[[\hbar]]} ,  \textbf{U}_{\mathcal{F}((\hbar))} ), 
\qquad     
\tH_{F}\mapsto  \text{T}(S_1)\dots \text{T}(S_k)   
\]
the linear maps
\[
\text{T}(S_1)\dots \text{T}(S_k):  \mathcal{F}_{\text{loc}}[[\hbar]]^{\otimes I}  
\to
\mathcal{F}((\hbar))
,\qquad
\ssA_I\mapsto \text{T}_I(\tH_{F}\otimes \ssA_I)
\]
are called \emph{generalized time-ordered products}. The linear maps $\text{T}(I)$ are called \emph{time-ordered products}. After fixing a field polynomial, so that each $\ssA_{i_j}$ of $\ssA_I$ is determined by a bump function $b_{i_j}$, they are usually presented in generalized function notation as follows,
\[
\text{T}_I (\ssA_{i_1}  \otimes \cdots \otimes \ssA_{i_n} )
= 
\int_{\cX^I}   \text{T}(x_{i_1}, \dots , x_{i_n})  b_{i_1}( x_{i_1} ) \dots  b_{i_n}(x_{i_1}) 
dx_{i_1} \dots dx_{i_n}
\]
where $(x_{i_1}, \dots , x_{i_n}) \mapsto \text{T}(x_{i_1}, \dots , x_{i_n})$ is an `operator-valued' generalized function. See e.g. \cite[Section 1.2]{ep73roleofloc}.

Given compositions $F=(S_1, \dots , S_{k})$ and $G=(U_1,\dots, U_{l})$ of $I$, let
\[  
\tH_{F}\triangleright \tH_{G}  : =\tH_{(  U_1\cap S_1, \dots, U_{l}\cap S_1, \dots \dots, U_{1}\cap S_{k}, \dots ,   U_{l}\cap S_{k}        )_+}  
.\]
This is called the \emph{Tits product}, going back to Tits \cite{Tits74}. See \cite[Section 13]{aguiar2013hopf} for more on the structure of the Tits product, where it is shown it is given by the action of $\Sig$ on itself by Hopf powers. See also \cite[Section 1.4.6]{brown08} for the context of other Coxeter systems and Dynkin types.


\begin{prop}\label{prob:causalfac}
Let 
\[
\text{T}: \textbf{E}^\ast_+ \otimes \textbf{E}_{\mathcal{F}_{\text{loc}}[[\hbar]]}\to 
\textbf{U}_{\mathcal{F}((\hbar))}
\] 
be a system of $\text{T}$-products which satisfies causal factorization. Given a composition \hbox{$G=(U_1,\dots,U_k)$} of $I$, and $\ssA_I \in \textbf{E}_{\mathcal{F}_{\text{loc}}[[\hbar]]}[I]$ which respects $G$, then 
\[ 
\text{T}_I( \mathtt{a} \otimes\ssA_I)
= 
\text{T}_I( \mathtt{a} \triangleright \tH_{G} \otimes\ssA_I)
\qquad  \text{for all} \quad 
\mathtt{a}\in \Sig[I]     
.\]
\end{prop}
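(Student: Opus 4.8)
The plan is to use linearity to reduce to $\mathtt{a}=\tH_F$ for a single composition $F=(S_1,\dots,S_k)$ (both sides are $\bC$-linear in $\mathtt{a}$, the Tits product $\mathtt{a}\mapsto \mathtt{a}\triangleright \tH_G$ is linear, and the $\tH$-basis spans $\Sig[I]$), and then to peel off the lumps of $F$ one at a time using the homomorphism property of $\text{T}$, reducing everything to iterated two-lump causal factorization. First I would record the combinatorial identity that
\[
\tH_F \triangleright \tH_G = \tH_{(G|_{S_1})(G|_{S_2})\cdots(G|_{S_k})},
\]
i.e.\ $\tH_F\triangleright \tH_G$ refines each lump $S_j$ of $F$ internally according to the ordering of $G$. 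This is immediate from the displayed formula for $\triangleright$, since the nonempty sets among $U_1\cap S_j,\dots,U_l\cap S_j$ are precisely the lumps of the restriction $G|_{S_j}$. Since $\text{T}$ is a homomorphism (generalized $\text{T}$-products factor through concatenation of compositions), both sides split as Hadamard products over the lumps of $F$:
\[
\text{T}_I(\tH_F \otimes \ssA_I) = \text{T}_{S_1}(\ssA_{S_1}) \star_{\text{H}} \cdots \star_{\text{H}} \text{T}_{S_k}(\ssA_{S_k}),
\]
\[
\text{T}_I(\tH_F \triangleright \tH_G \otimes \ssA_I) = \text{T}_{S_1}(\tH_{G|_{S_1}} \otimes \ssA_{S_1}) \star_{\text{H}} \cdots \star_{\text{H}} \text{T}_{S_k}(\tH_{G|_{S_k}} \otimes \ssA_{S_k}).
\]

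It therefore suffices to prove, for each lump $S_j$, the equality $\text{T}_{S_j}(\tH_{(S_j)} \otimes \ssA_{S_j}) = \text{T}_{S_j}(\tH_{G|_{S_j}} \otimes \ssA_{S_j})$, where $\ssA_{S_j}$ respects $G|_{S_j}$ because $\ssA_I$ respects $G$. This reduces the proposition to the following key lemma, which I would isolate and prove separately: for a finite set $J$ and a composition $H=(V_1,\dots,V_m)$ of $J$, if $\ssA_J$ respects $H$ then $\text{T}_J(\tH_{(J)} \otimes \ssA_J) = \text{T}_J(\tH_H \otimes \ssA_J)$.

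I would prove the key lemma by induction on the number of lumps $m$. The case $m=1$ is trivial. For $m\geq 2$, since $\ssA_J$ respects $H$, for every $i_1\in V_1$ and $i_2\in V_2\sqcup\cdots\sqcup V_m$ we have $H|_{\{i_1,i_2\}}=(i_1,i_2)$, so the causal-separation hypothesis holds for the two-lump coarsening $(V_1, V_2\sqcup\cdots\sqcup V_m)$. Causal factorization \eqref{eq:causalfac} then gives
\[
\text{T}_J(\tH_{(J)} \otimes \ssA_J) = \text{T}_{V_1}(\ssA_{V_1}) \star_{\text{H}} \text{T}_{V_2\sqcup\cdots\sqcup V_m}(\ssA_{V_2\sqcup\cdots\sqcup V_m}),
\]
and the inductive hypothesis applied to the restriction $\ssA_{V_2\sqcup\cdots\sqcup V_m}$, which respects $(V_2,\dots,V_m)$, expands the second factor into $\text{T}_{V_2}(\ssA_{V_2}) \star_{\text{H}} \cdots \star_{\text{H}} \text{T}_{V_m}(\ssA_{V_m})$. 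Reassembling via the homomorphism property yields $\text{T}_J(\tH_H \otimes \ssA_J)$, closing the induction.

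The argument is essentially bookkeeping once these two reductions are in place; the one point requiring care is the propagation of the support/causal-separation hypothesis — both to the coarsening $(V_1, V_2\sqcup\cdots\sqcup V_m)$ at each inductive step and to the sub-lump restrictions $G|_{S_j}$ — which I would verify directly from the definition of \emph{respects} in terms of the induced two-element restrictions $G|_{\{i_1,i_2\}}$, using that passing to a restriction or a coarsening never reverses the relative order of a pair. I expect this verification, rather than any deeper structural fact, to be the main (and only modest) obstacle.
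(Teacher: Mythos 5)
Your proposal is correct and follows essentially the same route as the paper's proof: the paper likewise reduces by linearity to the $\tH$-basis, notes that $\tH_F\triangleright\tH_G$ refines each lump of $F$ by its intersections with $G$, and disposes of the one-lump case by ``repeated applications of causal factorization'' --- exactly the induction you spell out as your key lemma. The only (cosmetic) slip is calling the factorization over lumps a splitting into ``Hadamard products'': those are Moyal star products $\star_{\text{H}}$ in $\mathcal{F}((\hbar))$ (the subscript refers to the Wightman/Hadamard propagator), not the Hadamard product of species.
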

\begin{proof}
We have
\[
\text{T}_I(\tH_{G}\otimes\ssA_I)
=
\underbrace{\text{T}_{U_1}(\ssA_{U_1})\star_{\text{H}} \dots \star_{\text{H}} \text{T}_{U_k}(\ssA_{U_k})
=
\text{T}_I(\ssA_I)}_{\text{by repeated applications of causal factorization}}
.\]
Observe that the action $\tH_F \mapsto \tH_F \triangleright \tH_{G}$, for $F\in \Sigma[I]$, replaces the lumps of $F$ with their intersections with $G$. But we just saw that $\text{T}_I(\ssA_I)=\text{T}_I(\tH_{G}\otimes\ssA_I)$, and so it follows that
\[
\text{T}_I( \tH_F \otimes\ssA_I)
= 
\text{T}_I( \tH_F \triangleright \tH_{G} \otimes\ssA_I)
.\]
Since the claim is true for the $\tH$-basis, it is true for all $\mathtt{a}\in \Sig[I]$. 
\end{proof}

\begin{cor}\label{cor:cor}
If $\mathtt{a} \triangleright \tH_{G} =0$, then 
\[
\text{T}_I(\mathtt{a} \otimes\ssA_I)=0
\]
for all $\ssA_I\in \textbf{E}_{\mathcal{F}_{\text{loc}}[[\hbar]]}[I]$ which respect $G$. 
\end{cor}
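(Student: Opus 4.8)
The plan is to read this off directly from \autoref{prob:causalfac}; the corollary is precisely the special case in which the Tits product lands at the zero vector. First I would record the structural fact that, for a fixed decoration $\ssA_I$, the assignment
\[
\Sig[I] \to \mathcal{F}((\hbar)), \qquad \mathtt{a} \mapsto \text{T}_I(\mathtt{a} \otimes \ssA_I)
\]
is $\bC$-linear. This is immediate: $\text{T}$ is a morphism of species, so each component $\text{T}_I$ is a linear map, and here we are merely freezing the second tensor factor.

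Next, since $\ssA_I$ respects $G$ by hypothesis, \autoref{prob:causalfac} applies verbatim and yields
\[
\text{T}_I(\mathtt{a} \otimes \ssA_I) = \text{T}_I(\mathtt{a} \triangleright \tH_G \otimes \ssA_I)
\]
for the element $\mathtt{a} \in \Sig[I]$ under consideration. Substituting the hypothesis $\mathtt{a} \triangleright \tH_G = 0$ into the right-hand side and invoking the linearity just noted, the right-hand side collapses to $\text{T}_I(0 \otimes \ssA_I) = 0$, which is exactly the assertion.

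There is no genuine obstacle here, as the entire content is already packaged in \autoref{prob:causalfac}; the only point worth making explicit is the linearity step, since $\mathtt{a} \triangleright \tH_G$ is a priori a single (possibly nontrivial) element of $\Sig[I]$, and the conclusion requires precisely that the zero element be sent to the zero observable. I would therefore keep the proof to the two displayed equalities above, flagging linearity of $\text{T}_I(-\otimes \ssA_I)$ as the sole ingredient beyond the preceding proposition.
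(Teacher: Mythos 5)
Your argument is correct and is exactly the intended one: the paper states this corollary without proof as an immediate consequence of \autoref{prob:causalfac}, relying on precisely the two steps you give (applying the proposition to $\mathtt{a}$, then using linearity of $\text{T}_I(-\otimes \ssA_I)$ to send $0\in\Sig[I]$ to $0$). Your explicit flagging of the linearity step is a fair point of care, but there is no divergence from the paper's reasoning.
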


The restriction of $\text{T}$ to the primitive part Lie algebra is called the associated \emph{system of generalized retarded products},
\[
\text{R}:\Zie\otimes\textbf{E}_{\mathcal{F}_{\text{loc}}[[\hbar]]} \to \textbf{U}_{\mathcal{F}((\hbar))}  
.\]
The image of the Dynkin elements $\mathtt{D}_\cS$ under the currying of $\text{R}$ are the \emph{generalized retarded products} $\text{R}_\cS$, see e.g. \cite[Equation 79]{ep73roleofloc}. It follows from \autoref{cor:cor} and the structure of Dynkin elements under the Tits product that generalized retarded products have nice support properties. This is described in \cite{epstein1976general}.
 
Given a system of generalized time-ordered products
\[
\text{T}: \Sig \otimes \textbf{E}_{\mathcal{F}_{\text{loc}}[[\hbar]]}\to 
\textbf{U}_{\mathcal{F}((\hbar))}
\]
the $\text{T}$-exponential $\mathcal{S}=\mathcal{S}_{\mathtt{G}(1/\text{i}\hbar)}$ (defined in \textcolor{blue}{(\refeqq{eq:tsexp})}) for the group-like series 
\[
\mathtt{G}(1/\text{i}\hbar): \textbf{E} \to \Sig
,\qquad
\tH_{I} \mapsto   \dfrac{1}{\text{i}\hbar}\tH_{(I)}
\]
is called the associated perturbative \emph{S-matrix scheme}. Thus, $\mathcal{S}$ is the function 
\[
\mathcal{S}: \mathcal{F}_{\text{loc}}[[\hbar]] \to \mathcal{F}((\hbar))[[\formj]] 
,\qquad        
\ssA\mapsto  \mathcal{S}(\formj\! \ssA)  
:=
\sum^\infty_{n=0} \bigg(\dfrac{1}{\text{i}\hbar}\bigg)^{\! n}\dfrac{\formj^n}{n!} \text{T}_n(\ssA^n)     
.\]



\section{Interactions}    \label{sec:Interactions}

Given a choice of adiabatically switched \emph{interaction} $\ssS_{\text{int}} \in \mathcal{F}_{\text{loc}}[[\hbar]]$, and a system of fully normalized generalized \hbox{time-ordered} products
\[
\text{T}: \Sig \otimes \textbf{E}_{\mathcal{F}_{\text{loc}}[[\hbar]]}\to 
\textbf{U}_{\mathcal{F}((\hbar))}
,\]
we have the new system of interacting generalized time-ordered products which is obtained by the construction of \autoref{sec:Perturbation of T-Products by Steinmann Arrows},
\[ 
\wt{\text{T}}
:
\Sig\otimes\textbf{E}_{\mathcal{F}_{\text{loc}}[[\hbar]]} 
\to 
\textbf{U}_{\mathcal{F}((\hbar))[[\formg]]}
.\]
The associated \emph{generating function scheme} $\mathcal{Z}_{\formg \ssS_{\text{int}}}$ for interacting field observables, and more generally for time-ordered products of interacting field observables, is the new $\text{T}$-exponential for the group-like series $\mathtt{G}(1/\text{i}\hbar)$, denoted $\mathcal{V}_{\formg \ssS_{\text{int}}}$ in \autoref{sec:Perturbation of T-Products by Steinmann Arrows}. Thus, $\mathcal{Z}_{\formg \ssS_{\text{int}}}$ is the function
\[
\mathcal{Z}_{\formg \ssS_{\text{int}}}
: 
\mathcal{F}_{\text{loc}}[[\hbar]] \to \mathcal{F}((\hbar))[[\formg,\! \formj]] 
,\qquad
\ssA \mapsto \mathcal{Z}_{\formg \ssS_{\text{int}}}(\formj\! \ssA)
\]
where
\[
\mathcal{Z}_{\formg \ssS_{\text{int}}}(\formj\! \ssA)
\! :=\!
\sum_{n=0}^\infty \!  \bigg(\dfrac{1}{\text{i}\hbar}\bigg)^{\! \!  n}\! \dfrac{\formj^n}{n!} \wt{\text{T}}_n(\ssA_n)
=\! 
\sum_{n=0}^\infty \sum_{r=0}^\infty \!  
\bigg(\dfrac{1}{\text{i}\hbar}\bigg)^{\! \!  r+n}\! 
\dfrac{\formg^{r} \formj^n}{r!\, n!}\, \!   \text{R}_{r;n} (\ssS_{\text{int}}^{\, r} ; \ssA^n)    
=
\mathcal{S}^{-1}( \formg \ssS_{\text{int}})\star_{\text{H}} \mathcal{S}(\formg \ssS_{\text{int}} +\formj\! \ssA )
.\]
Then 
\[
\ssA_{\text{int}}:
=
\wt{\text{T}}_i(\ssA)
=
\sum_{r=0}^\infty 
\bigg(\dfrac{1}{\text{i}\hbar}\bigg)^{\! r} 
\dfrac{\formg^{r}}{r!} \text{R}_{r+1} (\ssS^{\, r}_{\text{int}}; \ssA)  
\in \mathcal{F}((\hbar))[[\formg]]
\] 
is the \emph{local interacting field observable} of $\ssA$. Bogoliubov's formula \textcolor{blue}{(\refeqq{eq:Bog})} now reads
\[
\ssA_{\text{int}}
=
\text{i} \hbar \, \dfrac{d}{d\formj}\Bigr|_{\formj=0}  \mathcal{Z}_{\formg \ssS_{\text{int}}}(\formj\! \ssA)
.\]
One views $\ssA_{\text{int}}$ as the deformation of the local observable $\ssA$ due to the interaction $\ssS_{\text{int}}$. One can show that $\wt{\text{T}}$ does indeed land in $\textbf{U}_{\mathcal{F}[[\hbar,\formg]]}$ \cite[Proposition 2 (ii)]{dutfred00}. The perturbative interacting quantum field theory then has a classical limit \cite{collini2016fedosov}, \cite{MR4109798}. 



\section{Scattering Amplitudes}\label{sec:scatterung}

We finish with a translation of a standard result in pAQFT (see \cite[Example 15.12]{perturbative_quantum_field_theory}) into our notation, which relates S-matrix schemes as presented in \autoref{sec:Time-Ordered Products} to S-matrices used to compute scattering amplitudes, which are predictions of pAQFT that are tested with scattering experiments at particle accelerators. 

Following \cite[Definition 2.5.2]{dutsch2019perturbative}, the \emph{Hadamard vacuum state} $\la - \ra_0$ is the linear map given by
\[ 
\la - \ra_0
: 
\mathcal{F}[[\hbar,\formg]] \to \bC[[\hbar,\formg]]
,\qquad
\emph{\textsf{O}}\mapsto  \la  \emph{\textsf{O}}\,  \ra_0:= \emph{\textsf{O}}\, (\Phi=0)
.\]
Let $\ssS_{\text{int}} \in \mathcal{F}_{\text{loc}}[[\hbar]]$. We say that the Hadamard vacuum state $\la - \ra_0$ is \emph{stable} with respect to the interaction $\ssS_{\text{int}}$ if for all $\emph{\textsf{O}}\in \mathcal{F}[[\hbar,\formg]]$, we have
\begin{equation}\label{eq:vacstab}
\big\la\emph{\textsf{O}}\star_{\text{H}}\mathcal{S}(\formg \ssS_{\text{int}}) \big \ra_0
=
\big \la\emph{\textsf{O}}\, \big\ra_0
\big\la    \mathcal{S}(\formg \ssS_{\text{int}})   \big \ra_0    
\qquad \text{and} \qquad
\big\la    
\mathcal{S}^{-1}(\formg \ssS_{\text{int}})\star_{\text{H}} \emph{\textsf{O}}  \,  
\big\ra_0
=
\dfrac{1}{\big \la \mathcal{S}(\formg \ssS_{\text{int}})\big \ra_0 } \big \la \emph{\textsf{O}}\, \big \ra_0
.
\end{equation}
In situations where 
\[
\ssS_{\text{int}} \otimes \ssA_I\in \textbf{E}'_{\mathcal{F}_{\text{loc}}[[\hbar]]}[I]
\qquad \text{respects the composition} \qquad 
(S,\ast,T)
\] 
we can interpret free particles/wave packets labeled by $T$ coming in from the far past, interacting in a compact region according to the adiabatically switched interaction $\ssS_{\text{int}}$, and then emerging into the far future, labeled by $S$. For $\ssA_I\in \textbf{E}_{\mathcal{F}_{\text{loc}}[[\hbar]]}[I]$, let
\[\text{G}_I(\ssA_I):=\big\la 
\widetilde{\text{T}}(\ssA_I)  
\big\ra_0
.\]
If we fix the field polynomial of local observables to be $\text{P}(\Phi)=\Phi$, then $\ssA_I\mapsto \text{G}_I(\ssA_I)$ is the \hbox{time-ordered} $n$-point correlation function, or Green's function. They are usually presented in generalized function notation as follows,
\[
\text{G}_I (b_{i_1}  \otimes \cdots \otimes b_{i_n} )
= 
\int_{\cX^I} \Big\la   \text{T} \big ( \boldsymbol{\Phi}(x_{i_1}) \dots \boldsymbol{\Phi}(x_{i_n})\big ) \Big  \ra_0   b_{i_1}( x_{i_1} ) \dots  b_{i_n}(x_{i_n}) 
dx_{i_1} \dots dx_{i_n}
.\]
Note that to obtain the realistic Green's functions, we still have to take the adiabatic limit. 


\begin{prop} \label{prop:scattering}
If the Hadamard vacuum state $\la-\ra_0$ is stable with respect to $\ssS_{\text{int}} \in \mathcal{F}_{\text{loc}}[[\hbar]]$, and if $\ssS_{\text{int}} \otimes \ssA_I\in \textbf{E}'_{ \mathcal{F}_{\text{loc}[[\hbar]]}}[I]$ respects the composition $(S, \ast, T)$, then
\[  
\text{G}_I(\ssA_I)=  
\dfrac{1}
{\Big\la
\mathcal{S}(\formg\ssS_{\text{int}}) 
\Big\ra_0 }   
\Big\la \text{T}_S(\ssA_S)\star_{\text{H}}  \mathcal{S}(\formg\ssS_{\text{int}})\star_{\text{H}}  
\text{T}_T(\ssA_T)
\Big\ra_0
.\footnote{\ the element $\mathcal{S}(\formg\ssS_{\text{int}})\in \mathcal{F}((\hbar))[[\formg]]$ is called the perturbative S\emph{-matrix}}\]
\end{prop}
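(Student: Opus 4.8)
The plan is to evaluate $\la \widetilde{\text{T}}(\ssA_I)\ra_0$ by unwinding the perturbative construction of $\widetilde{\text{T}}$ from \autoref{sec:Perturbation of T-Products by Steinmann Arrows}, factorizing the bare time-ordered products causally, and then invoking the vacuum-stability identities \textcolor{blue}{(\refeqq{eq:vacstab})} to extract the normalization $1/\la \mathcal{S}(\formg\ssS_{\text{int}})\ra_0$. First I would expand $\widetilde{\text{T}}(\ssA_I)=\widetilde{\text{T}}_I(\tH_{(I)}\otimes \ssA_I)=\sum_{r\geq 0}(1/\text{i}\hbar)^r\tfrac{\formg^r}{r!}\,\text{R}_{r;I}(\ssS_{\text{int}}^{\,r};\ssA_I)$, and substitute \textcolor{blue}{(\refeqq{eq:retardprod})} to write each retarded product as $\sum_{Y_1\sqcup Y_2=[r]}\overline{\text{T}}_{Y_1}(\ssS_{\text{int}}^{Y_1})\star_{\text{H}}\text{T}_{Y_2\sqcup I}(\ssS_{\text{int}}^{Y_2}\ssA_I)$.

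The crucial step is causal factorization of the forward factor. Because $\ssS_{\text{int}}\otimes\ssA_I$ respects $(S,\ast,T)$, and every interaction vertex indexed by $Y_2$ carries the same support as the distinguished vertex $\ast$, the decorated tuple $\ssS_{\text{int}}^{Y_2}\ssA_I$ respects the three-lump composition $(S,Y_2,T)$; applying \autoref{prob:causalfac} then gives $\text{T}_{Y_2\sqcup I}(\ssS_{\text{int}}^{Y_2}\ssA_I)=\text{T}_S(\ssA_S)\star_{\text{H}}\text{T}_{Y_2}(\ssS_{\text{int}}^{Y_2})\star_{\text{H}}\text{T}_T(\ssA_T)$. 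Substituting this and reindexing the double sum by the cardinalities $r_1=|Y_1|$, $r_2=|Y_2|$ (the $\binom{r}{r_1}$ choices of splitting are absorbed into $1/(r_1!\,r_2!)$ by permutation-invariance of the $\ssS_{\text{int}}$-slots), the reverse vertices resum on the far left to $\mathcal{S}^{-1}(\formg\ssS_{\text{int}})$ and the forward interaction vertices resum in the middle to $\mathcal{S}(\formg\ssS_{\text{int}})$, yielding
\[
\widetilde{\text{T}}_I(\ssA_I)=\mathcal{S}^{-1}(\formg\ssS_{\text{int}})\star_{\text{H}}\text{T}_S(\ssA_S)\star_{\text{H}}\mathcal{S}(\formg\ssS_{\text{int}})\star_{\text{H}}\text{T}_T(\ssA_T).
\]

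Finally I would apply $\la-\ra_0$ and use the second identity of \textcolor{blue}{(\refeqq{eq:vacstab})} with $\emph{\textsf{O}}=\text{T}_S(\ssA_S)\star_{\text{H}}\mathcal{S}(\formg\ssS_{\text{int}})\star_{\text{H}}\text{T}_T(\ssA_T)$ to pull out the factor $1/\la\mathcal{S}(\formg\ssS_{\text{int}})\ra_0$, giving exactly the asserted formula for $\text{G}_I(\ssA_I)$. I expect the main obstacle to be the causal-factorization step: one must verify that the three support separations encoded by $(S,\ast,T)$ genuinely transfer to all duplicated vertices in $(S,Y_2,T)$, while noting that no separation is required within the single lump $Y_2$ (where $\text{supp}(\ssS_{\text{int}})$ overlaps itself). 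The surrounding steps, namely the resummations into $\mathcal{S}^{\pm 1}$ and the reindexing of the sums, are formal power series identities in $\hbar,\formg,\formj$ and are routine once the factorization is secured.
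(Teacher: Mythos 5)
Your proposal is correct and follows essentially the same route as the paper's proof: expand $\widetilde{\text{T}}_I(\ssA_I)$ via the retarded products \textcolor{blue}{(\refeqq{eq:retardprod})}, causally factorize the forward factor as $\text{T}_S(\ssA_S)\star_{\text{H}}\text{T}_{[r_2]}(\ssS_{\text{int}}^{\,r_2})\star_{\text{H}}\text{T}_T(\ssA_T)$, resum into $\mathcal{S}^{-1}(\formg\ssS_{\text{int}})\star_{\text{H}}\text{T}_S(\ssA_S)\star_{\text{H}}\mathcal{S}(\formg\ssS_{\text{int}})\star_{\text{H}}\text{T}_T(\ssA_T)$, and conclude by the vacuum-stability identities \textcolor{blue}{(\refeqq{eq:vacstab})}. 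Your explicit verification that $\ssS_{\text{int}}^{Y_2}\ssA_I$ respects the three-lump composition $(S,Y_2,T)$, and your tracking of the $(1/\text{i}\hbar)^r$ factors needed for the resummation into $\mathcal{S}^{\pm 1}$, merely make explicit what the paper's proof leaves implicit.
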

\begin{proof}
We have
\begin{align*}
\text{G}_I(\ssA_I) 
&=
\big\la 
\widetilde{\text{T}}(\ssA_I)  
\big\ra_0 \\[6pt] 
&= 
\bigg\la 
\sum_{r=0}^\infty \dfrac{\formg^{r}}{r!} 
\text{R}_{r;I}( \ssS^{\, r}_{\text{int}} ; \ssA_{I} )
\bigg\ra_0  \\[6pt] 
&= 
\bigg\la 
\sum_{r=0}^\infty \sum_{r_1 + r_2 =r}   \dfrac{\formg^{r}}{r_1! \, r_2!} 
\overline{\text{T}}_{[r_1] \sqcup \emptyset}(\ssS^{\, r_1}_{\text{int}}) \star_{\text{H}}  \text{T}_{[r_2] \sqcup I}(\ssS^{\, r_2}_{\text{int}} \ssA_{I}) 
\bigg\ra_0.
\end{align*}
To obtain the final line, we expanded the retarded products according to \textcolor{blue}{(\refeqq{eq:retardprod})}. Then, by causal factorization \textcolor{blue}{(\refeqq{eq:causalfac})}, we have
\[
\text{T}_{[r_2]\sqcup I}(\ssS^{\, r_2}_{\text{int}} \ssA_{I})
=
\text{T}_{S}(\ssA_{S})
\star_{\text{H}}
\text{T}_{[r_2]\sqcup  \emptyset}(\ssS^{\, r_2}_{\text{int}})
\star_{\text{H}}
\text{T}_{T}(\ssA_{T})
.\]
Therefore
\begin{align*}
\text{G}_I(\ssA_I) 
&=
\bigg\la 
\sum_{r=0}^\infty \sum_{r_1 + r_2 =r}   \dfrac{\formg^{r}}{r_1! \, r_2!} 
\overline{\text{T}}_{[r_1]\sqcup \emptyset}(\ssS^{\, r_1}_{\text{int}}) \star_{\text{H}} 
\text{T}_{S}(\ssA_{S})
\star_{\text{H}}
\text{T}_{[r_2] \sqcup \emptyset}(\ssS^{\, r_2}_{\text{int}})
\star_{\text{H}}
\text{T}_{T}(\ssA_{T})
\bigg\ra_0. \\[6pt]
&=
\bigg\la 
\sum_{r=0}^\infty \dfrac{\formg^{r}}{r!} 
\overline{\text{T}}_{[r]\sqcup \emptyset}(\ssS^{\, r}_{\text{int}}) \star_{\text{H}} 
\text{T}_{S}(\ssA_{S})
\star_{\text{H}}
\sum_{r=0}^\infty \dfrac{\formg^{r}}{r!} 
\text{T}_{[r]\sqcup \emptyset}(\ssS^{\, r}_{\text{int}})
\star_{\text{H}}
\text{T}_{T}(\ssA_{T})
\bigg\ra_0. \\[6pt]
&= 
\Big \la  
\mathcal{S}^{-1}(\formg \ssS_{\text{int}}) 
\star_{\text{H}} 
\text{T}_S(\ssA_S)
\star_{\text{H}}
\mathcal{S}(\formg \ssS_{\text{int}}) 
\star_{\text{H}}
\text{T}_T(\ssA_T)
\Big \ra_0  \\[6pt] 
&=
\dfrac{1}
{  \Big \la \mathcal{S}(\formg\ssS_{\text{int}}) \Big  \ra_0 }   
\Big\la \text{T}_S(\ssA_S)\star_{\text{H}}  \mathcal{S}(\formg\ssS_{\text{int}})\star_{\text{H}}  
\text{T}_T(\ssA_T)
\Big\ra_0.
\end{align*}
For the final step, we used vacuum stability \textcolor{blue}{(\refeqq{eq:vacstab})}. 
\end{proof}

\bibliographystyle{alpha}

\begin{thebibliography}{CHDD{\etalchar{+}}20}
	
	\bibitem[AB08]{brown08}
	Peter Abramenko and Kenneth~S. Brown.
	\newblock {\em Buildings}, volume 248 of {\em Graduate Texts in Mathematics}.
	\newblock Springer, New York, 2008.
	\newblock Theory and applications.
	
	\bibitem[Abd04]{MR2036353}
	Abdelmalek Abdesselam.
	\newblock Feynman diagrams in algebraic combinatorics.
	\newblock {\em S\'{e}m. Lothar. Combin.}, 49:Art. B49c, 45, 2002/04.
	
	\bibitem[AM10]{aguiar2010monoidal}
	Marcelo Aguiar and Swapneel Mahajan.
	\newblock {\em Monoidal functors, species and {H}opf algebras}, volume~29 of
	{\em CRM Monograph Series}.
	\newblock American Mathematical Society, Providence, RI, 2010.
	\newblock With forewords by Kenneth Brown, Stephen Chase and Andr\'{e} Joyal.
	
	\bibitem[AM13]{aguiar2013hopf}
	Marcelo Aguiar and Swapneel Mahajan.
	\newblock Hopf monoids in the category of species.
	\newblock In {\em Hopf algebras and tensor categories}, volume 585 of {\em
		Contemp. Math.}, pages 17--124. Amer. Math. Soc., Providence, RI, 2013.
	
	\bibitem[AM17]{aguiar2017topics}
	Marcelo Aguiar and Swapneel Mahajan.
	\newblock {\em Topics in hyperplane arrangements}, volume 226 of {\em
		Mathematical Surveys and Monographs}.
	\newblock American Mathematical Society, Providence, RI, 2017.
	
	\bibitem[AM20]{aguiar2020bimonoids}
	Marcelo Aguiar and Swapneel Mahajan.
	\newblock {\em Bimonoids for Hyperplane Arrangements}, volume 173.
	\newblock Cambridge University Press, 2020.
	
	\bibitem[Ara61]{Huz1}
	Huzihiro Araki.
	\newblock Generalized retarded functions and analytic function in momentum
	space in quantum field theory.
	\newblock {\em Journal of Mathematical Physics}, 2(2):163--177, 1961.
	
	\bibitem[Bar78]{barratt1978twisted}
	M.~G. Barratt.
	\newblock Twisted {L}ie algebras.
	\newblock In {\em Geometric applications of homotopy theory ({P}roc. {C}onf.,
		{E}vanston, {I}ll., 1977), {II}}, volume 658 of {\em Lecture Notes in Math.},
	pages 9--15. Springer, Berlin, 1978.
	
	\bibitem[B{\"a}r15]{bar15}
	Christian B{\"a}r.
	\newblock Green-hyperbolic operators on globally hyperbolic spacetimes.
	\newblock {\em Comm. Math. Phys.}, 333(3):1585--1615, 2015.
	
	\bibitem[BC19]{borges2019generalized}
	Francisco Borges and Freddy Cachazo.
	\newblock Generalized planar {F}eynman diagrams: collections.
	\newblock {\em arXiv preprint arXiv:1910.10674}, 2019.
	
	\bibitem[BDF09]{dutfred09}
	R.~Brunetti, M.~D\"{u}tsch, and K.~Fredenhagen.
	\newblock Perturbative algebraic quantum field theory and the renormalization
	groups.
	\newblock {\em Adv. Theor. Math. Phys.}, 13(5):1541--1599, 2009.
	
	\bibitem[BF00]{klaus2000micro}
	Romeo Brunetti and Klaus Fredenhagen.
	\newblock Microlocal analysis and interacting quantum field theories:
	renormalization on physical backgrounds.
	\newblock {\em Comm. Math. Phys.}, 208(3):623--661, 2000.
	
	\bibitem[Bjo15]{MR3467341}
	Anders Bjorner.
	\newblock Positive sum systems.
	\newblock In {\em Combinatorial methods in topology and algebra}, volume~12 of
	{\em Springer INdAM Ser.}, pages 157--171. Springer, Cham, 2015.
	
	\bibitem[BK05]{Kreimer05}
	Christoph Bergbauer and Dirk Kreimer.
	\newblock The {H}opf algebra of rooted trees in {E}pstein-{G}laser
	renormalization.
	\newblock {\em Ann. Henri Poincar\'{e}}, 6(2):343--367, 2005.
	
	\bibitem[BL75]{bros}
	J.~Bros and M.~Lassalle.
	\newblock Analyticity properties and many-particle structure in general quantum
	field theory. {II}. {O}ne-particle irreducible {$n$}-point functions.
	\newblock {\em Comm. Math. Phys.}, 43(3):279--309, 1975.
	
	\bibitem[BLL98]{bergeron1998combinatorial}
	F.~Bergeron, G.~Labelle, and P.~Leroux.
	\newblock {\em Combinatorial species and tree-like structures}, volume~67 of
	{\em Encyclopedia of Mathematics and its Applications}.
	\newblock Cambridge University Press, Cambridge, 1998.
	\newblock Translated from the 1994 French original by Margaret Readdy, with a
	foreword by Gian-Carlo Rota.
	
	\bibitem[BMM{\etalchar{+}}12]{billera2012maximal}
	L.J. Billera, J.~Tatch Moore, C.~Dufort Moraites, Y.~Wang, and K.~Williams.
	\newblock Maximal unbalanced families.
	\newblock {\em arXiv preprint arXiv:1209.2309}, 2012.
	
	\bibitem[Bor11]{Borcherds10}
	Richard~E. Borcherds.
	\newblock Renormalization and quantum field theory.
	\newblock {\em Algebra Number Theory}, 5(5):627--658, 2011.
	
	\bibitem[BP99]{MR1731815}
	Alexander Barvinok and James~E. Pommersheim.
	\newblock An algorithmic theory of lattice points in polyhedra.
	\newblock In {\em New perspectives in algebraic combinatorics ({B}erkeley,
		{CA}, 1996--97)}, volume~38 of {\em Math. Sci. Res. Inst. Publ.}, pages
	91--147. Cambridge Univ. Press, Cambridge, 1999.
	
	\bibitem[Bro09]{Brouder10}
	Christian Brouder.
	\newblock Quantum field theory meets {H}opf algebra.
	\newblock {\em Math. Nachr.}, 282(12):1664--1690, 2009.
	
	\bibitem[Bro17]{MR3713351}
	Francis Brown.
	\newblock Feynman amplitudes, coaction principle, and cosmic {G}alois group.
	\newblock {\em Commun. Number Theory Phys.}, 11(3):453--556, 2017.
	
	\bibitem[BS59]{Bogoliubov59}
	N.~N. Bogoliubov and D.~V. Shirkov.
	\newblock {\em Introduction to the theory of quantized fields}.
	\newblock Authorized English edition. Revised and enlarged by the authors.
	Translated from the Russian by G. M. Volkoff. Interscience Monographs in
	Physics and Astronomy, Vol. III. Interscience Publishers, Inc., New York;
	Interscience Publishers Ltd., London, 1959.
	
	\bibitem[CGUZ19]{cachazo2019planar}
	Freddy Cachazo, Alfredo Guevara, Bruno Umbert, and Yong Zhang.
	\newblock Planar matrices and arrays of {F}eynman diagrams.
	\newblock {\em arXiv preprint arXiv:1912.09422}, 2019.
	
	\bibitem[CHDD{\etalchar{+}}19]{caron2019cosmic}
	Simon Caron-Huot, Lance~J Dixon, Falko Dulat, Matt Von~Hippel, Andrew~J McLeod,
	and Georgios Papathanasiou.
	\newblock The cosmic {G}alois group and extended {S}teinmann relations for
	planar {N}=4 {SYM} amplitudes.
	\newblock {\em Journal of High Energy Physics}, 2019(9):61, 2019.
	
	\bibitem[CHDD{\etalchar{+}}20]{Caron-Huot:2020bkp}
	Simon Caron-Huot, Lance~J. Dixon, James~M. Drummond, Falko Dulat, Jack Foster,
	\"Omer G\"urdo\u{g}an, Matt von Hippel, Andrew~J. McLeod, and Georgios
	Papathanasiou.
	\newblock {The Steinmann Cluster Bootstrap for $N=4$ Super Yang-Mills
		Amplitudes}.
	\newblock {\em PoS}, CORFU2019:003, 2020.
	
	\bibitem[CJ70]{MR272301}
	John~T. Cannon and Arthur~M. Jaffe.
	\newblock Lorentz covariance of the {$\lambda (\varphi^{4})_{2}$} quantum field
	theory.
	\newblock {\em Comm. Math. Phys.}, 17:261--321, 1970.
	
	\bibitem[Col16]{collini2016fedosov}
	Giovanni Collini.
	\newblock Fedosov quantization and perturbative quantum field theory.
	\newblock {\em arXiv preprint arXiv:1603.09626}, 2016.
	
	\bibitem[DF01]{dutfred00}
	M.~D\"{u}tsch and K.~Fredenhagen.
	\newblock Algebraic quantum field theory, perturbation theory, and the loop
	expansion.
	\newblock {\em Comm. Math. Phys.}, 219(1):5--30, 2001.
	
	\bibitem[DF04]{dutfredretard04}
	Michael D\"{u}tsch and Klaus Fredenhagen.
	\newblock Causal perturbation theory in terms of retarded products, and a proof
	of the action {W}ard identity.
	\newblock {\em Rev. Math. Phys.}, 16(10):1291--1348, 2004.
	
	\bibitem[DFG18]{drummond2018cluster}
	James Drummond, Jack Foster, and {\"O}mer G{\"u}rdo{\u{g}}an.
	\newblock Cluster adjacency properties of scattering amplitudes in {N}=4
	supersymmetric {Y}ang-{M}ills theory.
	\newblock {\em Physical review letters}, 120(16):161601, 2018.
	
	\bibitem[DFKR14]{FredHopf14}
	Michael D\"{u}tsch, Klaus Fredenhagen, Kai~Johannes Keller, and Katarzyna
	Rejzner.
	\newblock Dimensional regularization in position space and a forest formula for
	{E}pstein-{G}laser renormalization.
	\newblock {\em J. Math. Phys.}, 55(12):122303, 37, 2014.
	
	\bibitem[D{\"u}t12]{dut12}
	Michael D{\"u}tsch.
	\newblock Connection between the renormalization groups of
	{S}t\"{u}ckelberg-{P}etermann and {W}ilson.
	\newblock {\em Confluentes Math.}, 4(1):1240001, 16, 2012.
	
	\bibitem[D{\"u}t19]{dutsch2019perturbative}
	Michael D{\"u}tsch.
	\newblock {\em From classical field theory to perturbative quantum field
		theory}, volume~74 of {\em Progress in Mathematical Physics}.
	\newblock Birkh\"{a}user/Springer, Cham, 2019.
	\newblock With a foreword by Klaus Fredenhagen.
	
	\bibitem[Dys52]{Dyson52}
	F.~J. Dyson.
	\newblock Divergence of perturbation theory in quantum electrodynamics.
	\newblock {\em Phys. Rev. (2)}, 85:631--632, 1952.
	
	\bibitem[Ear19]{early2019planar}
	Nick Early.
	\newblock Planar kinematic invariants, matroid subdivisions and generalized
	{F}eynman diagrams.
	\newblock {\em arXiv preprint arXiv:1912.13513}, 2019.
	
	\bibitem[EG73]{ep73roleofloc}
	H.~Epstein and V.~Glaser.
	\newblock The role of locality in perturbation theory.
	\newblock {\em Ann. Inst. H. Poincar\'{e} Sect. A (N.S.)}, 19:211--295 (1974),
	1973.
	
	\bibitem[EGS75]{epstein1976general}
	H.~Epstein, V.~Glaser, and R.~Stora.
	\newblock {General properties of the n-point functions in local quantum field
		theory}.
	\newblock In {\em {Institute on Structural Analysis of Multiparticle Collision
			Amplitudes in Relativistic Quantum Theory Les Houches, France, June 3-28,
			1975}}, pages 5--93, 1975.
	
	\bibitem[Eps16]{epstein2016}
	Henri Epstein.
	\newblock Trees.
	\newblock {\em Nuclear Phys. B}, 912:151--171, 2016.
	
	\bibitem[Far11]{MR2862982}
	William~G. Faris.
	\newblock Combinatorial species and {F}eynman diagrams.
	\newblock {\em S\'{e}m. Lothar. Combin.}, 61A:Art. B61An, 37, 2009/11.
	
	\bibitem[Fau01]{Fauser01}
	Bertfried Fauser.
	\newblock On the {H}opf algebraic origin of {W}ick normal ordering.
	\newblock {\em J. Phys. A}, 34(1):105--115, 2001.
	
	\bibitem[GBL00]{Bondia00}
	Jose~M Gracia-Bondia and Serge Lazzarini.
	\newblock Connes-{K}reimer-{E}pstein-{G}laser renormalization.
	\newblock {\em arXiv preprint hep-th/0006106v2}, 2000.
	
	\bibitem[GJ68]{MR247845}
	James Glimm and Arthur Jaffe.
	\newblock A {$\lambda \phi^{4}$} quantum field without cutoffs. {I}.
	\newblock {\em Phys. Rev. (2)}, 176:1945--1951, 1968.
	
	\bibitem[GJS74]{MR363256}
	James Glimm, Arthur Jaffe, and Thomas Spencer.
	\newblock The {W}ightman axioms and particle structure in the {${\mathscr
			P}(\phi)_{2}$} quantum field model.
	\newblock {\em Ann. of Math. (2)}, 100:585--632, 1974.
	
	\bibitem[GK18]{MR3753672}
	John Gough and Joachim Kupsch.
	\newblock {\em Quantum fields and processes}, volume 171 of {\em Cambridge
		Studies in Advanced Mathematics}.
	\newblock Cambridge University Press, Cambridge, 2018.
	\newblock A combinatorial approach.
	
	\bibitem[GLZ57]{GLZ1957}
	V.~Glaser, H.~Lehmann, and W.~Zimmermann.
	\newblock Field operators and retarded functions.
	\newblock {\em Nuovo Cimento (10)}, 6:1122--1128, 1957.
	
	\bibitem[HK64]{haagkas64}
	Rudolf Haag and Daniel Kastler.
	\newblock An algebraic approach to quantum field theory.
	\newblock {\em J. Mathematical Phys.}, 5:848--861, 1964.
	
	\bibitem[Hol08]{MR2455327}
	Stefan Hollands.
	\newblock Renormalized quantum {Y}ang-{M}ills fields in curved spacetime.
	\newblock {\em Rev. Math. Phys.}, 20(9):1033--1172, 2008.
	
	\bibitem[HR20]{MR4109798}
	Eli Hawkins and Kasia Rejzner.
	\newblock The star product in interacting quantum field theory.
	\newblock {\em Lett. Math. Phys.}, 110(6):1257--1313, 2020.
	
	\bibitem[IS78]{Slavnov78}
	V.~A. Ilyin and D.~A. Slavnov.
	\newblock Algebras of observables in the {$S$}-matrix approach.
	\newblock {\em Teoret. Mat. Fiz.}, 36(1):32--41, 1978.
	
	\bibitem[Joy81]{joyal1981theorie}
	Andr\'{e} Joyal.
	\newblock Une th\'{e}orie combinatoire des s\'{e}ries formelles.
	\newblock {\em Adv. in Math.}, 42(1):1--82, 1981.
	
	\bibitem[Joy86]{joyal1986foncteurs}
	Andr\'{e} Joyal.
	\newblock Foncteurs analytiques et esp\`eces de structures.
	\newblock In {\em Combinatoire \'{e}num\'{e}rative ({M}ontreal, {Q}ue.,
		1985/{Q}uebec, {Q}ue., 1985)}, volume 1234 of {\em Lecture Notes in Math.},
	pages 126--159. Springer, Berlin, 1986.
	
	\bibitem[LM00]{losevmanin}
	A.~Losev and Y.~Manin.
	\newblock New moduli spaces of pointed curves and pencils of flat connections.
	\newblock volume~48, pages 443--472. 2000.
	\newblock Dedicated to William Fulton on the occasion of his 60th birthday.
	
	\bibitem[LNO19]{lno2019}
	Zhengwei Liu, William Norledge, and Adrian Ocneanu.
	\newblock The adjoint braid arrangement as a combinatorial {L}ie algebra via
	the {S}teinmann relations.
	\newblock {\em arXiv preprint arXiv:1901.03243}, 2019.
	
	\bibitem[NO19]{norledge2019hopf}
	William Norledge and Adrian Ocneanu.
	\newblock Hopf monoids, permutohedral cones, and generalized retarded
	functions.
	\newblock {\em to appear in Annals IHP D - Comb., Phys. and their
		Interactions}, 2019.
	
	\bibitem[Nor20]{norledge2020species}
	William Norledge.
	\newblock Species-theoretic foundations of perturbative quantum field theory.
	\newblock {\em arXiv preprint arXiv:2009.09969}, 2020.
	
	\bibitem[Ocn18]{oc17}
	Adrian Ocneanu.
	\newblock {H}igher {R}epresentation {T}heory.
	\newblock {P}hysics 267, Fall 2017 Harvard {C}ourse, and supplementary
	materials and presentations, 2017-2018.
	\newblock The course is available on YouTube, and supplementary materials are
	in preparation for publication.
	
	\bibitem[Pet13]{MR3134040}
	Dan Petersen.
	\newblock The operad structure of admissible {$G$}-covers.
	\newblock {\em Algebra Number Theory}, 7(8):1953--1975, 2013.
	
	\bibitem[Pin00]{MR1845168}
	G.~Pinter.
	\newblock The {H}opf algebra structure of {C}onnes and {K}reimer in
	{E}pstein-{G}laser renormalization.
	\newblock {\em Lett. Math. Phys.}, 54(3):227--233, 2000.
	
	\bibitem[Pol58]{polk58}
	J.~C. Polkinghorne.
	\newblock Generalized retarded products.
	\newblock {\em Proc. Roy. Soc. London Ser. A}, 247:557--561, 1958.
	
	\bibitem[PS16]{stora16}
	G.~Popineau and R.~Stora.
	\newblock A pedagogical remark on the main theorem of perturbative
	renormalization theory.
	\newblock {\em Nuclear Phys. B}, 912:70--78, 2016.
	
	\bibitem[Rej16]{rejzner2016pQFT}
	Kasia Rejzner.
	\newblock {\em Perturbative algebraic quantum field theory}.
	\newblock Mathematical Physics Studies. Springer, Cham, 2016.
	\newblock An introduction for mathematicians.
	
	\bibitem[Rue61]{Ruelle}
	D.~Ruelle.
	\newblock Connection between {W}ightman functions and {G}reen functions in
	{$p$}-space.
	\newblock {\em Nuovo Cimento (10)}, 19:356--376, 1961.
	
	\bibitem[Sch93]{Bill93}
	William~R. Schmitt.
	\newblock Hopf algebras of combinatorial structures.
	\newblock {\em Canad. J. Math.}, 45(2):412--428, 1993.
	
	\bibitem[Sch95]{MR1359058}
	G.~Scharf.
	\newblock {\em Finite quantum electrodynamics}.
	\newblock Texts and Monographs in Physics. Springer-Verlag, Berlin, second
	edition, 1995.
	\newblock The causal approach.
	
	\bibitem[Sch18]{schultka2018toric}
	Konrad Schultka.
	\newblock Toric geometry and regularization of {F}eynman integrals.
	\newblock {\em arXiv preprint arXiv:1806.01086}, 2018.
	
	\bibitem[Sch20]{perturbative_quantum_field_theory}
	Urs Schreiber.
	\newblock {G}eometry of physics - perturbative quantum field theory.
	\newblock
	\url{https://ncatlab.org/nlab/show/geometry+of+physics+--+perturbative+quantum+field+theory},
	September 2020.
	\newblock Revision 197.
	
	\bibitem[Ste60a]{steinmann1960zusammenhang}
	Othmar Steinmann.
	\newblock \"{U}ber den {Z}usammenhang zwischen den {W}ightmanfunktionen und den
	retardierten {K}ommutatoren.
	\newblock {\em Helv. Phys. Acta}, 33:257--298, 1960.
	
	\bibitem[Ste60b]{steinmann1960}
	Othmar Steinmann.
	\newblock Wightman-{F}unktionen und retardierte {K}ommutatoren. {II}.
	\newblock {\em Helv. Phys. Acta}, 33:347--362, 1960.
	
	\bibitem[Ste71]{steinbook71}
	Othmar Steinmann.
	\newblock {\em Perturbation expansions in axiomatic field theory}.
	\newblock Springer-Verlag, Berlin-New York, 1971.
	\newblock Lecture Notes in Physics, Vol. 11.
	
	\bibitem[Sto93a]{stora1993differential}
	R~Stora.
	\newblock Differential algebras in {L}agrangean field theory.
	\newblock {\em ETH-Z{\"u}rich Lectures}, 1993.
	
	\bibitem[Sto93b]{stover1993equivalence}
	Christopher~R. Stover.
	\newblock The equivalence of certain categories of twisted {L}ie and {H}opf
	algebras over a commutative ring.
	\newblock {\em J. Pure Appl. Algebra}, 86(3):289--326, 1993.
	
	\bibitem[SZ11]{shadrin2011group}
	Sergey Shadrin and Dimitri Zvonkine.
	\newblock A group action on {L}osev-{M}anin cohomological field theories.
	\newblock In {\em Annales de l'Institut Fourier}, volume~61, pages 2719--2743,
	2011.
	
	\bibitem[Tit74]{Tits74}
	Jacques Tits.
	\newblock {\em Buildings of spherical type and finite {BN}-pairs}.
	\newblock Lecture Notes in Mathematics, Vol. 386. Springer-Verlag, Berlin-New
	York, 1974.
	
\end{thebibliography}
\newcommand{\etalchar}[1]{$^{#1}$}


\end{document}